\newif\ifquantumjournal
\quantumjournaltrue

\newif\ifsodasubmission
\sodasubmissionfalse

\ifquantumjournal
	\documentclass[a4paper,11pt,onecolumn]{quantumarticle}
	\pdfoutput = 1
\else
	\documentclass[11pt]{scrartcl}
\fi

\ifsodasubmission
	\usepackage[letterpaper, margin=1in]{geometry}
\else
	\usepackage[a4paper, left=2.5cm, right=2.5cm, top=2.5cm, bottom=2.5cm]{geometry}
	\setlength{\marginparwidth}{4.5cm}
\fi

\usepackage[utf8]{inputenc}
\usepackage{etex}

\pdfoutput=1

\usepackage[english]{babel}
\usepackage{wrapfig}

\usepackage{amsthm}
\usepackage{hyperref}
\usepackage{color}
\usepackage{mathtools}
\usepackage{amssymb}

\usepackage{amsfonts}

\usepackage{authblk}
\ifquantumjournal

\else
	
\fi

\usepackage{physics}
\usepackage{float}
\usepackage{enumitem}
\setlist{noitemsep} \usepackage{algcompatible}
\usepackage{subfig}
\usepackage{bm} 
\usepackage{qcircuit}
\usepackage{adjustbox}
\usepackage{multirow}
\usepackage[table,xcdraw]{xcolor}

\usepackage{tikz, graphics} \usetikzlibrary{arrows.meta, bending,     patterns,     shapes,positioning
               }

\usepackage [
  n,
  advantage,
  operators,
  sets,
  adversary,
  landau,
  probability, 
  notions,
  logic,
  ff,
  mm,
  primitives,
  events,
  complexity,
  asymptotics,
  keys
  ] {cryptocode}

\usepackage[strings]{underscore}
\usepackage{bm}
\usepackage{booktabs} 

\renewcommand{\cc}[1]{\complclass{#1}}
\newcommand{\Abort}{\mathsf{Abort}}

\newcommand{\BQP}{\mathsf{BQP}}

\newcommand{\X}{\mathsf{X}}
\newcommand{\Y}{\mathsf{Y}}
\newcommand{\Z}{\mathsf{Z}}
\newcommand{\Id}{\mathsf{I}}
\newcommand{\Ha}{\mathsf{H}}
\newcommand{\CZ}{\mathsf{CZ}}

\newcommand{\CNOT}{\mathsf{CNOT}}

\newcommand{\cptp}[1]{\mathsf{#1}}

\floatstyle{ruled}

\newfloat{resource}{htb!}{idf}
\floatname{resource}{Resource}

\newfloat{simulator}{htb!}{idf}
\floatname{simulator}{Simulator}

\newfloat{protocol}{htb!}{idf}
\floatname{protocol}{Protocol}

\newfloat{routine}{htb!}{idf}
\floatname{routine}{Routine}

\usepackage{todonotes}
\todostyle{holl}{color=green!30, size=\footnotesize}
\todostyle{holli}{color=green!30, inline, size=\footnotesize}
\todostyle{dl}{color=blue!30, size=\footnotesize}
\todostyle{dli}{color=blue!30, inline, size=\footnotesize}
\todostyle{lm}{color=red!30, size=\footnotesize}
\todostyle{lmi}{color=red!30, inline, size=\footnotesize}
\todostyle{tk}{color=yellow!30, size=\footnotesize}
\todostyle{tki}{color=yellow!30, inline, size=\footnotesize}
\setcounter{secnumdepth}{3}

\theoremstyle{plain}
\newtheorem{theorem}{Theorem}[section]
\newtheorem{lemma}[theorem]{Lemma}
\newtheorem{corollary}[theorem]{Corollary}

\theoremstyle{definition}
\newtheorem{definition}[theorem]{Definition}
\newtheorem{resourceenv}{Resource}
\newtheorem{protocolenv}{Protocol}
\newtheorem{simulatorenv}{Simulator}
 
\begin{document}

\title{Plugging Leaks in Fault-Tolerant Quantum Computation and Verification}
\date{}

\ifsodasubmission
	\author{}
\else
	\ifquantumjournal
		\author[1]{Theodoros Kapourniotis}
		\thanks{This work was partially developed while at the University of Warwick.}
	\else
		\author[1]{Theodoros Kapourniotis\thanks{This work was partially developed while at the University of Warwick.}}
	\fi
	\author[2]{Dominik Leichtle}
	\author[3]{Luka Music}
	\author[4]{Harold Ollivier}
	\affil[1]{National Quantum Computing Centre, Rutherford Appleton Laboratory, Harwell Campus, Didcot, Oxfordshire, OX11 0QX, United Kingdom}
	\affil[2]{School of Informatics, University of Edinburgh, 10 Crichton Street, Edinburgh EH8 9AB, United Kingdom}
	\affil[3]{Quandela, 7 Rue Léonard de Vinci, 91300 Massy, France}
	\affil[4]{Département d'Informatique, Ecole Normale Supérieure, 45 rue d'Ulm, 75005 Paris, France (Université PSL - CNRS - INRIA)}
\fi

\maketitle

\begin{abstract}
  With the advent of quantum cloud computing, the security of delegated quantum computation has become of utmost importance. While multiple statistically secure blind verification schemes in the prepare-and-send model have been proposed, none of them achieves full quantum fault-tolerance, a prerequisite for \emph{useful} verification on scalable quantum computers. In this paper, we present the first fault-tolerant blind verification scheme for universal quantum computations able to handle secret-dependent noise on the verifier's quantum device. Composable security of the proposed protocol is proven in the Abstract Cryptography framework.
  
  Our main tools are two novel distillation protocols that turn secret-dependent noise into secret-independent noise. The first one is run by the verifier and acts on its noisy gates, while the second and more complex one is run entirely on the prover's device and acts on states provided by the verifier. Both are required to overcome the leakage induced by secret-dependent noise. We use these protocols to prepare states in the $\X - \Y$-plane whose noise is overwhelmingly secret-independent, which then allows us to verify with exponential confidence arbitrary fault-tolerant $\BQP$ computations.
\end{abstract}

\newpage

\tableofcontents

\newpage

\section{Introduction}
\label{sec:intro}
The question of secure delegated quantum computation (SDQC) is a long standing research topic among quantum computer scientists
which attracts interest from several different perspectives.
Foundationally, the problem of quantum verification touches on the falsifiability of quantum mechanics, asking whether solutions to problems that can be found efficiently by quantum devices must also be efficiently verifiable.
In the language of complexity theory, this is asking the question for the power of the class \cc{QPIP} of problems admitting Quantum-Prover Interactive Proof systems.
Finally, the functionality of SDQC is of utmost practical importance, enabling clients of quantum computing services to guarantee the confidentiality and integrity of the delegated computations, making it a cornerstone task in the process of making quantum computing largely available to end-users.
In this way, it presents a fundamental building block in the quest of building the necessary trust alongside the quantum computation value-chain.

While solutions have been provided to securely delegate bounded-error quantum-polynomial-time (\cc{BQP}) computations, none has been provided yet with robustness against noise on the Verifier's side, information-theoretic security, efficiency, and scalability.

\subsection{The problem of robustness}
In spite of its practical motivation, this question has not led to solutions that could, at least in principle, be put to work with real-world systems.
The reason is that
most proposed protocols
implicitly assume that the Verifier is noise-free.
As quantum devices are inherently of an analogue nature, none of the current technology can live up to the standard of providing an entirely noise-free environment, nor can we expect to ever get sufficiently close to such a utopic situation for the assumption of noiselessness to be an accurate enough model of real-world implementations.
Consequently, any security analysis that assumes the absence of noise in any participating party opens up the possibility of side-channel attacks that take advantage of the shortcomings of this model as a description of real-world devices.

To address this issue, we are interested in \emph{robust} Secure Delegated Quantum Computation (SDQC) protocols that not only offer security against arbitrary adversaries, but also allow honest participants to perform the computation correctly with high probability in spite of implementation imperfections.
A first realization is that it is not possible to construct such a protocol with robustness against arbitrary kinds and levels of imperfections.
This is because such protocols are precisely designed to abort if the risk of obtaining an incorrect outcome is too high.
Therefore, highly correlated defects or very noisy operations will necessarily harm the success probability even of honest participants.
Said differently, very strong noise must inevitably resemble malicious behavior, and -- by secure protocols -- be recognized as such.
Our goal is thus to design protocols that are robust to a large number of well-characterized and physically relevant imperfections.
More precisely, we require that
\begin{enumerate}
\item if all parties are honest, the protocol performs the correct computation and the Verifier accepts even if imperfection affects all parties involved, up to a reasonable amount;
\item if the Prover is arbitrarily malicious, the protocol remains secure even if the Verifier's operations are imperfect, up to a reasonable amount.
\end{enumerate}

The challenge to overcome in the design of such protocols has been analyzed in~\cite{ABEM17interactive}.
In short, the only known way to construct robustness to noise in quantum computation is to perform the computation on logically encoded qubits in a fault-tolerant way.
If the Verifier only has single-qubit operations, all error correction needs to be performed by the Prover.
At the same time, to provide blindness and verifiability, the Verifier will send encrypted \emph{physical} qubits that appear totally mixed to a Prover ignorant of the encryption key.
To perform error-correction, the Verifier will thus needs to help the Prover albeit making sure that, in this process, no information about the encryption key is leaked so that security is not compromised.

\cite{ABEM17interactive} goes further by recognizing the difficulty of providing robust verification protocols even when allowing the Verifier to perform single \emph{logical} qubit operations. Indeed, using an error correcting code enlarges the attack surface for a malicious party by creating side-channels that can carry away crucial information about the secret key used by the Verifier. It is therefore necessary to precisely model how imperfections affect the devices of both the Prover and the Verifier to ensure that no such leaks occur.

\subsection{Modelling imperfect operations}
Imperfections on the Prover's devices are already taken into consideration in the security of SDQC protocols: since they are secure for arbitrary malicious Provers, they must also be secure against noisy Provers. On the other hand, imperfect Verifier devices may induce new opportunities for attacks which are not accounted for in existing protocols.

Two types of imperfections on the Verifier's device can be distinguished: (i) faults that could have equally happened on the Prover's device, and that are thus not a threat to security, but only to correctness, \emph{i.e.}, the success probability of the protocol; and (ii) leaks of parts of the Verifier's secret key, encompassing all errors and noise that are secret-dependent. The latter represents those imperfections that are dangerous to the security of SDQC protocols.
Unfortunately, our analysis reveals that standard fault-tolerant techniques applied in the context of these protocols intertwine correctness and security, making it impossible to separate faults from leaks. This fact was already noted in~\cite{ABEM17interactive}, but left as an open difficulty on the road to verification of fault-tolerant delegated quantum computations.

As a step towards the solution, we introduce \emph{Stochastically Compromised Preparations, Gates, and Measurements} which capture arbitrary combinations of stochastic leaks and faults in the imperfect building blocks of our protocols. This is done by defining physical quantum operations as parametrized sets of CPTP maps acting on a subset of physical qubits. When used, these operations can be compromised with a certain probability. This allows a malicious Prover to learn not only the set of parameters but also the specific value chosen by the Verifier, giving them the full definition of the CPTP map. The Prover can then replace the operation with any other quantum operation on the same qubits and a private register (see Resource~\ref{res:compromised-pgm-informal}). The interest of this definition is to generalize the concept of stochastic faults in fault-tolerance. When the parameters for all quantum operations are singletons, compromised operations are equivalent to faulty ones. In this case, the Prover knows the entire circuit, and once compromised locations are identified, they can apply arbitrary operations to the underlying registers.

\begin{resourceenv}{Stochastically Compromised Preparation, Gate, or Measurement (Informal)}\label{res:compromised-pgm-informal}
  \begin{algorithmic}[0]
  	\STATE \textbf{Parameters:} compromise probability $p_c$, set of CPTP maps $\{\cptp U(\lambda)\}_{\lambda \in \Lambda}$ parametrised by $\lambda \in \Lambda$.
    \STATE \textbf{User's Input:} parameter $\lambda \in \Lambda$, the quantum registers that are acted upon by the resource.
    \STATE \textbf{Eavesdropper's Input:} Flag $c \in \{0, 1\}$ indicating that the preparation, gate, or measurement is compromised.
    \STATE \textbf{Computation by the Resource:}
    \begin{itemize}
		\item Apply $\cptp U(\lambda)$ to $\rho$.
		\item If $c =1$, with probability $p_c$:
        \begin{itemize}
          \item Send $\lambda$ and the contents of the quantum register to the Eavesdropper;
          \item Receive from the Eavesdropper a quantum system and insert into its internal register.
        \end{itemize}
		\item Outputs the state in its internal register at the User's output interface.
    \end{itemize}
  \end{algorithmic}
\end{resourceenv}

Both the leaky-but-correct case and standard fault-tolerant computations can be recovered from the imperfections modelled by Resource~\ref{res:compromised-pgm-informal}.
Stochastically Leaky Remote State Preparations (see Resource~\ref{res:stoca}) correspond to Stochastic Compromised Preparation with a non-trivial parametrization set $\Lambda$ where no error is applied by the Eavesdropper.
The standard model of fault-tolerance can be obtained by assuming that the Eavesdropper always cheats. In this case there is no parametrization and the authorized leak $\Lambda$ reveals the full computation.

However, this Resource does not encompass all possible imperfections. In particular, we impose binary behavior for the leak: it either leaks the full description of the gate or nothing. We will see in Section~\ref{sec:plugging-rsp} that bounding away from zero the number of instances which do not leak, and therefore have perfect security, is a crucial requirement for the security of RSP. We leave as an open question how to deal with coherently leaky resources that always leaks at least a small amount of information about the chosen parameters $\lambda$.

\subsection{Main results}
We first consider that the noise does not need to be mitigated since it is kept at a per-qubit level that is low-enough for the computation to succeed with a sufficiently high probability. Then it is possible to take care of leaks via a protocol that is entirely performed by the Prover. The Verifier only needs to send $k$ potentially leaky qubits in the $\X-\Y$ plane for each vertex of the graph, and our gadget then exponentially suppresses the probability that the state of the qubit used in the vertex leaks. This is done by recombining the $k$ qubits into a single qubit in the , with the guarantee that the state of the final qubit is not leaked so long as at least one of the $k$ qubits did not leak. We combine this with an SDQC protocol which requires only $\X-\Y$ plane qubits and thus obtain an SDQC protocol which is resistant to side-channel attacks and leaks, in the low-noise regime. Importantly, in this case, the Verifier has no operational overhead compared to the protocol without leak-protection and the communication complexity only increases logarithmically with the final desired leak probability.

\begin{theorem}[Leak-free SDQC from Leaky Operations]
	A Secure Delegated Quantum Computation protocol can be constructed in the Abstract Cryptography framework from single-qubit Verifier operations which fully leak their secrets with probability $p_l$. The operations of the Verifier are the same as in the non-leaky protocol and the multiplicative communication overhead is logarithmic in the security parameter.
\end{theorem}

Our second main result is a threshold theorem for the secure delegation of fault-tolerant quantum computation.
In essence, it states that there exists a positive threshold for the local corruption probability of operations (which are modelled as Resource~\ref{res:scpg}), below which imperfect operations can be used to verify quantum computations of arbitrary size, while keeping both the accumulation of faults as well as the accumulation of leaks under control.
The proposed protocol is efficient in achieving exponentially low soundness errors, as well as scalable in the size of the target computation.
 
\begin{theorem}[Threshold Theorem for Secure Delegation of Fault-Tolerant Quantum Computation]
	There exists a universal threshold $p_0 > 0$ such that Secure Delegated Quantum Computation can be constructed in the Abstract Cryptography framework from imperfect operations with corruption probability below $p_0$ up to at most negligible security error.
\end{theorem}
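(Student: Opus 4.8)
The plan is to construct the final protocol as a composition, in the Abstract Cryptography framework, of three ingredients, and then to invoke the AC composition theorem. The core ingredient is an SDQC protocol for $\BQP$ in which the Verifier only ever prepares single qubits in the $\X-\Y$ plane and which already enjoys composable security with exponentially small soundness error when the Verifier is perfect (so that its imperfections, being attributable to a malicious Prover, can only harm correctness). Around this core we wrap the two distillation gadgets announced above: on the Prover's side, the state-recombination gadget that turns the $k$ potentially leaky $\X-\Y$-plane qubits sent per graph vertex into a single qubit whose state leaks only if all $k$ inputs leak; and on the Verifier's side, the gate-distillation protocol that converts its secret-dependent gate noise into secret-independent noise. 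Finally, the target $\BQP$ computation is first compiled into a fault-tolerant circuit, so that the residual secret-independent noise, once below the standard fault-tolerance threshold, is corrected entirely by the Prover.

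\paragraph{From secret-dependent to secret-independent noise.}
First I would fix the noise model: every imperfect preparation, gate, and measurement is a Stochastic Compromised operation, decomposed into an ideal part, a secret-\emph{independent} faulty part, and a secret-\emph{dependent} (leak) part, each occurring with probability bounded in terms of the local corruption probability $p$. I would then analyse the Verifier-side distillation, showing it maps gate noise of corruption probability $p$ to noise whose secret-dependent component has probability $O(p^2)$ --- and, upon iteration, any fixed power of $p$ --- at the cost of a constant-factor blow-up in the number of imperfect operations. Next I would analyse the Prover-side $k$-to-$1$ gadget: using the all-$k$-must-leak guarantee, the per-vertex leak probability contracts to $p^{\Theta(k)}$, so choosing $k = \Theta(\log(N/\varepsilon))$ for a size-$N$ computation and target error $\varepsilon$ makes the total leak negligible, while the Verifier gains no operational overhead and communication grows only logarithmically. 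After these two steps, the joint state of the Verifier's secrets and the qubits delivered to the core protocol is, up to negligible trace distance, what it would be if all imperfections were secret-independent; such imperfections are absorbed into the Prover's admissible behaviour, so the composable security of the core SDQC protocol carries over.

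\paragraph{Correctness and the value of the threshold.}
For correctness I would invoke fault-tolerance: compile the computation with a concatenated code (or any code family with a positive accuracy threshold), so that below the threshold the logical error rate is $\exp(-\Omega(L))$ in the concatenation level $L$; taking $L = \Theta(\log N)$ keeps the honest acceptance probability high with only polynomial overhead. The trap-based mechanism of the core protocol, amplified by parallel repetition and majority vote, then delivers soundness error $\exp(-\Omega(n))$ in the number $n$ of repetitions. The universal threshold $p_0$ emerges as the minimum of (i) the fault-tolerance accuracy threshold, rescaled to absorb the extra faulty operations introduced by the two distillation gadgets, (ii) the radius of convergence of the Verifier-side gate distillation, and (iii) the regime in which the Prover-side gadget's own imperfections still allow the per-vertex leak probability to contract; each is a fixed positive constant, hence $p_0 > 0$, and a single application of the AC composition theorem to the three layers yields SDQC with negligible security error.

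\paragraph{Main obstacle.}
The delicate point is that the distillation gadgets are themselves assembled from imperfect Verifier operations, so one cannot assume noiseless distillation: the argument must show that each gadget suppresses leaks strictly faster than its own imperfections reintroduce them --- a fixed-point/bootstrapping estimate rather than a one-shot bound --- while simultaneously certifying that the faults the gadgets generate remain inside the fault-tolerance budget of ingredient (i). Equally subtle is the AC reduction: one must exhibit a simulator that reproduces the residual secret-independent noise while interacting only with the ideal SDQC resource, so that ``post-distillation noise'' is provably, not merely intuitively, an admissible Prover deviation; and the three interlocking budgets --- fault accumulation, leak accumulation, and soundness amplification --- must all be balanced against the single parameter $p_0$.
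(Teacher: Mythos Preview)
Your high-level composition strategy (core dummyless SDQC + RSP gadget + AC composition) matches the paper's architecture, but the concrete plan diverges from the paper in a way that constitutes a real gap.

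\paragraph{The two gadgets are not combined in the paper.}
You propose wrapping the core protocol with \emph{both} distillation gadgets simultaneously. The paper does not do this: the Prover-side $k$-to-$1$ state-recombination gadget (Protocol~\ref{proto:state_prep}) is used \emph{only} in the unencoded regime of \S\ref{sec:issue-correctness}, where it yields the correctness--security trade-off, not the threshold theorem. The paper explicitly shows that this gadget accumulates secret-independent noise so that the probability of a clean output is $(1-p_{\mathit{noise}})^\kappa$, decaying exponentially in $\kappa$. Hence your step ``choosing $k=\Theta(\log(N/\varepsilon))$ makes the total leak negligible'' simultaneously drives correctness to zero once faults are present; the ``$p^{\Theta(k)}$ leak contraction'' and ``correctness below the FT threshold'' budgets cannot both be met by this gadget. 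The threshold theorem (Protocol~\ref{proto:ft-sdqc}) uses \emph{only} the Verifier-side construction.

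\paragraph{The Verifier-side mechanism is not the one you describe.}
Your description --- iterate a gate-distillation that squares the secret-dependent component at constant-factor cost --- is not what the paper does, and your ``Main obstacle'' paragraph correctly senses that such a bootstrapping argument would be delicate. The paper instead builds a \emph{fault-tolerant} RSP: the Verifier prepares a logical $\ket{+}$ and applies a logical $\Z(\theta)$ encoded in the concatenated $[[15,1,3]]$ Reed--Muller code. The crucial idea absent from your plan is the $\mathsf{Safe}\Z(\theta)$ compilation (Definition~\ref{def:safe-z-theta}): at level~$1$, each physical rotation is split as $\Z(\beta_i)\circ\Z(\alpha_i)$ with $\alpha_i$ uniformly random, so a single compromised location reveals only a random angle, not $\theta$. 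Combined with $\mathsf{SafeRec}$'s carrying \emph{two} trailing $\mathsf{EC}$'s, the paper proves that \emph{good} $k$-$\mathsf{exSafeRec}$'s are both \emph{accurate} and \emph{private} (Definition~\ref{def:cor-sec}), and then re-runs the Aharonov--Gottesman--Preskill threshold argument to obtain a doubly exponential bound $p_0(p_c/p_0)^{2^k}$ on the failure of accuracy \emph{and} privacy simultaneously (Theorem~\ref{thm:safe_threshold}). The final error is $\eta(N)+N|V|\cdot 4p_0(p_c/p_0)^{2^k}$, and the threshold $p_0$ is the \emph{single} pair-counting constant from the fault-tolerance proof, not a minimum over three separate regimes as in your sketch. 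In short: you are missing the split-compilation trick and the accurate-and-private threshold theorem that drive the whole construction; without them, neither of your two gadgets suffices for the stated result.
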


One such concrete construction is given by Protocol~\ref{proto:ft-sdqc}. The protocol's overhead is scaling logarithmically both in the target construction error, as well as in the size of the target computation.

\subsection{Conceptual contributions}
Our contribution to the problem of verification of quantum computation in the presence of noise starts with a full analysis of how security is affected by the presence of noise. More precisely, we identify two mechanisms through which secure protocols in the noiseless case become insecure. The first one is \emph{direct}: the noise on the Verifier's side will generally depend on the secret key used to encrypt the computation. This is because these secrets are used to perform classically controlled quantum operations for which the noise will depend on the control. For instance, most physical platforms will have different noise strength for $\Id$, $\Z$, $\cptp S$ or $\cptp T$ gates. So that if the keys are used to perform a classically controlled $\Z(\theta)$ rotation, as it is the case in many information theoretically secure protocols, the noise will effectively depend on the secret key thereby voiding the validity of current security proofs. The second one is \emph{indirect}: the noise on the Verifier's side forces us to perform some encoding into logical qubits in order to provide scalability. Yet, by doing so, the syndrome qubits are able to capture partial information about the operations that are being performed. The technical reason is that applying a transversal gate leaves the syndrome qubits invariant \emph{only} when they are in the all $\ket 0$ state. Otherwise, they are generally mapped to a state that depends on the operation that is being applied, thereby compromising secrecy as the whole quantum register is eventually accessible to the Prover who is tasked with performing the computation.

Using this analysis, we propose a model for the imperfections that is compatible with the stochastic fault model used to obtain standard fault-tolerance thresholds. The crucial difference though is that we explicitly leak the identity of the gate when it is affected by a fault. Such gates are then called \emph{stochastically compromised}.

Our solution to combat imperfections is then constructed in two broad steps by taking advantage of the composability of previous SDQC protocols (mostly~\cite{LMKO21verifying,KKLM22unifying,KKLM23asymmetric}). To this end, we start by recognizing that the robustness and security issues that we aim to address are linked to imperfections on the Verifier's side only. For the Prover's side, known protocols already provide security for arbitrary malicious adversaries ---~hence, against noisy ones~--- and robustness can always be dealt with by the Prover performing a standard fault-tolerant quantum computation. The immediate consequence is that we can focus on the problem of achieving robust and secure \emph{Remote State Preparation} (RSP) as it is only during the construction of this resource used by the SDQC protocol that the Verifier might manipulate quantum states. Once this is achieved, the second easy step is to recombine all the errors of the subprotocols with that of the main protocol. We apply this methodology to two subprotocols for constructing RSP resources, hence obtaining two protocols for SDQC with an imperfect Verifier.

Our first SDQC protocol deals with a simplified imperfection model where the operations on the Verifier's side are always perfect but stochastically leak the value of their classical (secret) control bits. This model is simpler but nonetheless corresponds to leakage in physical systems such as photonic Verifiers. The practical interest of our solution lies in the absence of physical overhead on the Verifier's side besides repeating the same sort of operations than if it was performing an insecure protocol. All of the overhead is taken care of by the possibly malicious Prover. This makes it an appealing setup for the near(er) term where very limited cheap Verifiers are envisioned. In addition, it also shows how to remove secret dependent stochastic noise which can be useful in other contexts. It does however impose a trade-off between the length of the computation and the security level that can be achieved as soon as the Verifier's actions are not perfect. The protocol for generating near-perfect states in the $\X-\Y$ plane form leaky ones is informally described in Protocol~\ref{proto:state_prep_informal}, assuming that a party called the Sender wants to send a state $\ket{+_theta} = \frac{1}{\sqrt{2}}(\ket{0} + e^{i\theta}\ket{1})$ to another called the Receiver.

\begin{protocolenv}[Plugged Rotated Remote State Preparation (Informal)]\label{proto:state_prep_informal}
\begin{algorithmic} [0]
    \STATE \begin{itemize}
    \item The Sender sends $N$ -- possibly leaky -- qubits $\ket{+_{\theta_j}}$ with the $\theta_j$ sampled randomly from $\{0, \pi/4, \ldots, 7\pi/4\}$.
    \item The Receiver applies a $\CNOT$ gate between each qubit $j \neq N$ (control) and qubit $N$ (target) and measures qubit $j$ in the computational basis, with measurement outcome $t_j$.
    \item At this stage, if both parties acted honestly, the Receiver is in possession of a qubit in the state $\ket{+_{\theta'}}$ with $\theta' = \theta_N + \sum_{j \in \{1, \ldots, N-1\}} (-1)^{t_j} \theta_j$. The Sender then sends the correction $\theta - \theta'$, which the Receiver uses as the angle for a $\Z$-axis rotation to recover the correct state $\ket{+_\theta}$.
    \end{itemize}
  \end{algorithmic}
\end{protocolenv}

Intuitively, the only information received during the protocol by the Receiver about the Sender's desired angle is the correction $\theta - \theta'$. There, each $\theta_i$ acts as a One-Time Pad for the angle $\theta$, meaning that the Receiver must know all the values $\theta_i$ to recover $\theta$. If $p_l$ is the probability of a single value $\theta_i$ leaking, then this happens with probability $p_l^N$, meaning that we have effectively exponentially suppressed the leak probability. We prove this in the composable Abstract Cryptography security framework, meaning that we can easily reuse this gadget in larger protocols while preserving its security guarantees.

We apply this state generation procedure to a SDQC protocol for classical inputs/outputs which requires $n$ repetitions of an MBQC computation over a graph of $|V|$ vertices. This protocol's security guarantee -- in particular the probability of accepting an incorrect computation is exponentially low in $n$ -- holds as long as all qubits sent by the Verifier are in the $\X-\Y$ plane with an angle hidden from the Prover. The leak probability of the overall protocol is $n|V|p_l^N$. 

Our second SDQC protocol deals with stochastically compromised gates. Because it needs to be robust, it must abide by the design principles of fault-tolerant quantum computation. We show that one can add stricter constraints to also obtain security. This is where the introduced modelling of the imperfection plays a crucial role as it allows to devise a way to wash away the information that could be stored in the syndrome before the logical qubits prepared by the Verifier are sent to the Prover. The challenge is of course to ensure that this is always the case in spite of the operations being possibly erroneous. 

More concretely, we again devise a scheme which the Verifier can use to prepare and send a qubit in the $\X-\Y$ plane to the Prover. These qubits are encoded in a concatenated Reed-Muller code and can be used by the Prover again to run the SDQC protocol mentioned earlier. The preparation phase consists of the creation of an encoded $\ket{+}$ state followed by a transversal rotation around the $\Z$ axis by an angle $\theta$. This step is tricky in this context because not only can the faulty $\Z$ rotation leak the value of $\theta$ but also because a fault in an earlier step might leak information as well. For example, an $\X$ error on a single qubit of the encoded $\ket{+}$ state becomes a $\Y$ error on the final state if $\theta = \pi/2$, but remains an $\X$ error if $\theta = 0$. This means that the adversary, which both controls what faults are applied and receives the final state can recover information about $\theta$ even if no direct leaks happened.

We therefore have to introduce a \emph{safe} transversal $\Z(\theta)$ simulation that can then be concatenated. This allows us to show that our construction satisfies two additional properties on top of the usual fault-tolerant requirements -- \emph{accuracy} and \emph{privacy} -- with doubly exponentially low failure probability with the level of concatenation. Accuracy simply captures that applying the secure version of gates should have the same effect as the original gate that the Verifier wanted to apply. Privacy captures the concern described above, meaning that the syndrome should not contain secret-dependent information. To satisfy the latter requirement, we split the application of each $\Z(\theta)$ gate into two gates $\Z(\alpha)\Z(\beta)$ such that $\alpha + \beta = \theta$. Again, the two values of $\alpha$ and $\beta$ for $\theta$ to be revealed. Together with a more involved error-correction step which tests, we prove that this ensures that the probability of leaking $\theta$ decreases for each level of concatenation by reproving a threshold theorem that can indeed be used to ensure these properties for arbitrary long but finite computations. 

Efficiency and scalability follow from the obtained theorem as for a fixed computation length, the protocol's error is negligible in the number of repetitions and the size of the single logical qubit registers, and for a fixed protocol error, an increase in the length of the computation induces only a polylogarithmic increase of the size of the single logical qubit registers.

\subsection{Prior work}
Secure delegation of quantum computation was first introduced in~\cite{C01secure,AS03blind} with a definition centred on blindness rather than verification. In 2004, the question of verification was posed by D.~Gottesman before being publicized in~\cite{A07scott}. Two approaches were then used to arrive at full verification: one based on quantum authentication schemes~\cite{ABE08interactive,ABEM17interactive}, and one relying on measurement based quantum computation~\cite{BFK09universal,FK17unconditionally}.

These initial works have set the stage for the following decade. Experiments were carried out using simplified settings~\cite{BKBF12demonstration,GRBM16demonstration,BFKW13experimental} while more protocols were introduced, extending the applicable setups as well as refining the techniques and improving on their performance. Most of these advances have been collected in the referenced review~\cite{GKK19verification}. In addition, other notions of security emerged such as flow-ambiguity~\cite{MDMF17flow-ambiguity}; quantum protocols served to improve the security of classical computing~\cite{DKK16quantum}.

Already within the earliest works~\cite{ABE08interactive,BFK09universal} the importance of noise robustness was recognized. It was however only through~\cite{ABEM17interactive} that the difficulty of the task was fully acknowledged and stated as an open question outside of a smaller circle of researchers working directly on the topic. The first work to provide a partial answer is~\cite{KD19nonadaptive} which brings verification and fault-tolerance together for specific non-adaptive architectures.

A different approach was initiated in~\cite{KLMO24verification} and ultimately led to the current work. There, the noise was restricted to be of constant strength per computation. Although not satisfactory from a practical perspective, it also managed to use composability to break the protocol into smaller blocks that could be studied and modified more easily. This led to a framework for producing modular SDQC protocols in~\cite{KKLM22unifying} giving flexibility to new ways to probe the honesty of the Prover, by taking an approach centred on error correction ---~hence benefiting from new theoretical tools. Finally, we also rely on a protocol that was introduced in~\cite{KKLM23asymmetric} for multi-party computation  with the specificity of only requiring the Verifier to prepare single qubit states in the equatorial plane of the Bloch sphere to yield verification.

Other works have been interested in robustness of delegated quantum computations, but without being fully satisfactory because of added assumptions that are difficult to ensure, or because they only partially solve the problem. In~\cite{GKW15robustness}, the proposed solution is in particular tied to the independence of the noise with respect to the secret parameters. In~\cite{GHK18simple}, verification is provided but blindness is not possible, while also requiring the errors to be uncorrelated between various rounds of the protocol.

The protocol introduced by~\cite{FH17verifiable} is in the Receive-and-Measure setting, in which the Prover prepares a large entangled state first and sends it qubit by qubit to the Verifier. The Prover prepares a state used for the Verifier's computation $N = 2k + 1$ times, of which $2k$ are used to test the Prover's honesty, while the remaining one is used for the Verifier's computation. Automatically, this implies that the security bound is at most an inverse polynomial of the number of runs. Indeed, the probability for the Prover to cheat in their protocol is $N^{-1/4}$. On the other hand, our protocol is secure with a negligible security error in the number of runs $N$ performed.

Furthermore, they state that they do not assume any model for the noise, and any noise applied by the Verifier can be seen as noise applied by the Prover on the resource state. They do specify that if the noise level is dependent on measurement bases chosen by the Verifier, it is possible to add more noise so that it becomes independent. However, it is not as simple. Indeed, the measurement bases chosen by the client are precisely their secret parameters for the tests and the computation. We will see that, in order to model correctly the noise which depend on these secret parameters, it is necessary to give to the Prover control over the type of noise applied, potentially leaking secret values altogether. This is precisely the pain-point that our construction is designed to mitigate.

Another protocol in the Receive-and-Measure setting is presented in~\cite{TFMI17fault}. There, the Verifier receives states and performs only physical $\X$ and $\Z$ measurements and classical post-processing on qubits encoded in a CSS code to prepare states in the $\X-\Y$ plane and $\Z$ eigenstates. We describe in Appendix~\ref{app:attack-tomoyuki} an attack on their state preparation protocol, in which we show that the Prover can influence the outcome of the final state by applying an $\X$ operation conditioned on the fact that the prepared state is in the logical computational basis. All proofs of verification assume that the deviation is independent of the secret parameters, meaning that this state preparation protocol cannot be combined with any existing SDQC protocol and remain secure. We refer to the appendix for more details. Finally, the same assumption of secret-independent measurement basis choice, as in~\cite{FH17verifiable}, is present and not addressed in this one.

\paragraph{Organisation of the paper.}
In \S~\ref{sec:preliminaries}, we present preliminaries on the Abstract Cryptography security framework and SDQC protocols in the MBQC model.
\S~\ref{sec:modeling} is devoted to motivating and constructing an appropriate model describing the imperfections of physical operations that we then use to describe noisy parties in our protocols.
\S~\ref{sec:plugging-rsp} describes and proves the security of two protocols for constructing Remote State Preparation (RSP) resources that can then be used to secure delegated computations in the presence of imperfect devices on the Verifier's side.
\S~\ref{sec:composing} uses these gadgets to derive two SDQC protocols and their security proofs.

\section{Preliminaries}
\label{sec:preliminaries}
\subsection{Notations}\label{sec:notation}

For any positive integer $n \in \NN^*$, $[n] := \{1, \ldots, n\}$ denotes the set of positive integers up to $n$. 
A function $\epsilon$ is \emph{negligible in $\eta$} if, for every polynomial $p$, for $\eta$ sufficiently large it holds that $\epsilon(\eta) < 1/p(\eta)$. 
The message $\Abort$ signals that a party aborted a protocol.We define the rotation operator around the $\Z$-axis of the Bloch sphere by an angle $\theta$ as $\Z(\theta) = \begin{pmatrix} 1 & 0 \\ 0 & e^{i\theta} \end{pmatrix}$ and $\ket{+_{\theta}} = \Z(\theta)\ket{+}$.
We denote by $\Theta$ the set of angles $\qty{\frac{k\pi}{4}, \ k = 0, \ldots, 7}$ .

\subsection{Abstract Cryptography Framework}
\label{subsec:AC}
The Abstract Cryptography (AC) security framework \cite{MR11abstract-cryptography,M12constructive-cryptography} used in this work follows the \emph{ideal/real simulation paradigm}. A protocol is considered secure if it is a good approximation of an ideal version called a \emph{resource}. Its main advantage is that any system that follows the structure defined by the framework is inherently composable, in the sense that if two protocols are secure separately, the framework guarantees at an abstract level that their sequential or parallel execution is also secure.  We refer the reader to \cite{DFPR14composable} for a more in-depth presentation, in particular regarding the framework's composability in the context of SDQC.

In this framework, the purpose of a secure protocol $\pi$ is, given a number of available resources $\mathcal{R}$, to construct a new resource -- written as $\pi \mathcal{R}$.  This new resource can be itself reused in a future protocol.  A resource~$\mathcal{R}$ is described as a sequence of CPTP maps with an internal state.  It has \emph{input and output interfaces} describing which party may exchange states with it.  It works by having each party send it a state (quantum or classical) at one of its input interfaces, applying the specified CPTP map after all input interfaces have been initialised and then outputting the resulting state at its output interfaces in a specified order. An interface is said to be \emph{filtered} if it is only accessible by a dishonest player. The actions of an honest player $i$ in a given protocol is also represented as a sequence of efficient CPTP maps $\pi_i$ -- called the \emph{converter} of party~$i$ -- acting on their internal and communication registers. We focus here on the two-party setting, in which case $\pi = (\pi_1, \pi_2)$.

In order to define the security of a protocol, we need to give a pseudo-metric on the space of resources.  We consider for that purpose a special type of converter called a \emph{distinguisher}, whose aim is to discriminate between two resources $\mathcal{R}_1$ and $\mathcal{R}_2$, each having the same number of input and output interfaces.  It prepares the input, interacts with one of the resources according to its own (possibly adaptive) strategy, and guesses which resource it interacted with by outputting a single bit.  Two resources are said to be indistinguishable if no distinguisher can guess correctly with good probability.

\begin{definition}[Statistical Indistinguishability of Resources]
  \label{def:ind-res}
  Let $\epsilon > 0$, and let $\mathcal{R}_1$ and $\mathcal{R}_2$ be two resources with same input and output interfaces.  The resources are \emph{$\epsilon$-statistically-indistinguishable} if, for all unbounded distinguishers $\mathcal{D}$, we have:
  \begin{equation}
    \label{eq:dist}
    \Bigl\lvert\Pr[b = 1 \mid b \leftarrow \mathcal{D}\mathcal{R}_1] - \Pr[b = 1 \mid b \leftarrow \mathcal{D}\mathcal{R}_2]\Bigr\rvert \leq \epsilon.
  \end{equation}
  We then write $\mathcal{R}_1 \underset{\epsilon}{\approx} \mathcal{R}_2$.
\end{definition}

The construction of a given resource $\mathcal{S}$ by the application of protocol $\pi$ to resource $\mathcal{R}$ can then be expressed as the indistinguishability between resources $\mathcal{S}$ and $\pi \mathcal{R}$.  More specifically, this captures the correctness of the protocol.  The security is captured by the fact that the resources remain indistinguishable if we allow some parties to deviate in the sense that they are no longer forced to use the converters defined in the protocol but can use any other CPTP maps instead.  This is done by removing the converters for those parties in Equation~\eqref{eq:dist} while keeping only $\pi_H = \prod_{i \in H} \pi_i$ where $H$ is the set of honest parties.  The security is formalised as follows in Definition \ref{def:ac-sec} in the case of two parties.

\begin{definition}[Construction of Resources]\label{def:ac-sec}
  Let $\epsilon > 0$. We say that a two-party protocol $\pi$ $\epsilon$-statistically-constructs resource $\mathcal{S}$ from resource $\mathcal{R}$ if:
\begin{enumerate}
\item It is correct: $\pi \mathcal{R} \underset{\epsilon}{\approx} \mathcal{S}$;
\item It is secure against malicious party $P_i$ for $i \in \{1, 2\}$: there exists a \emph{simulator} (converter) $\sigma_i$ such that $\pi_j\mathcal{R} \underset{\epsilon}{\approx} \mathcal{S} \sigma_i$, where $j \neq i$.
\end{enumerate}
\end{definition}

\subsection{Blind Delegation of Quantum Computations}
\label{subsec:prelims-ubqc}
The MBQC model of computation emerged from the gate teleportation principle. It was introduced in~\cite{RB01one} where it was shown that universal quantum computing can be implemented by first preparing a graphs state and later performing adaptive single qubit measurements on this state. This model has the same power as the usual gate-based model of quantum computations. The measurement calculus from~\cite{DKP07measurement-calculus} expresses the correspondence between the two models. We restrict ourselves to computations with classical inputs and outputs.

The Verifier's computation is defined by a \emph{measurement pattern} as follows.
\begin{definition}[Measurement Pattern]
  \label{def:pattern}
  A \emph{pattern} in the MBQC model is given by a graph $G = (V,E)$, input and output vertex sets $I$ and $O$, a flow $f$ which induces a partial ordering $\preceq$ of the vertices $V$, and a set of measurement angles $\{\phi(i)\}_{i\in V}$, where it is sufficient to have $\phi(i) \in \Theta$ for approximate universality~\cite{FK17unconditionally}.
\end{definition}

The flow is a function from non-output vertices $O^c$ to non-input vertices $I^c$ whose structure describes how measurement results influence future measurement bases. The flow-corrected measurement bases will be denoted $\phi'(i)$. The existence of a flow ensures that the computation is the same regardless of the measurement outcomes. Further details regarding the definition of the flow can be found in references~\cite{HEB04multiparty,DK06determinism}.

The computation is performed by creating the graph state associated to graph $G$ and then measuring each vertex $i$ according to the order $\preceq$ in the basis $\{\ket{+_{\phi'(i)}}, \ket{-_{\phi'(i)}}\}$. The output of the computation is given by the outcomes of measurements on the output vertices $O$, after corrections by the flow are applied.

If the Verifier prepares single qubits and uses quantum communication, it can delegate the computation of its choice to a quantum Prover blindly~\cite{BFK09universal}, meaning that the Prover does not learn anything about the computation besides the prepared graph $G$ and the order of measurements. To do so, the Verifier sends a state $\ket{+_{\theta(i)}}$ for each vertex $i$ in the graph, with $\theta(i)$ being chosen uniformly at random from the angle set $\Theta$. The Prover entangles these qubits according to the Verifier's desired graph. The Verifier then sends the angles $\delta(i) = \phi'(i) + \theta(i) + r(i)\pi$, for a random bit $r(i)$, in an order compatible with that of the flow of its computation. The Prover measures in the basis $\{\ket{+_{\delta(i)}}, \ket{-_{\delta(i)}}\}$ and returns the measurement outcome $b(i)$. The Verifier computes the correct measurement outcome $s(i) = b(i) \oplus r(i)$. Intuitively, the angle $\theta(i)$ perfectly hides $\phi'(i)$ and $r(i)$ perfectly hides the outcome $s(i)$. The correct computation is performed through the encryption since the angles $\theta(i)$ included in the state $\ket{+_{\theta(i)}}$ and the angle $\delta(i)$ cancel out.

We model the Verifier preparing and sending the state $\ket{+_{\theta(i)}}$ with the following Remote State Preparation (Resource \ref{res:rsp}). The Sender instructs it to prepare one state $\phi$ in a set of states $\Phi$ and the resource outputs the corresponding state at the Receiver's interface. Anticipating on the next sections, we consider this the ideal RSP since it never leaks more information about the identity of $\phi$ than that which is contained in the state of a single qubit quantum system whose state is $\phi$, and the output state is not impacted by any noise.

\begin{resourceenv}[Remote State Preparation]
\label{res:perf}
\label{res:rsp}
\item 
\begin{algorithmic}[0]

  \STATE \textbf{Sender's Input:} The classical description of a state $\phi \in \Phi$.

  \STATE \textbf{Computation by the Resource:} The Resource prepares the state $\ket{\phi}$ and sends it to the Receiver.
\end{algorithmic}
\end{resourceenv}

The AC security of UBQC can then be reduced to the AC security of this functionality with set of states $\Phi = \{\ket{+_\theta}\}_{\theta \in \Theta}$. Thanks to the composability of the AC framework, once we have constructed this RSP resource, it is possible to directly use this implementation to perform the UBQC protocol securely as well. The full UBQC protocol making use of this resource is presented in a formal way in Protocol~\ref{proto:ubqc}, Appendix~\ref{app:ubqc}.

\subsection{Dummyless SDQC Protocol for \texorpdfstring{$\BQP$}{BQP} Computations}
\label{subsec:dummyless-sdqc}
SDQC protocols based on the MBQC model used to require so-called \emph{dummy qubits} -- i.e. qubits in the state $\ket 0$ or $\ket 1$. Inserting such states in a graph is equivalent to removing from the graph the vertex where is was inserted along with all associated edges. This allows to isolate $\ket{+_\theta}$ qubits -- by removing all neighbouring vertices -- and use them as deterministic tests hidden from the Prover.

Recently, \cite{KKLM22unifying} introduced a framework for verifying quantum computations in the MBQC model that allows more general tests. Then \cite{KKLM23asymmetric} used this framework to to verify $\BQP$ computations without dummy qubits. This strategy is one of the key ingredients in our fault-tolerant SDQC protocol.

In the following, we summarise their construction of dummyless tests for arbitrary computation graphs $G$ and then give the associated SDQC protocol and security result.

\paragraph{Constructing Dummyless Tests for $\BQP$ Computations.}
The tests from~\cite{KKLM22unifying} are based on graph stabilisers.  Given a subset of vertices $W \subseteq V$, the test associated to $W$ requires the Verifier to prepare and send a $+1$ eigenstate of
\begin{equation*}
\bigotimes_{i \in W_{even}} \X \bigotimes_{j \in W_{odd}} \Y \bigotimes_{k \in N_G^{odd}(W)} \Z,
\end{equation*}
where $W_{even}$ (resp. $W_{odd}$) are the qubits of even (resp. odd) degree within $W$, and $k\in N_G^{odd}$ means $k$ is in the odd neighbourhood of $W$. This can be achieved by preparing all qubits in $W_{even}$ in a $+1$ eigenstate of $\X$, all qubits in $W_{odd}$ in a $+1$ eigenstate of $\Y$ and all qubits in $N_G^{odd}(W)$ in a $+1$ eigenstate of $\Z$.  A test accepts if the sum of the $\X$-measurement outcomes on qubits from set $W$ has parity $0$ -- it rejects otherwise.

Hence, whenever $N_G^{odd} = \emptyset$ the test can be prepared without using dummies. For instance, this is the case for $W = V$, but also for any set $W = V \setminus H$ where $H$ contains only even degree vertices. In fact, more generally, this holds true when $H$ is made of any number of even degree vertices together with chains of vertices starting and finishing on odd degree ones and having only even degrees in between. We denote $\mathfrak{W}$ the set of all such tests.

For classical input and output UBQC computations, it can be shown that all deviations be the Prover can be reduced to the application of a $\Z$ operation on a subset of vertices right before measuring in the $\ket{\pm_{\delta(i)}}$ basis.  A test detects one such deviation if the test rejects if the deviation is applied right before the measurement specified by the test.  A set of tests detects a given $\Z$ deviation if there is a non-negligible probability that a test sampled uniformly at random from the set detects the deviation.

In~\cite{KKLM22unifying}, they show that the set of tests $\mathfrak{W}$ can detect all $\Z$ deviations from the Prover apart from the deviation corresponding to applying a $\Z$ to all the odd degree qubits before measurement.  However, they also show that this undetected deviation has no effect on MBQC computations with classical outputs.  Hence all harmful deviations are detected and this set of tests can be used in an SDQC protocol as explained below.

\paragraph{The Dummyless SDQC Protocol.}
The principle of this protocol is to interleave UBQC executions of pure computation rounds and pure test rounds. The test in each test round is sampled at random from the dummyless tests constructed above. The final result is obtained by taking a majority vote on the obtained result conditioned to the ratio of failed test rounds being below a given threshold -- otherwise the computation is rejected.  This is presented formally below in Protocol~\ref{proto:dummyless-sdqc}.

\begin{protocolenv}[Dummyless SDQC Protocol for $\BQP$ Computations]
  \label{proto:dummyless-sdqc}
  \item 
  \begin{algorithmic}[0]  

    \STATE \textbf{Public Information:} 
    \begin{itemize}
    \item $G = (V, E, I, O)$, a graph with input and output vertices $I$ and $O$ respectively;
    \item $\mathfrak{W}$, the set of dummyless tests on graph $G$;
    \item $\preceq_G$, a partial order on the set $V$ of vertices;
    \item $N, d, w$, parameters representing the number of runs, the number of computation runs, and the number of tolerated failed tests.
    \end{itemize}
    
    \STATE \textbf{Verifier's Inputs:} A set of angles $\{\phi(i)\}_{i \in V}$ and a flow $f$ which induces an ordering compatible with $\preceq_G$.
    
    \STATE \textbf{Protocol:}
    \begin{enumerate}
    \item The Verifier samples uniformly at random a subset $C \subset [N]$ of size $d$ representing the runs which will be its desired computation, henceforth called computation runs.
    \item For $k \in [N]$, the Verifier and Prover perform the following:
      \begin{enumerate}
      \item If $k \in C$, the Verifier sets the computation for the run to its desired computation $(\{\phi(i)\}_{i \in V}, f)$. Otherwise, the Verifier samples uniformly at random a test $W$ from the set of dummyless tests $\mathfrak{W}$.
      \item The Verifier and Prover blindly execute the run using the UBQC Protocol~\ref{proto:ubqc}.
      \item If it is a test, the Verifier computes the XOR of measurement outcomes on the vertices from set $W$. The test accepts if it is $0$ and fails otherwise.
      \end{enumerate}
    \item At the end of all runs, let $x$ be the number of failed tests. If $x \geq w$, the Verifier rejects and outputs $\bot$.
    \item Otherwise, the Verifier accepts the computation. It performs a majority vote on the output results of the computation runs and sets the result as its output.
    \end{enumerate}  
  \end{algorithmic}
\end{protocolenv}

The security properties of SDQC protocols is captured by Resource~\ref{res:dqc}, which either produces the correct state or aborts depending on the choice of the Prover. The only tolerated leakage is denoted $l_{\cptp U}$, which depends on the Verifier's computation $\cptp U$.

\begin{resourceenv}[Secure Delegated Quantum Computation with Classical Inputs and Outputs]
  \label{res:dqc}
  \item
  \begin{algorithmic}[0]
  
	\STATE \textbf{Public Information:} The nature of the authorized leak $l_{\cptp U}$.
  	
    \STATE \textbf{Inputs:} 
    \begin{itemize}
    \item The Verifier inputs the classical description of a unitary $\cptp U$ over $m$ qubits.
    \item The Prover chooses whether or not to deviate. This interface is filtered by two control bits $(e, c)$.
    \end{itemize}
    
    \STATE \textbf{Computation by the Resource:}
    \begin{enumerate}
    \item If $e = 1$, the Resource sends the authorized leak $l_{\cptp U}$ to the Prover's interface and awaits further input from the Prover; if it receives $c = 1$, the Resource outputs $\Abort$ at the Verifier's output interface.
    \item If $c=0$, it outputs $O = \mathcal{M}_C \circ \cptp U \ket{0}^{\otimes m}$ at the Verifier's output interface, where $\mathcal{M}_C$ is a computational basis measurement.
    \end{enumerate}
  \end{algorithmic}
\end{resourceenv}

We can now give the security result for the dummyless SDQC protocol. Intuitively, so long as as we have a secure RSP implementation (Resource \ref{res:rsp}), the protocol realises the SDQC Resource up to an error which is negligible in the number of repetitions in the protocol. This theorem and the composability of the AC framework allow us to focus on the construction of the RSP Resource.

\begin{theorem}[Security of Protocol \ref{proto:dummyless-sdqc}, \cite{KKLM22unifying,KKLM23asymmetric}]
\label{thm:sec-dummyless}
Let $d$ be proportional to $N$, and let $c$ the fixed bounded error of the $\BQP$ class of computations.  Let $\epsilon$ be the (constant) lower bound on the probability that a test sampled uniformly from set $\mathfrak{W}$ rejects in the presence of a $\Z$ error.  Let $w$ be the maximum number of test rounds allowed to fail, chosen such that $w < \frac{2c-1}{2c-2}(N-d)(1 - \epsilon)$.  Let the leak be defined as $l_{\cptp U} = (G, \mathfrak{W}, \preceq_G)$.
  
Then Protocol~\ref{proto:dummyless-sdqc} $\eta(N)$-constructs the Secure Delegated Quantum Computation (Resource~\ref{res:dqc}) from the Remote State Preparation Resource~\ref{res:rsp} in the Abstract Cryptography framework, for $\eta(N)$ negligible in $N$.
\end{theorem}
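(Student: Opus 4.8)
The plan is to verify the two conditions of Definition~\ref{def:ac-sec} for the protocol $\pi = (\pi_V, \pi_P)$ executed on top of Resource~\ref{res:rsp}, reducing both to the blindness of UBQC and to the combinatorial detection property of the test family $\mathfrak{W}$ established in~\cite{KKLM22unifying,KKLM23asymmetric}. For \emph{correctness} ($\pi\mathcal{R} \underset{\eta(N)}{\approx} \mathcal{S}$), note that with both parties honest each UBQC round run on Resource~\ref{res:rsp} implements exactly the intended single-round computation: the one-time-pad angle $\theta(i)$ carried by the prepared $\ket{+_{\theta(i)}}$ cancels against the $\theta(i)$ added inside $\delta(i) = \phi'(i) + \theta(i) + r(i)\pi$, and $r(i)$ is removed when the Verifier forms $s(i) = b(i)\oplus r(i)$. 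Consequently every test round meets its stabiliser constraint and accepts, so $x = 0 < w$ and the Verifier never outputs $\bot$; and each of the $d$ computation runs returns the correct $\BQP$ answer with probability at least $1-c$. Since these $d = \Theta(N)$ runs are mutually independent in the honest execution, the $d$-fold majority vote is the standard $\BQP$ amplification, agreeing with the ideal output $\mathcal{M}_C\circ\cptp U\ket{0}^{\otimes m}$ except with probability exponentially small in $d$, hence negligible in $N$.

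For \emph{security against a malicious Prover} I would build a simulator $\sigma_P$ as follows: it sets the filter bit $e = 1$, reads the authorised leak $l_{\cptp U} = (G, \mathfrak{W}, \preceq_G)$ from Resource~\ref{res:dqc}, and then simulates towards the Prover a complete run of Protocol~\ref{proto:dummyless-sdqc}, feeding it maximally mixed qubits in place of the RSP outputs and fresh uniformly random angles $\delta(i) \in \Theta$ in place of the honest ones. By the usual UBQC blindness argument this simulated transcript is \emph{identically distributed} to the Prover's real view, for every $\cptp U$, every round assignment and every partition $C$, because $\theta(i)$ perfectly masks $\phi'(i)$ and $r(i)$ perfectly masks the outcome bit. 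The simulator then extracts the Prover's deviation; averaging over $r(i)$ and the graph-state stabilisers twirls any CPTP deviation into a convex combination of Pauli-$\Z$ operations inserted immediately before the honest measurements, so it suffices to reason round by round about a distribution over subsets of vertices carrying a $\Z$. Invoking the analysis of~\cite{KKLM22unifying,KKLM23asymmetric}, every such $\Z$-deviation that would alter a classical-output computation round is rejected by a uniformly random $W\in\mathfrak{W}$ with probability at least $\epsilon$ (the lone exception, $\Z$ on all odd-degree vertices, having no effect on the output). Using the realised partition $C$ and test labels, $\sigma_P$ determines whether the simulated Verifier would accept and, if so, what majority-vote output it would produce; it forwards $c = 1$ to the resource exactly when the simulated Verifier would reject, and otherwise lets the resource output the ideal value.

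It then remains to bound the probability of the single event distinguishing $\pi_V\mathcal{R}$ from $\mathcal{S}\sigma_P$: the simulated Verifier accepts ($x < w$) yet its majority vote disagrees with $\mathcal{M}_C\circ\cptp U\ket{0}^{\otimes m}$. The key point is that, after the twirl, the only randomness the Prover is blind to is the Verifier's choice of $C$ and of the test labels, so I would fix the (possibly adaptive, cross-round-entangled) $\Z$-deviation pattern and average over those choices. If the pattern touches few enough rounds that, even in the worst case, the corrupted computation runs stay below the majority-flipping fraction $\frac{2c-1}{2c-2}$ of $d$, then the majority vote is still correct except with probability exponentially small in $d$. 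If instead it touches enough rounds to threaten the vote, then by blindness it touches at least the corresponding fraction of the $N-d$ test rounds, each detected with probability at least $\epsilon$, so the expected number of failed tests exceeds $w$ — this is precisely the role of the stated upper bound on $w$ in terms of $N-d$, $\epsilon$ and $c$ — and a Hoeffding bound over the random partition makes $\{x < w\}$ exponentially unlikely in $N-d = \Theta(N)$. Summing these two negligible contributions yields the claimed $\eta(N)$.

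The main obstacle is exactly this last step: because the Prover may deviate adaptively and entangle its attacks across all $N$ rounds, the rounds cannot be treated as i.i.d.\ trials, and one must instead argue that — once the deviation has been twirled to a classically-described $\Z$-pattern — its correlation with the Verifier's partition $C$ and test choices vanishes, so that the required concentration can be extracted purely from the Verifier's internal randomness. Making this decoupling rigorous, and checking that the ``fraction of rounds threatening the vote versus fraction of tests detecting it'' bookkeeping is robust both to adaptivity and to partially-harmful deviations, is the technical crux, and is precisely what is imported from~\cite{KKLM22unifying,KKLM23asymmetric}.
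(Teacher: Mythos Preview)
The paper does not contain its own proof of this theorem: it is stated as a cited result from~\cite{KKLM22unifying,KKLM23asymmetric} and used as a black box, so there is no in-paper proof to compare against. Your proposal is therefore not competing with anything in the present paper but is a reconstruction of the argument from those references, and you explicitly flag this yourself in the final paragraph.

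As a reconstruction, your sketch follows the correct overall architecture of the cited works: perfect blindness of UBQC lets the simulator replace real RSP states by maximally mixed qubits and real $\delta(i)$'s by fresh uniform angles; the $(\theta,r)$-twirl collapses deviations to convex combinations of $\Z$-patterns; the test family $\mathfrak{W}$ detects every harmful $\Z$-pattern with probability at least $\epsilon$; and the interplay between the threshold $w$, the computation fraction $d/N$, and the $\BQP$ error $c$ yields the exponential tail via concentration over the Verifier's private choice of $C$ and test labels. The one place where your write-up is slightly loose is the decoupling step: in the actual proofs, the reduction to a fixed $\Z$-pattern independent of $C$ and of the test labels is obtained \emph{before} the partition is sampled (because blindness makes the Prover's view, and hence its deviation, identically distributed for every choice of $C$ and every assignment of tests), not merely ``after the twirl''; stating it this way makes the Hoeffding/Serfling bound over the hypergeometric sampling of $C$ immediate and removes any worry about adaptivity. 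With that clarification your sketch matches the cited argument.
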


\subsection{Fault-Tolerant Quantum Computation}
\label{sec:ftqc}
By their very nature, quantum computers manipulate delicate state superpositions and are thus inherently sensitive to noise. Without proper action,  noise quickly renders them useless.  Indeed, it is not enough for combating noise to perform error correction. The reason is that noise affects the error recovery procedure itself, thereby jeopardizing the ability to correct anything in the first place. In other words, reliable computations cannot be obtained from error correction alone.

To overcome this difficulty, fault-tolerance adds constraints and design rules so that reliable computation can be performed for circuits of arbitrary length, provided the noise level is not too high. While there is a plethora of fault-tolerant schemes that can be used in various situations (e.g.~\cite{BM09quantum,CB18flag,NB18measurement,HH21dynamically}), with specific advantages, we will only present here a very high-level description of this topic following~\cite{AGP06quantum} which builds upon the original ideas of \cite{AB97fault}. The reason is two-fold:
\begin{itemize}
\item Our construction will rely heavily on its proof of the threshold theorem;
\item \cite{AGP06quantum} is an accessible, yet thorough and rigorous proof of the threshold theorem making it a resource of choice for additional details.
\end{itemize}

\paragraph{Modelling Imperfections.}
The first ingredient of the fault-tolerance theorem is the modelling of devices whose imperfections need to be handled by the procedure. The crucial realization put forth in~\cite{AGP06quantum} is that one can reach fault-tolerance without relying on the tiny details of the error model that affect the device. Rather, it relies on the stochastic and independent nature of the \emph{location} of faults, while the error applied on the affected qubit can be chosen in an arbitrary malicious way --- e.g.~it can be correlated with errors happening at other locations or be the result of interactions with a private quantum register held by an adversary, and thus need not be Markovian. The term \emph{location} in a quantum circuit refers to a wire in a preparation, a measurement, or a gate at the physical level. For example, the preparation of a single qubit counts as one location, but a two-qubit gate counts as having two locations, one for each qubit.

\paragraph{Requirements on Fault-tolerant Gadgets.}
The second ingredient consists of defining a set of requirements that, if met, guarantee the existence of a threshold for the probability of faults, under which arbitrarily long computations can be performed with a manageable overhead. These requirements stem from the choice of protecting quantum information using recursive single qubit block encoding into distance-3 error correcting codes.\footnote{The analysis in the rest of the paper would work for larger distances for the base code, but the conditions P0-P4 below would need to be adapted~\cite{AGP06quantum}.}

More precisely, it relies on the concept of recursive simulations in which an ideal gate $\mathsf{Ga}$ is simulated at \emph{level-0} by implementing it using the corresponding imperfect physical gate denoted $0-\mathsf{Ga}$. Otherwise said, a level-$0$ simulation of a whole circuit is simply the level-$0$ simulation of each of its individual gates. A first level of protection against errors can be obtained by encoding each qubit of the level-$0$ circuit into an error correcting code and performing an error correction step after each prescribed $0-\mathsf{Ga}$ of the level-$0$ circuit. More precisely, each $0-\mathsf{Ga}$ is replaced by a $1-\mathsf{Ga}$ followed by a $1-\mathsf{EC}$, together constituting a $1-\mathsf{Rec}$. Here, $1-\mathsf{Ga}$ stands for the gate implementation at the logical level, and $1-\mathsf{EC}$ is the error correction procedure applied to the encoded qubit. A level-$1$ simulation of a quantum circuit is obtained when all $0-\mathsf{Ga}$ have been replaced by their corresponding $1-\mathsf{Rec}$.  From there, a level-$k$ simulation is defined recursively and is obtained from a level-$(k-1)$ simulation by replacing all $0-\mathsf{Ga}$ by their corresponding $1-\mathsf{Rec}$'s. Figure~\ref{fig:bad_logical_Z} presents level-$0$ to level-$2$ simulations for a $Z(\theta)$ gate.

\begin{figure}
  \centering
  \begin{picture}(0,0)\includegraphics{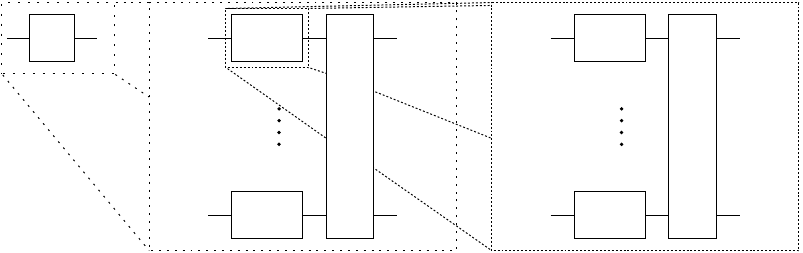}\end{picture}\setlength{\unitlength}{4144sp}\begin{picture}(6099,1914)(34,-973)
\put(2761,-61){\rotatebox{90}{\makebox(0,0)[lb]{\smash{\fontsize{8}{9.6}\usefont{T1}{ptm}{m}{n}{\color[rgb]{0,0,0}$\mathsf{2-EC}$}}}}}
\put(1946,614){\makebox(0,0)[lb]{\smash{\fontsize{8}{9.6}\usefont{T1}{ptm}{m}{n}{\color[rgb]{0,0,0}$Z(\theta)$}}}}
\put(1946,-736){\makebox(0,0)[lb]{\smash{\fontsize{8}{9.6}\usefont{T1}{ptm}{m}{n}{\color[rgb]{0,0,0}$Z(\theta)$}}}}
\put(301,614){\makebox(0,0)[lb]{\smash{\fontsize{8}{9.6}\usefont{T1}{ptm}{m}{n}{\color[rgb]{0,0,0}$Z(\theta)$}}}}
\put(5371,-61){\rotatebox{90}{\makebox(0,0)[lb]{\smash{\fontsize{8}{9.6}\usefont{T1}{ptm}{m}{n}{\color[rgb]{0,0,0}$\mathsf{1-EC}$}}}}}
\put(4556,614){\makebox(0,0)[lb]{\smash{\fontsize{8}{9.6}\usefont{T1}{ptm}{m}{n}{\color[rgb]{0,0,0}$Z(\theta)$}}}}
\put(4556,-736){\makebox(0,0)[lb]{\smash{\fontsize{8}{9.6}\usefont{T1}{ptm}{m}{n}{\color[rgb]{0,0,0}$Z(\theta)$}}}}
\end{picture}   \caption{Fault-tolerant Level-2 Simulation of a $\Z(\theta)$ Gate. Left: level-0 simulation corresponding to the logical level; middle: level-1 simulation; right: level-2 simulation depicting the implementation of the level-1 $\Z(\theta)$ for the first qubit.}
  \label{fig:bad_logical_Z}
\end{figure}

One can then express the prerequisite for obtaining the threshold theorem as: 
\begin{enumerate}
\item[(P0)] If a level-$1$ error correction contains no fault, it takes any input to an output in the code space.
\item[(P1)] If a level-$1$ error correction contains no fault, it takes an input with at most one error to an output with no errors.
\item[(P2)] If a level-$1$ error correction contains at most one fault, it takes an input with no errors to an output with at most one error.
\item[(P3)] If a level-$1$ gate contains no fault, it takes an input with at most one error to an output with at most one error in each output block.
\item[(P4)] If a level-$1$ gate contains at most one fault, it takes an input with no errors to an output with at most one error in each output block.
\end{enumerate}

\paragraph{Fault-tolerance Threshold Theorem.}
The third ingredient of fault-tolerance is the threshold theorem itself: 
\begin{theorem}[Restated from~\cite{AGP06quantum}]\label{thm:ft}
  When properties P0-P4 above are satisfied, there exists some $\varepsilon_0$  such that for a given $\varepsilon_{{\mathrm{corr}}}  > 0 $, any ideal circuit $C_0$ with $L$ fault locations and depth $D$ can be simulated with a failure probability at most $\delta$  by a circuit $C$  with at most $L^*$ fault locations and depth at most $D^{*}$,  where:
  \begin{equation}
    L^* = O(L(\log L)^{\log_2 l}), \quad  D^* = O(D(\log L)^{\log_2 d}).
  \end{equation}
Above $l$ is the maximum size number of fault locations in a $1-\mathsf{Rec}$ and $d$ is the maximum depth of a $1-\mathsf{Rec}$.
\end{theorem}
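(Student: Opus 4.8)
The plan is to follow the recursive-simulation proof of~\cite{AGP06quantum} (itself a sharpening of~\cite{AB97fault}), organised around the notion of \emph{extended rectangles}. Recall that a level-$1$ rectangle ($1-\mathsf{Rec}$) consists of a $1-\mathsf{Ga}$ followed by the $1-\mathsf{EC}$'s attached by the recursion to its output blocks, and that an \emph{extended rectangle} ($1-\mathsf{exRec}$) additionally carries the $1-\mathsf{EC}$'s on its input blocks, so that consecutive exRecs along the circuit overlap in exactly one $1-\mathsf{EC}$. Call a $1-\mathsf{exRec}$ \emph{good} if it contains at most one fault and \emph{bad} otherwise. The technical heart is the \emph{goodness-implies-correctness} lemma: a good $1-\mathsf{exRec}$ is \emph{correct}, meaning that inserting an ideal decoder immediately before it yields exactly the ideal gate $\mathsf{Ga}$ followed by an ideal decoder, and that an incoming error supported on at most one block is absorbed while each outgoing block then carries at most one error. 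This is where P0--P4 are used in concert: the unique fault of a good $1-\mathsf{exRec}$ sits either in the leading $1-\mathsf{EC}$, the $1-\mathsf{Ga}$, or a trailing $1-\mathsf{EC}$; P0--P1 let a fault-free leading $1-\mathsf{EC}$ remove the single incoming error, P2 bounds the error a single fault inside an $1-\mathsf{EC}$ can introduce, and P3--P4 control the $1-\mathsf{Ga}$ under the one-error hypothesis, so in every case the output error-count stays bounded by one.

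Second, I would set up the \emph{level-reduction} step. In the level-$k$ circuit $C$, simultaneously replace every good $1-\mathsf{exRec}$ by the ideal level-$0$ gate it simulates; by the correctness lemma, applied inductively from left to right and using that neighbouring exRecs share an $1-\mathsf{EC}$ (so that the ``$\le 1$ outgoing error'' guaranteed for one exRec feeds the ``$\le 1$ incoming error'' hypothesis of the next), this substitution reproduces exactly the level-$(k-1)$ circuit obtained by peeling off one layer of encoding. Running the substitution $k$ times shows that the level-$k$ simulation equals the ideal circuit $C_0$ \emph{unless} some exRec, at some level of the recursive expansion, is bad. At the same time, by viewing the level-$(k-1)$ effective gadgets as the ``physical'' gadgets of a level-$1$ construction, the correctness lemma applies verbatim at every level, which is what makes the recursion close. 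It therefore suffices to bound the probability that a bad exRec occurs anywhere.

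Third is the \emph{probability recursion and overhead accounting}. Physical locations fault independently with probability at most $p$, and a $1-\mathsf{exRec}$ contains at most some constant number $A$ of locations (hence at most $\binom{A}{2}$ pairs), so by a union bound it is bad with probability at most $\binom{A}{2}\,p^2$. A level-$k$ exRec is bad only if at least two of its constituent level-$(k-1)$ exRecs are bad, which gives the recursion $\varepsilon^{(k)} \le \binom{A}{2}\,(\varepsilon^{(k-1)})^2$ with $\varepsilon^{(0)} = p$; setting $\varepsilon_0 := 1/\binom{A}{2}$ (up to absorbing constants, this is the claimed threshold) yields $\varepsilon^{(k)}/\varepsilon_0 \le (p/\varepsilon_0)^{2^k}$, which is doubly exponentially small in $k$ as soon as $p < \varepsilon_0$. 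Since $C_0$ has $L$ locations, its level-$k$ simulation has at most $L\,l^k$ locations and hence at most $L\,l^k$ top-level exRecs, so a final union bound makes the total failure probability at most $L\,l^k\,\varepsilon_0\,(p/\varepsilon_0)^{2^k}$. Choosing the number of levels $k = O(\log\log(L/\delta))$ drives this below $\delta$; for $\delta$ fixed this is $k = \log_2\log_2 L + O(1)$, whence $l^k = 2^{k\log_2 l} = O\!\left((\log L)^{\log_2 l}\right)$ and likewise $d^k = O\!\left((\log L)^{\log_2 d}\right)$, and therefore $L^* = O(L\,l^k) = O\!\left(L(\log L)^{\log_2 l}\right)$ and $D^* = O(D\,d^k) = O\!\left(D(\log L)^{\log_2 d}\right)$.

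The step I expect to be the main obstacle is making the goodness-implies-correctness lemma fully rigorous, rather than the counting. One must fix the ``ideal decoder'' formalism and the precise operator identity that ``correct'' denotes (a noisy gadget sandwiched between ideal decoders equals the ideal gate), settle the convention that assigns each shared $1-\mathsf{EC}$ to exactly one of its two neighbouring exRecs (so a fault in it is charged only once while the one-error bookkeeping of both neighbours still closes), and then verify case by case that \emph{every} placement of a single fault inside a level-$1$ exRec is benign under P0--P4 --- in particular the awkward case where that fault lies in the shared $1-\mathsf{EC}$ and could a priori harm two exRecs at once. Once this lemma and the level-reduction induction are in place, the probability recursion and the $\log\log$-level overhead bound are routine.
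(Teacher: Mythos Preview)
Your proposal is correct and follows essentially the same approach as the paper. The paper does not give an independent proof of this theorem; it is explicitly restated from~\cite{AGP06quantum}, with only the choice of concatenation level $k$ (namely $2^k \geq \log(2\varepsilon_0 L/\delta)/\log(\varepsilon_0/\varepsilon_{\mathrm{corr}})$) recorded after the statement, and the proof structure deferred to \S4.2 where it is redone in the paper's modified setting of $\mathsf{SafeRec}$'s and $\mathsf{exSafeRec}$'s. That later treatment (Lemmata~4.5--4.7 and Theorem~4.8) follows exactly the outline you give: goodness implies correctness at level~$1$ via P0--P4, level reduction by sweeping ideal decoders, the recursive definition of good/bad with the independence convention for overlapping exRecs, and the probability recursion $p_c^{(k)} \le p_0(p_c/p_0)^{2^k}$ with $p_0 = A^{-1}$; the subtlety you flag about charging a fault in a shared $\mathsf{EC}$ to only one of two neighbouring exRecs is precisely what the paper's Definition~4.4 (independent bad exRecs) is there to handle.
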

The  circuit $C$ that satisfies the above theorem is obtained by using $k$ levels of recursive encoding per qubit, with $k$ such  that $2^k \geq \frac{\log(2\varepsilon_0 L / \delta)}{\log(\varepsilon_0/\varepsilon_{\mathrm{corr}})}$.

As we will abundantly rely on the structure of its proof, it will be presented in \S~\ref{sec:rsp-ac} where it will be of direct use to the reader.

\paragraph{Concrete Implementations.}
And finally, the fourth step is the construction of fault-tolerant level-$1$ gadgets that abide properties P0-P4 above. Indeed, we will not delve in their construction here as this is not directly needed for our purposes. We will instead rely on their existence and proceed in two steps to produce a secure version of these gadgets. For all but the single qubit $\Z(\theta)$ rotations, we will add a trailing $1-\mathsf{EC}$s for each $1-\mathsf{Rec}$s. For the level-$1$ $\Z(\theta)$ it will employ a \emph{split compilation} of the logical gate (see \S~\ref{sec:rsp-ac}).

\section{Modeling Imperfect Operations}
\label{sec:modeling}
Imperfections on the Prover's devices are already taken into consideration in the security of SDQC protocols: since they are secure for arbitrary malicious Provers, they must also be secure against noisy Provers. On the other hand, imperfect Verifier devices may induce new opportunities for attacks which are not accounted for in existing protocols. In this section, we model imperfections for the devices used to perform the SDQC protocols and analyze their impact on the security of the SDQC protocol when they affect devices used by the Verifier.

We start by examining imperfections that yield a direct failure of security proofs as a result of partial or full leak of the value of secret key bits that are used in Protocol~\ref{proto:dummyless-sdqc}. These situations may arise due to the physical setup where the states prepared by the Verifier to delegate its computation are not perfect qubit states, but rather weak coherent pulses that are susceptible to photon splitting attacks. It can also be due to side-channel attacks which have a certain probability of successfully leaking the secret bits. We will model this with the \emph{Stochastically Leaky RSP} (Resource~\ref{res:stoca}), which captures all cases in which similar leaks can occur.

Then, we will examine the effect of imperfections that can be tamed using fault-tolerant techniques. This is because an SDQC protocol must be fault-tolerant to be correct in the presence of imperfect Verifier and Prover operations. Unfortunately, our analysis will reveal that standard fault-tolerant techniques applied in the context of these protocols intertwine correctness and security, making it impossible to separate faults from leaks. This fact was already noted in~\cite{ABEM17interactive} and left as an open difficulty on the road to verification of fault-tolerant delegated quantum computations. We therefore introduce \emph{Stochastically Compromised Preparations, Gates, and Measurements} which describe arbitrary combinations of stochastic leaks and faults. Constructing an SDQC protocol using this resource will guarantee that the same protocol will also both (i) be secure if the resource only leaks but is correct, and (ii) yield the correct result in the standard model of fault-tolerance in which there are no secret parameters, thereby achieving the goal of this paper.

\subsection{Effects of Leaks on Security}
\label{subsec:leak_model}

\paragraph{Leaks Occur in Physically Motivated Setups.}
To take a concrete example, consider the RSP that produces $\ket{+_\theta}$ states for $\theta \in \Theta$. A direct way to construct such resource is to consider that (i) the Sender has access to a single photon source that prepares $\ket{+}$ states, that (ii) it can perform $Z(\theta)$ rotations on the produced states, and that (iii) it can send them through an optical fiber to the Receiver.

However, even using very good heralded single photon sources, there is a non-negligible\footnote{Following the definition of \emph{negligible} given in \S~\ref{sec:notation}.} probability that more than one photon is emitted in the transmitted pulse.

This can also be the consequence of resorting to a cheaper setup that uses phase randomized weak coherent pulses instead of good single photon sources. In that case, strongly attenuated lasers generate a correct $\ket +$ state except that with some probability $p_{l}$ it prepares more than one photon in the pulse --- i.e. the state becomes $\ket{+}^{\otimes n}$ for some $n\geq 2$.

Whenever these implementations of RSP are used within Protocol~\ref{proto:dummyless-sdqc} to send $\ket{+_{\theta}}$, it is susceptible to the photon splitting attack~\cite{HIGM95quantum}. The secret values of $\theta$ that need to remain concealed can be partly recovered by the Prover. For instance, suppose the Verifier sends $\ket{+_{\theta}}^{\otimes 2}$ to the Prover instead of $\ket{+_{\theta}}$. An arbitrarily powerful Prover can detect that the sent pulse has two photons and can measure one of them in the $\ket{\pm_{\phi}}$ basis for some random $\phi \in \Theta$. If the outcome is $\ket{+_{\phi}}$, it knows that the value of $\theta$ cannot be $\phi + \pi$, while if the outcome is $\ket{-_{\phi}}$, then $\theta \neq \phi$. It can therefore learn information on $\theta$ without ever being detected by Protocol~\ref{proto:dummyless-sdqc}. Similar physically motivated imperfections can also yield side-channel attacks~\cite{GQL24hidden} without necessarily resorting to photon splitting.

In addition, there are other cases in which leakage needs to be taken into account. If the SDQC protocol is used to gauge the imperfections of a device, both the Verifier's and the Prover's operation may be implemented on the same device. In this case, there may be cross-talk between the parts of the device used to implement the two parties, leading to secret leakage. This must then be quantified and accounted for in the analysis~\cite{GLMM24chip,FBME24certified}. This is particularly true if some components are supposed to be in a trusted environment or enclave~\cite{MKAC22qenclave,KLMO24verification}

As an intermediate conclusion, such imperfections that leak (part of) the value of secret key bits are rather common and expected when using real world devices.

\paragraph{Impact on Security.}
The issue with leaking even partial information about the secret to the Prover is that it becomes impossible to twirl the Prover's attack in the security analysis of the SDQC protocol, upon which all such proof rely. Indeed, averaging over the values of the secret bits performs a twirl of the Prover's deviation~\cite{DCEL09exact,K16efficient}, reducing coherent attacks to convex combination of Paulis. This can be done as long as \emph{all} the key bits are \emph{perfectly secret}.

Technically, in the situation above, the Prover has the ability to condition its attack on the information gathered through the leaks. In particular, removing the full blindness guarantee thanks to the leaked information may allow a malicious Prover to adapt its behaviour to the testing strategy used to probe its the honesty and circumvent it.

\paragraph{Modelling Leaks.}
If we are to successfully prove security with imperfect devices, the secret parameters cannot leak all the time, otherwise there is simply no secret at all. It is thus necessary to find a model for the imperfections so that it captures the behaviour of physically realistic devices\footnote{Our model does not need to exactly correspond to physical imperfections as long as the real devices can be used to simulate the modelled imperfections. In such case, this means that if security can be obtained for the modelled imperfections, it will immediately imply security when it is replaced by real devices.}, while not leaking all the secret parameters at all times. Taking inspiration from the examples of leaks described above, it is rather natural to restrict the leaks to be stochastic. This then yields to the following resource definition.

\begin{resourceenv}[Stochastically Leaky Remote State Preparation]
\label{res:stoca}
\item
\begin{algorithmic}[0]

  \STATE \textbf{Public Information:} Leak probability $p_l$, $\Phi$ the set of states that can be prepared

  \STATE \textbf{Sender's Input:} The classical description of a state $\phi \in \Phi$.
  
  \STATE \textbf{Receiver's Input:} A bit $c \in \{0, 1\}$ indicating whether it wants to cheat, set to $0$ in the honest case.

  \STATE \textbf{Computation by the Resource:} The Resource prepares the state $\ket{\phi}$ and sends it to the Receiver. If $c = 1$, it flips a biased coin whose probability of outputting $1$ is $p_l$. If successful, it outputs as well the classical description $\phi$ at another interface belonging to the Receiver.
\end{algorithmic}
\end{resourceenv}

The main advantage of this definition is its simplicity: it either leaks the full classical description or nothing. This captures the worst case scenario in terms of possible leakage, therefore any protocol which is secure when built upon this resource will also be secure if only part of the state leaks. However, note that the state produced by the resource is always correct, meaning that it assumes that any error is applied only after the state has been produced. This does not capture all the situations in which the qubits suffer from errors during their preparation. We must therefore incorporate such possibiity into a more refined model.

\subsection{Effects of Faults on Security}
\paragraph{Faults Must Be Accounted for to Provide Robustness.}
The ``standard model'' for imperfections affecting quantum computing devices is defined in the context of fault-tolerant computations. As introduced in \S~\ref{sec:ftqc}, the crucial point to combat noise effectively and efficiently in quantum computers is to impose requirements on faults --- i.e. the locations of errors --- rather than on errors themselves. More precisely, the model used in fault-tolerance specifies that faults are stochastic, while errors conditioned on all the faults are arbitrarily malicious. This, for instance, covers far more adverse situations than using a fixed error model such as the depolarizing noise: it allows a malicious environment to create correlated errors at various locations once these are determined.

As we are proving the correctness and security of our protocols in the AC framework, it is natural to assume that the distinguishedthat is in charge of telleing apart the real and ideal scenarii in the AC framework has full control over the environment. In other words, the adversarial Prover is the one fixing the errors that occur during the whole execution, even on the machines of the Verifier.\footnote{An alternative would be to model this as two different Adversaries which may or may not collude. We choose the stronger version so that our protocol is secure in the worst case.}

\paragraph{Fault-Tolerant Gadgets Are Generally Insecure.}
We now describe an attack which stems from a naive implementation of fault-tolerant gates in the context of SDQC. This example will demonstrate the features required for our model of faulty operations presented below.

We consider again an RSP producing $\ket{+_{\theta}}$ states for $\theta \in \Theta$. Because we wish to protect the full computation from faults, this quantum state must be encoded in a quantum error-correcting code. The RSP on the Verifier's side then consists of preparing the logically encoded $\ket +$ state and performing the logical $\Z(\theta)$ rotation. Both operations can be performed fault-tolerantly. For instance, using the concatenated quantum Reed-Muller code $[[15,1,3]]$ the $\Z(\theta)$ rotation can even be implemented transversally using single qubit $\Z(\theta)$ rotations. We assume this setup is known to the Prover: the encoding and the method used for the RSP are part of the protocol's description, which the Verifier follows honestly. 

Recall that after the logical $\ket +$ preparation, physical qubits can be faulty, that the positions of faults are known to the adversary and that it may choose the errors on faulty locations. As the concatenation level grows there is a constant fraction of the physical qubits that are faulty with overwhelming probability. The adversarial Prover then acts as follows: (i) until the logical $\ket{+}$ state is prepared, it applies the identity $\Id$ as the error for any fault; (ii) at the location in the circuit right before applying the $\Z(\theta)$ gate, it decides to always apply a physical $\X$ as the malicious error on the faulty qubits. It then decides to again apply the identity at any fault during the $\Z(\theta)$ and after. After this process, the Verifier sends the prepared state to the Prover.

The problem can now be unveiled: at this stage, with overwhelming probability the information about $\theta$ can be read from the syndrome measurements of the code. This is because the $\X$ gates have distinct commutation relations with $\Z(\theta)$ depending on the value of $\theta$. For instance, if $\theta = \pi/2$ and $\Z(\theta) = \cptp S$ is the phase gate, $\X$ is transformed into $\Y$. Because the chosen code is a CSS code, it will trigger $\X$ and $\Z$ syndromes. On the other hand, if $\theta = 0$ and $\Z(\theta) = \Id$, only $\Z$ stabilizers will be triggered. This information will indeed be available to the Prover since, even after a perfect round of error correction is applied, some of the concatenated blocks will have uncorrectable errors. The Prover will get access to these blocks still containing the secret information once it receives the logical state from the Verifier. Hence, fault-tolerant gadgets are generally insecure.

Note that this exploit is a result of the very general behaviour of classically controlled gates: a fixed unitary applied before the gate that is commuted through the gate is mapped onto a unitary that \emph{generally depends} on the classical control. In other words, a \emph{secret-independent} noise is transformed by a secret dependent gate into a \emph{secret-dependent} noise, thus justifying the need for a unified approach of leaks and faults.

\paragraph{Compromised Operations.}
The discussion above outlines the necessity to model both leaks and faults on physical qubit operations within a single resource. In addition, it must be compatible with the standard model for fault-tolerance as it will also be used to model the imperfect operations on the Prover's side. We model physical preparations of quantum states, gates and measurements in the same resource for simplicity of presentation and coherence.

\begin{resourceenv}{Stochastically Compromised Preparation, Gate, or Measurement}
    \label{res:scpg}
    \label{res:compromised-pgm}
    \item

  \begin{algorithmic}[0]
  	
  	\STATE \textbf{Public Information:} 
	\begin{itemize}
		\item The compromise probability $p_c$;
		\item The size of the input and output registers $n$;
		\item The set of parameters that can be used to classically control the operation $\Lambda$, reduced to the name of the preparation, gate or measurement if not parametrized;
		\item The set of CPTP maps $\{\cptp U(\lambda)\}_{\lambda \in \Lambda}$.
	\end{itemize}

    \STATE \textbf{User's Input:}
	\begin{itemize}
	\item A value $\lambda \in \Lambda$ that parametrizes the preparation, gate or measurement;
	\item The quantum registers that are acted upon by the Resource.
	\end{itemize}	    

    \STATE \textbf{Eavesdropper's Input:} Bits $c_i \in \{0, 1\}$ indicating whether it wants to cheat, set to $0$ in the honest case and to $1$ when it wants to compromise the $i$\textsuperscript{th} location of the preparation, gate or measurement.

    \STATE \textbf{Computation by the Resource:}
    \begin{itemize}
    \item If the input register is non-empty, it inserts the state $\rho$ of the provided register into its internal register. Else it initialises its input register in the $\ket{0}^{\otimes n}$ state.
    \item It applies $\cptp U(\lambda)$ to $\rho$.
    \item For each $i$ s.t. $c_{i} =1$, with probability $p_c$:
        \begin{itemize}
          \item It sends $\lambda$ and the register at location $i$ to the Eavesdropper;
          \item It receives from the Eavesdropper a quantum system and inserts into its internal register at position $i$.
        \end{itemize}
    \item It outputs the state in its internal register at the User's output interface.
    \end{itemize}
  \end{algorithmic}
\end{resourceenv}

\paragraph{Applicability and Limitations.}
Both the leaky-but-correct case and standard fault-tolerant computations can be recovered from the imperfections modelled by Resource~\ref{res:compromised-pgm}.
The Stochastically Leaky RSP corresponds to the Stochastic Compromised Preparation with a non-trivial parametrization set $\Lambda$ where no error is applied by the Eavesdropper.
The standard model of fault-tolerance can be obtained by assuming that the Eavesdropper always cheats. In this case there is no parametrization and the authorized leak $\Lambda$ reveals the full computation. 

However, this Resource does not encompass all possible imperfections. In particular, we impose a binary behavior for the leak: it either leaks the full description of the gate or nothing. We will see in the next section that bounding away from zero the number of instances which do not leak, and therefore have perfect security, is a crucial requirement for the security of RSP. We leave as an open question how to deal with a Resource which always leaks a small amount of information about the chosen parameters $\lambda$.

\section{Exponentially Suppressing Imperfections in Remote State Preparation}
\label{sec:plugging-rsp}
\label{sec:lt}

In this section, we construct ideal RSP resources using the imperfect RSP resources defined in the previous section. The purpose of such resources is to produce a state on the Receiver's interface, chosen by the Sender from a set of states $\Phi$. At the most basic level this represents a device in the hands of the Sender that is able to produce various quantum states which are then sent via a quantum channel to the Receiver. When constructed in the AC framework, the composability of these resources together with a proper choice of $\Phi$ allows to replace their concrete implementations with their ideal version in order to implement UBQC (Protocol~\ref{proto:ubqc}).

In \S~\ref{sec:def-rsp}, the proposed protocol will require that the Sender performs single physical qubit operations modelled by the Stochastically Leaky RSP (Resource~\ref{res:stoca}). This simplicity comes at the price of robustness: while the security of the construction approaches perfection exponentially fast with the number of physical resources used, it degrades equally fast in the presence of noise. As a result, when it is used inside a verification protocol, it will force the Verifier --- playing the role of the Sender --- to trade-off between correctness and security (see \S~\ref{sec:issue-correctness}).

On the contrary, in \S~\ref{sec:rsp-ac} the protocol used to construct the RSP is robust to noise and allows for autonomous FTQC by the Prover. Yet, this comes at the expense of requiring the Verifier --- playing the role of the Sender --- to perform single logical qubit manipulations, modelled by Stochastically Compromised Preparations, Gates and Measurements (Resource~\ref{res:compromised-pgm}). As a result, it bypasses the trade-off encountered in the first proposal by resorting to a more complex protocol.

Both protocols will explicitly depend on a parameter that counts the number of physical resources they use. We show that they construct the RSP resource with a distinguishing advantage that is negligible in this parameter, thereby being efficient.

\subsection{Suppressing Leaks via Prover-side State Merging}
\label{sec:def-rsp}
\label{sec:rotated-rsp}

\paragraph{Protocol for Mitigating Leaks in RSP.}
Our goal in this section is to construct the Ideal RSP Resource~\ref{res:perf} using several calls to the leaky one (Resource~\ref{res:stoca}). More precisely, the Plugged Rotated Remote State Preparation uses $N$ calls to Resource~\ref{res:stoca} to provide a single $\ket{+_\theta}$ state on the Receiver's side with the guarantee that no leak about $\theta$ happens unless all $N$ calls are leak their secret parameter. To that effect, the Sender uses the Stochastically Leaky Rotated State Preparation Resource to send rotated $\ket{+_\theta}$ states to the Receiver. Upon reception, the Receiver applies a series of $\CNOT$'s between these qubits in a star-shaped pattern followed by a computational basis measurement of all but one qubit. Finally, the Receiver applies a correction on the unmeasured qubit. The formal version of the Plugged Rotated Remote State Preparation is given in Protocol \ref{proto:state_prep}, and it is pictured in Figure \ref{fig:st-prep} for $N = 8$ qubits.

\begin{protocolenv}[Plugged Rotated Remote State Preparation]
  \label{proto:state_prep}
\item
\begin{algorithmic} [0]

    \STATE \textbf{Sender's Input:} An angle $\theta \in \Theta$

    \STATE \textbf{Protocol:}
    \begin{itemize}
    \item The Sender samples uniformly at random $N$ angles from $\Theta$, denoted $(\theta_1, \ldots, \theta_N)$.
    \item The Sender and Receiver invoke $N$ times a Stochastically Leaky Remote State Preparation (Resource \ref{res:stoca}) for state set $\qty{\ket{+_\theta}}_{\theta \in \Theta}$ with leakage probability $p_l$. The Sender inputs $(\theta_1, \ldots, \theta_N)$ and the Receiver obtains $\bigotimes_{j \in [N]} \ket{+_{\theta_j}}$ along with leaks $(\theta_{i_1}, \ldots, \theta_{i_k})$.
    \item For each $j \neq N$, the Receiver applies $\CNOT_{N,j}$ between the qubits $N$ and $j$, with the first being the control and the second the target. It measures the target qubit $j$ in the computational basis with measurement outcome $t_j$.
    \item The Receiver sends the vector $t$ containing all the measurement outcomes to the Sender.
    \item The Sender computes $\theta' = \theta_N + \sum_{j \in [N-1]} (-1)^{t_j} \theta_j$ and sends a correction $(b, (-1)^b\theta - \theta')$ to the Receiver, who applies $\X^b\Z((-1)^b\theta - \theta')$ to the unmeasured qubit, keeping it as output.
    \end{itemize}
  \end{algorithmic}
\end{protocolenv}

\begin{figure}[ht]
  \centering
  \subfloat[The Receiver obtains the qubits and applies $\CNOT$ gates (the central qubit $N$ is the control, the rest are targets).]{
    \label{fig:st-prep-1}
    \includegraphics[width=0.4\textwidth]{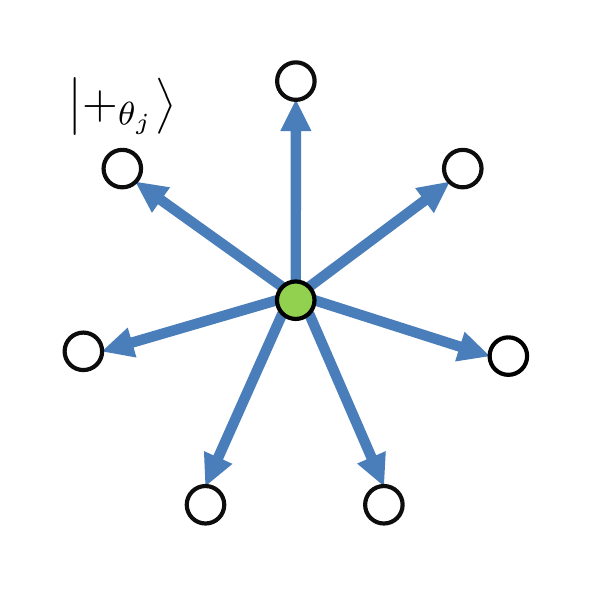}
  }
  \qquad
  \subfloat[The Receiver measures all qubits but the central one in the computational basis and gets outcomes $t_j \in \bin$.]{
    \label{fig:st-prep-2}
    \includegraphics[width=0.4\textwidth]{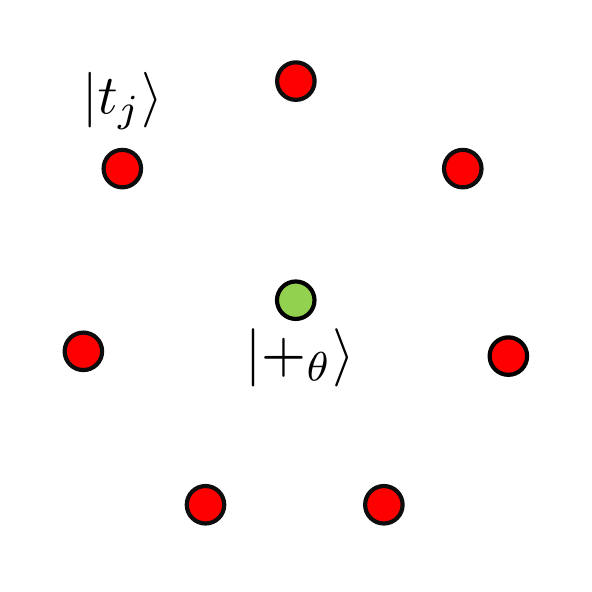}
  }
  \caption{Plugged Rotated Remote State Preparation for eight qubits. All qubits are in the state $\ket{+_{\theta_j}}$, supplied by a Stochastically Leaky Remote State Preparation Resource. The rotation angle $\theta$ on the central qubit after the protocol is completed corresponds to the one chosen by the Sender as its input.}
  \label{fig:st-prep}
\end{figure}

Notice that the Receiver needs to apply an additional correction $\X^b$ which at first glance seems unnecessary.\footnote{Indeed, if $b = 1$ the only effect of this $\X$ is to cancel the minus sign in front of $\theta$ in the other part of the correction.} This is required however because the simulator in the security proof will use teleportation to generate the correct state at the Receiver's interface. Correcting the teleportation may require the simulator to instruct the Receiver to apply a bit-flip, meaning that this instruction must also appear in the honest execution of the protocol so that the two may be indistinguishable.

\paragraph{Correctness and Security of the Protocol.}
We start by proving a lemma that will be useful when examining the security of our protocols. In essence, the Stochastically Leaky RSP resource is an imperfect realisation of the Ideal RSP functionality. Hence, any secure realisation of the former is also a secure realisation of the latter, where security decreases at most by a factor linear in the leak probability.

\begin{lemma}
\label{lem:stoc-to-perf}
Let $\mathcal{SL}_{p_l}$ and $\mathcal{I}$ be respectively the Stochastically Leaky Remote State Preparation with leak probability $p_l$ and the Ideal Remote State Preparation Resource, for the same state set $\Phi$. Then, with $\#\Phi$ denoting the cardinality of $\Phi$:
\begin{equation}
\mathcal{SL}_{p_{\mathit{leak}}} \underset{\epsilon}{\approx} \mathcal{I}, \mbox{ with } \epsilon = \left(1- \frac{1}{\#\Phi}\right)p_l.
\end{equation}
\end{lemma}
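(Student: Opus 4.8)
The plan is to exhibit, for the Stochastically Leaky RSP resource $\mathcal{SL}_{p_l}$, a simulator acting at the dishonest Receiver's interface that turns $\mathcal{I}$ into something $\epsilon$-indistinguishable from $\mathcal{SL}_{p_l}$, and then to observe that correctness is immediate. Correctness (the honest case $c=0$) is trivial: with $c=0$ the Stochastically Leaky resource never flips its biased coin, so it behaves exactly like $\mathcal{I}$, outputting $\ket{\phi}$ on the Receiver's interface with no leak. Hence $\mathcal{SL}_{p_l} \approx_0 \mathcal{I}$ when restricted to honest behaviour, so the only content is the security direction against a malicious Receiver.

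For security I would build the simulator $\sigma$ that sits between $\mathcal{I}$ and the dishonest Receiver's interface as follows. When the distinguisher (acting as the Receiver) activates the cheating interface, $\sigma$ internally flips a biased coin with success probability $p_l$, exactly as the real resource would. If the coin fails, $\sigma$ simply forwards the state $\ket{\phi}$ it receives from $\mathcal{I}$ and outputs nothing on the leak interface — this is a perfect simulation of that branch. If the coin succeeds, $\sigma$ must produce a classical description $\phi$ on the leak interface, but $\mathcal{I}$ gives it no such information; so $\sigma$ samples $\tilde\phi$ uniformly at random from $\Phi$ and outputs $\tilde\phi$ on the leak interface while still forwarding the genuine $\ket{\phi}$ on the quantum interface. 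The key observation is that, conditioned on the Sender's input $\phi$ being uniformly distributed over $\Phi$ (which is exactly how these RSP resources are used inside UBQC, and more to the point is the worst case the Sender's input is itself an external input so we must bound the distinguishing advantage for every fixed $\phi$ — see the obstacle paragraph below), the simulated leak $\tilde\phi$ agrees with the real leak $\phi$ with probability $1/\#\Phi$ and differs otherwise. On the agreement event the two experiments are identically distributed; on the disagreement event, which has probability at most $(1-1/\#\Phi)$ and only occurs at all when the coin succeeds (probability $p_l$), the distinguisher may detect the discrepancy. A standard hybrid/coupling argument then bounds the distinguishing advantage by $\Pr[\text{coin succeeds}] \cdot \Pr[\tilde\phi \neq \phi] = p_l\left(1 - \frac{1}{\#\Phi}\right)$, which is the claimed $\epsilon$.

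The main subtlety — and the step I would be most careful about — is the handling of the Sender's input $\phi$. If $\phi$ is an adversarially chosen but fixed classical value known to the distinguisher, then the simulator's uniformly random $\tilde\phi$ is distinguishable from the real leak $\phi$ with probability exactly $1 - 1/\#\Phi$, and the bound goes through with the coin factor $p_l$ in front, so the $\epsilon$ in the lemma is in fact the worst case over all $\phi$. One must check that the simulator is not required to know $\phi$ in advance (it is not: it only samples $\tilde\phi$ on the success branch, independently of $\phi$), and that the quantum interface is handled identically in both worlds (it is: both output the genuine $\ket{\phi}$ produced by $\mathcal{I}$ or $\mathcal{SL}_{p_l}$ respectively). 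Finally I would remark that this lemma is exactly the bridge that lets later sections reason purely about constructing $\mathcal{SL}_{p_l}$ (or its compromised-operation analogue) and then invoke composability to conclude a construction of the ideal $\mathcal{I}$, at the cost of an additive $(1-1/\#\Phi)p_l$ per invocation in the security error.
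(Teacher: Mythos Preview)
Your proposal is correct and follows essentially the same approach as the paper: both build the identical simulator that forwards the genuine $\ket{\phi}$ from $\mathcal{I}$, flips the $p_l$-biased coin, and on success outputs a uniformly random guess $\tilde\phi \in \Phi$, bounding the distinguishing advantage by the probability $p_l(1-1/\#\Phi)$ that the coin succeeds and the guess is wrong. The paper's proof is slightly more concise and makes explicit that the distinguisher knows $\phi$ because it supplied it at the Sender's interface, which is exactly the point you work through in your ``main subtlety'' paragraph.
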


\begin{proof}
  The correctness is perfect as both resources produce the same input and outputs in the honest case. Regarding security, the Simulator against the malicious Receiver who has chosen $c = 1$ performs a call to the Ideal Remote State Preparation Resource \ref{res:perf} and receives a quantum state $\ket{\phi}$ for $\phi \in \Phi$. It tosses a coin with probability $p_l$ of obtaining $1$. If the result is $1$, it samples uniformly at random $\tilde{\phi} \in \Phi$ and sends it to the Server. In the real world and in the ideal world, the state $\ket{\phi}$ is obtained by the Receiver. The only difference occurs when the Simulator fails to provide the correct state description to the Receiver. The transcripts are otherwise perfectly identical. This happens with probability $(1 - 1/\#\Phi)p_l$ since it must both leak test and fail to guess the correct state from set $\Phi$, meaning that $\tilde{\phi} \neq \phi$. In such case, and only in such case, a Distinguisher can then tell apart the real and ideal world implementations as it knows the value of $\phi$ for having provided it at the Sender's interface. Therefore $(1 - 1/\#\Phi)p_l$ is also the distinguishing probability between the two implementations which concludes the proof.
\end{proof}

We can now state the main result of this section, namely that Protocol~\ref{proto:state_prep} boosts exponentially the probability of not leaking the state of the unmeasured rotated qubit.

\begin{theorem}[Exponentially-Boosted Rotated State Preparation]
  \label{thm:boost}
  Given the set of states $\Phi = \qty{\ket{+_\theta}}_{\theta \in \Theta}$, Protocol \ref{proto:state_prep} perfectly constructs in the AC Framework a Stochastically Leaky Remote State Preparation Resource $\mathcal{SL}_{p_l^N}$ for the set $\Phi$ using $N$ Stochastically Leaky Remote State Preparation Resources $\mathcal{SL}_{p_l}$ for the set $\Phi$.
\end{theorem}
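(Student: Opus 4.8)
The plan is to prove both correctness and security directly, exploiting the linearity of $\Z$-rotations under the star-shaped $\CNOT$ merging. First I would establish correctness in the honest case ($c=0$ for all invocations). Writing each input qubit as $\ket{+_{\theta_j}} = \Z(\theta_j)\ket{+}$, I would track the state of the $N$ qubits through the $\CNOT_{N,j}$ gates (central qubit $N$ controlling each target $j$) followed by the computational-basis measurements of qubits $j\neq N$. The key computation is that a $\CNOT$ with control in state $\tfrac{1}{\sqrt2}(\ket0 + e^{i\alpha}\ket1)$ acting on a target $\tfrac{1}{\sqrt2}(\ket0 + e^{i\beta}\ket1)$, followed by measuring the target and obtaining outcome $t$, leaves the control qubit with an accumulated phase $e^{i(-1)^t\beta}$ relative to its prior phase (up to the global irrelevant phase), and for $t=1$ also introduces a phase on the $\ket1$ component — precisely what the formula $\theta' = \theta_N + \sum_{j\in[N-1]}(-1)^{t_j}\theta_j$ records. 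So after all measurements the central qubit is $\ket{+_{\theta'}}$ (a short induction over $j = 1,\dots,N-1$ does this), and the final correction $\X^b \Z((-1)^b\theta - \theta')$ sends it to $\Z((-1)^b\theta)\X^b\ket{+_{\theta'} }$ — but $\X^b\ket{+_{\theta'}} = \ket{+_{(-1)^b\theta'}}$ up to global phase, so this equals $\ket{+_\theta}$. (I need to be slightly careful with the order of $\X^b$ and $\Z$ and the sign bookkeeping; the extra $\X^b$ in the protocol, flagged in the footnote, is exactly what makes the sign work out and, more importantly, what makes the honest run match the simulator's teleportation-based run.)

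Next, security. The claim is that Protocol~\ref{proto:state_prep} \emph{perfectly} constructs $\mathcal{SL}_{p_l^N}$ from $N$ copies of $\mathcal{SL}_{p_l}$, so I must exhibit a simulator $\sigma$ against a malicious Receiver such that $\pi_{\mathrm{Sender}}(\mathcal{SL}_{p_l})^{\otimes N} \approx_0 \mathcal{SL}_{p_l^N}\,\sigma$ with zero distinguishing advantage. The simulator has access to the ideal resource $\mathcal{SL}_{p_l^N}$: it sets $c=1$ on the ideal resource, obtains one qubit $\ket{+_\theta}$ and, with probability $p_l^N$, the description $\theta$. It must simulate, towards the malicious Receiver, the $N$ leak interfaces (which leak $\theta_j$ independently each with probability $p_l$) and the classical correction message $(b,(-1)^b\theta-\theta')$. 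The natural strategy: the simulator picks the would-be leak pattern by tossing $N$ independent $p_l$-coins; for every invocation that leaks, it needs to hand the Receiver a value $\theta_j$. Crucially, whenever \emph{at least one} invocation does \emph{not} leak, the corresponding $\theta_j$ is a uniformly random angle unknown to the Receiver, so the simulator may sample all leaked $\theta_j$'s uniformly at random, and then it uses \emph{teleportation}: it prepares the remaining (non-leaked) qubits consistently, runs the honest merging circuit on its side with the ideal $\ket{+_\theta}$ teleported into the unmeasured slot, and computes the correction message $(b,\cdot)$ that the honest Sender would have sent — this is where the $\X^b$ in the honest protocol is needed so that the distributions coincide exactly. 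In the event that \emph{all} $N$ invocations leak (probability exactly $p_l^N$), the ideal resource has handed $\theta$ to the simulator, so it can reproduce the full honest transcript verbatim, with the leaked angles jointly consistent with $\theta' $ and hence with $\theta$. Since the leak pattern distribution, the marginal of each leaked $\theta_j$, the output qubit state, and the correction message are identically distributed in the real and simulated executions, the advantage is $0$.

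The main obstacle I expect is the security argument in the partial-leak case: I must argue that, conditioned on the (public) leak pattern with at least one non-leak, the joint distribution of (the leaked $\theta_j$'s, the measurement outcomes $t$ reported by the malicious Receiver, the correction message, the output qubit) is \emph{exactly} reproducible by the simulator given only $\ket{+_\theta}$ and no information about $\theta$. The leverage is that the sum $\theta' = \theta_N + \sum_j (-1)^{t_j}\theta_j$, viewed as a function of the at-least-one uniformly random unknown $\theta_j$, is itself uniform over $\Theta$ and independent of everything the Receiver has seen; hence $(-1)^b\theta - \theta'$ is uniform and carries no information about $\theta$ — so the simulator can sample it (or, equivalently, realize it via teleportation) without knowing $\theta$, exactly matching the real world. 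I would also need to handle adaptivity of the malicious Receiver (it chooses the reported $t$ after seeing the leaks and the quantum states), which is why the teleportation-based simulator — mirroring the honest Sender's computation on the actual reported $t$ — is cleaner than trying to sample the correction message obliviously. Once this equality of distributions is in place, composing with Lemma~\ref{lem:stoc-to-perf} is not even needed here: the construction is exact, so the $\epsilon = 0$ claim follows, and the composability of the AC framework then lets one plug this into Lemma~\ref{lem:stoc-to-perf} downstream to recover an Ideal RSP with advantage $(1-1/\#\Phi)\,p_l^N$.
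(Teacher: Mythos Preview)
Your plan matches the paper's proof almost exactly: correctness is shown by the same phase-accumulation calculation yielding $\theta' = \theta_N + \sum_{j<N}(-1)^{t_j}\theta_j$, and security by a teleportation-based simulator that exploits the fact that a single unleaked angle already renders $\theta'$ uniform and independent of $\theta$. The paper splits these into two lemmata (correctness and security against a malicious Receiver), gives an explicit simulator, and then establishes exact equality of distributions via a transcript-rewriting argument on the tuple (chosen $\theta$, safe-slot state, reported $t_s$, $\X$-correction bit, $\Z$-correction angle).

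One imprecision in your sketch is worth fixing. You say the simulator places the ideal $\ket{+_\theta}$ ``into the unmeasured slot'', i.e., at position $N$. But position $N$ may itself be among the leaked indices; in that case the simulator must hand the Receiver a state at position $N$ consistent with the leaked value $\theta_N$, so it cannot park the encrypted ideal qubit there. The paper's simulator instead picks an arbitrary non-leaked index $s$ (possibly $\neq N$), sends the encrypted state $\Z(\theta_s)\X^{b_s}\ket{+_\theta}$ at position $s$ and genuine $\ket{+_{\theta_j}}$ states at all other positions, and after receiving the Receiver's reported $\mathbf t$ returns the correction $(t_s\oplus b_s,\,-\theta')$, where $\theta'$ is computed from the simulator's own $\theta_j$'s. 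The $\CNOT$-and-measure structure then effects the teleportation from slot $s$ to slot $N$ on the Receiver's side (if it behaves honestly), and the rewriting argument shows the joint distribution matches the real world regardless of the Receiver's actual behaviour. Relatedly, the simulator does not ``run the honest merging circuit on its side'': it simply ships the $N$ qubits, waits for $\mathbf t$, and computes the classical correction; the merging circuit is entirely the (possibly malicious) Receiver's to run or not. With these two adjustments your plan coincides with the paper's proof.
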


The proof of the protocol is given by the next two lemmata that provide respectively the correctness and security of Protocol \ref{proto:state_prep}.

\begin{lemma}[Correctness of Plugged Rotated Remote State Preparation]\label{lem:cor-rot-st} Protocol \ref{proto:state_prep} is perfectly correct.
\end{lemma}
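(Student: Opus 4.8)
The plan is to track the joint state of the $N$ qubits through the protocol and verify that, conditioned on the measurement outcomes $t = (t_1,\dots,t_{N-1})$, the unmeasured qubit $N$ ends up in the state $\ket{+_\theta}$ after the Receiver's final correction. First I would write the initial state as $\bigotimes_{j\in[N]}\ket{+_{\theta_j}}$. Since the leaks in the honest case carry no side effect on the quantum state (Resource~\ref{res:stoca} always outputs the correct $\ket{+_{\theta_j}}$'s), the leak bits can be ignored for correctness. Then I would compute the action of the star of $\CNOT$ gates $\prod_{j\neq N}\CNOT_{N,j}$ on this product state.

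The key computational step is the commutation of $\CNOT$ with $Z$-rotations on the control. The cleanest way is to expand the control qubit $N$ in the computational basis: $\ket{+_{\theta_N}} = \tfrac{1}{\sqrt2}(\ket0 + e^{i\theta_N}\ket1)$, and note that $\CNOT_{N,j}$ acts as the identity on the $\ket0$ branch of qubit $N$ and as $\X_j$ on the $\ket1$ branch. Applying the full star of $\CNOT$'s therefore gives
\begin{equation}
\frac{1}{\sqrt2}\Bigl(\ket0_N\bigotimes_{j\neq N}\ket{+_{\theta_j}} + e^{i\theta_N}\ket1_N\bigotimes_{j\neq N}\X\ket{+_{\theta_j}}\Bigr).
\end{equation}
Since $\X\ket{+_{\theta_j}} = \ket{+_{-\theta_j}}$ up to an (absorbable) global phase $e^{i\theta_j}$, after the computational-basis measurement of the target qubits yielding outcomes $t_j$, each target qubit $j$ projects onto $\ket{t_j}$, and the residual amplitudes pick up factors that account for whether the $\ket0$ or $\ket1$ branch of qubit $N$ survives. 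Carefully collecting the phases, the post-measurement state of qubit $N$ is (up to normalization and global phase) $\ket0_N + e^{i(\theta_N + \sum_{j\in[N-1]}(-1)^{t_j}\theta_j)}\ket1_N$, i.e.\ $\ket{+_{\theta'}}$ with $\theta' = \theta_N + \sum_{j\in[N-1]}(-1)^{t_j}\theta_j$, exactly the quantity the Sender computes.

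Finally I would apply the Receiver's correction $\X^b\Z((-1)^b\theta - \theta')$ to $\ket{+_{\theta'}}$: if $b=0$ this is $\Z(\theta-\theta')\ket{+_{\theta'}} = \ket{+_\theta}$; if $b=1$ one checks $\X\Z(-\theta-\theta')\ket{+_{\theta'}} = \X\ket{+_{-\theta}} = \ket{+_\theta}$ up to global phase, using again $\X\ket{+_{\alpha}} = \ket{+_{-\alpha}}$. Hence the output is $\ket{+_\theta}$ regardless of the $t_j$'s and regardless of $b$, so the protocol reproduces exactly the output of the Ideal RSP Resource~\ref{res:perf}, which is what perfect correctness demands. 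The main obstacle is purely bookkeeping: keeping track of the global and relative phases $e^{i\theta_j}$ that appear when rewriting $\X\ket{+_{\theta_j}}$, and making sure the phases that depend on the (random) measurement outcomes are precisely cancelled by the Sender-computed $\theta'$; once the branch structure of the $\CNOT$ star is written explicitly, the verification is routine.
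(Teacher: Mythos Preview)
Your proposal is correct and follows essentially the same approach as the paper: both track the state through the $\CNOT$ star and computational-basis measurements to obtain $\ket{+_{\theta'}}$ with $\theta' = \theta_N + \sum_{j}(-1)^{t_j}\theta_j$, then verify the final $\X^b\Z((-1)^b\theta-\theta')$ correction yields $\ket{+_\theta}$. The only organizational difference is that the paper isolates a single $\CNOT$-plus-measurement step (showing it effects $\Z((-1)^{t}\hat\theta)$ on the control) and then iterates, whereas you expand the control qubit in the computational basis and apply all $\CNOT$'s at once before projecting; the resulting phase bookkeeping is identical.
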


\begin{proof} The state of the central qubit after an honest execution of Protocol~\ref{proto:state_prep} before the correction sent by the Sender is $\ket{+_{\theta'}}$ with:
  \begin{equation}
    \label{eq:theta} \theta' = \theta_N + \sum\limits_{j \in [N-1]} (-1)^{t_j} \theta_j.
  \end{equation}
  It is sufficient to prove this for a pure state $\ket{\phi} = \alpha\ket{0} + \beta\ket{1}$ as control. Let $\ket{+_{\hat{\theta}}}$ with $\hat{\theta} \in \Theta$ be a rotated quantum state.  We apply a $\CNOT$ gate with $\ket{\phi}$ as control and $\ket{+_{\hat{\theta}}}$ as target, followed by a measurement of this second qubit in the computational basis.  Let~$t \in \bin$ be the measurement result. After tracing out the second qubit post-measurement, the system is in the following state:
  \begin{equation}
    \begin{split}
      \sqrt{2}\bra{0}_2 \X_2^t \CNOT_{12}\ket{\phi}\ket{+_{\hat{\theta}}}
      & = \bra{0}_2 \X_2^t \qty(\alpha \ket{00} + \alpha e^{i\hat{\theta}}\ket{01} + \beta \ket{11} + \beta e^{i\hat{\theta}} \ket{10}) \\
      & = \bra{0}_2(\alpha\ket{0} + \beta e^{i\hat{\theta}} \ket{1})\ket{t} + e^{i\hat{\theta}}\bra{0}_2(\alpha\ket{0} + \beta e^{-i\hat{\theta}} \ket{1})\ket{t \oplus 1} \\
      & = \Z(\hat{\theta})\ket{\phi}\braket{0}{t} + e^{i\hat{\theta}}\Z(-\hat{\theta})\ket{\phi}\braket{0}{t \oplus 1}
    \end{split}
  \end{equation}
  Therefore, the result of this single step is $\Z((-1)^t\hat{\theta})\ket{\phi}$ up to a global phase. Replacing the result above in the sequence of $\CNOT$'s and measurements performed by the Receiver where the control is qubit $N$ and the targets are qubits $j \neq N$ yields the desired value for $\theta'$.  Finally, the rotation correction $(-1)^b\theta - \theta'$ sent by the Sender, along with the $\X^b$, transform the value of the final state into $\ket{+_\theta}$.
\end{proof}

Furthermore, the computed angle $\theta'$ from Equation~\ref{eq:theta} is random from the point of view of the Receiver as long as at least a single angle is unknown.  We now show that this indeed provides perfect security to our protocol against an unbounded malicious Receiver.

\begin{lemma}[Security of Plugged Rotated Remote State Preparation]\label{lem:sec-rot-st} Protocol \ref{proto:state_prep} is perfectly secure against a malicious unbounded Receiver.
\end{lemma}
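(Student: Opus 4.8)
The plan is to exhibit a simulator $\sigma$ that interacts only with the target resource $\mathcal{SL}_{p_l^N}$ (for the set $\Phi$) and the malicious Receiver, and show that it reproduces the real-world transcript perfectly. The key observation, already flagged in the paragraph preceding Lemma~\ref{lem:cor-rot-st}, is that the angle $\theta' = \theta_N + \sum_{j\in[N-1]}(-1)^{t_j}\theta_j$ is uniformly random in $\Theta$ from the Receiver's viewpoint as long as at least one of the $\theta_j$ is unknown to it; and by construction of Protocol~\ref{proto:state_prep}, running $N$ copies of $\mathcal{SL}_{p_l}$, all $N$ angles leak simultaneously with probability exactly $p_l^N$, and otherwise at least one does not. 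This is precisely the leak behaviour of $\mathcal{SL}_{p_l^N}$, so the simulator can faithfully relay the "did everything leak" coin flip.

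Concretely, I would have the simulator act as follows. When the Receiver signals $c=1$, $\sigma$ forwards $c=1$ to $\mathcal{SL}_{p_l^N}$ and obtains the output qubit in state $\ket{+_\theta}$ together with, on the leak interface, either nothing or the description $\theta$. In either case $\sigma$ internally samples uniformly at random $N$ angles $\tilde\theta_1,\dots,\tilde\theta_N \in \Theta$ conditioned on $\tilde\theta_N + \sum_{j<N}\tilde\theta_j = \theta$ being false in general --- more cleanly: it samples $\tilde\theta_1,\dots,\tilde\theta_{N-1}$ uniformly, prepares the $N-1$ target qubits in states $\ket{+_{\tilde\theta_j}}$, and uses the output qubit from the resource in place of qubit $N$, feeding all $N$ qubits to the Receiver through the simulated instances of $\mathcal{SL}_{p_l}$. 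If the single leak coin of $\mathcal{SL}_{p_l^N}$ came up $1$, $\sigma$ hands the Receiver the full list $(\tilde\theta_1,\dots,\tilde\theta_{N-1},\theta_N)$ where $\theta_N$ is whatever is consistent with $\theta$ and the $\tilde\theta_j$; if it came up $0$, $\sigma$ leaks nothing on those sub-interfaces. The Receiver then performs its (arbitrary) CNOT-and-measure deviation and returns a vector $t$; $\sigma$ must now produce the Sender's correction message $(b, (-1)^b\theta - \theta')$ without knowing $\theta$ in the non-leak case. Here $\sigma$ uses gate teleportation: since it holds the qubit $\ket{+_\theta}$ supplied by the resource, it can teleport the correct rotated state onto whatever qubit the Receiver retains, computing $b$ and the residual $\Z$-angle from the Bell-measurement outcomes exactly as a correction, which is why Protocol~\ref{proto:state_prep} was deliberately written with the redundant-looking $\X^b$ correction.

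The main thing to verify carefully is that the joint distribution of (the $N$ qubit states handed to the Receiver, the leak pattern, the Sender's correction message) is identical in the real and ideal executions. For the leak pattern this is the $p_l^N$ vs. "all $N$ leak" matching above. For the correction message, in the honest/real world $\theta'$ is a deterministic function of the $\theta_j$ and $t$, and the correction steers the output to $\ket{+_\theta}$; in the ideal world the teleportation-based correction steers the output to the same $\ket{+_\theta}$, and one checks the announced $(b,\cdot)$ has the same distribution because $\theta$ (and hence $(-1)^b\theta$) is uniform and independent of the Receiver's view whenever not all angles leaked --- and when all angles leak, both worlds reveal the same consistent data. The only potential obstacle I anticipate is bookkeeping the conditioning: ensuring that in the leak case the tuple $(\tilde\theta_1,\dots,\tilde\theta_{N-1},\theta_N)$ the simulator reveals is distributed exactly as a uniformly random tuple summing (with the signs $(-1)^{t_j}$, which are not yet known at leak time) to the right thing. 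This is handled by noting the leak happens before the Receiver chooses $t$, so at leak time the honest tuple is simply uniform over $\Theta^N$; the simulator samples uniformly over $\Theta^N$ with the single constraint fixing $\theta$, which is the same distribution. Hence the two resources are perfectly indistinguishable and the construction in Theorem~\ref{thm:boost} is exact, with no $\epsilon$ loss.
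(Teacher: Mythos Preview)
Your proposal captures the right high-level intuition (match the leak probability $p_l^N$, use a teleportation-like trick so the simulator can issue the correction without knowing $\theta$), but the simulator as you describe it is distinguishable from the real protocol for at least two concrete reasons.

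\textbf{Leak pattern.} In the real world the Receiver chooses a vector $\mathbf c\in\{0,1\}^N$ and each of the $N$ instances of $\mathcal{SL}_{p_l}$ independently leaks with probability $p_l$. Hence the Receiver can observe partial leak patterns (e.g.\ positions $1$ and $3$ leak, position $2$ does not). Your simulator, by contrast, either leaks everything (when the single $\mathcal{SL}_{p_l^N}$ coin comes up $1$) or leaks nothing. A Distinguisher that simply checks whether a strict nonempty subset of positions leaked tells the two worlds apart with advantage $\Theta(p_l)$. The paper's simulator explicitly flips $N$ independent coins (resampling in the degenerate case) and then chooses a \emph{safe} index $s$ among the positions that did \emph{not} leak.

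\textbf{Placement of the resource qubit and the correction.} You always put the resource's output $\ket{+_\theta}$ at position $N$. But position $N$ can leak while some other position does not; your simulator would then have to reveal an angle for qubit $N$ without knowing $\theta$, which it cannot do consistently. More seriously, your teleportation step is self-contradictory: you first send $\ket{+_\theta}$ to the Receiver as qubit $N$, then claim the simulator still ``holds the qubit $\ket{+_\theta}$'' and performs a Bell measurement. There is no Bell measurement in the protocol, and once the qubit is sent, the simulator has nothing left to measure. The paper resolves this by embedding an \emph{encrypted} copy $\Z(\theta_s)\X^{b_s}\ket{+_\theta}$ at the safe index $s$; the Receiver's own $\CNOT$-and-measure operations then act as the teleportation channel, and the correction $(t_s\oplus b_s,\,-\theta')$ is computable from the simulator's own randomness and the returned $t$, without $\theta$. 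The subsequent tabular comparison of real/ideal transcripts is where the actual indistinguishability is established; your final paragraph gestures at this but does not carry out the reduction.
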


\begin{proof}
  We will construct here a Simulator against an adversarial Receiver, who expects to receive $N$ qubits, a number of leaks corresponding to the received states, and a final correction after transmitting the measurement results. The Simulator has single-query oracle access to the Stochastically Leaky Remote State Preparation Resource \ref{res:perf} for state set $\qty{\ket{+_\theta}}_{\theta \in \Theta}$ with leak probability $p_l^N$. It receives a state from this resource, potentially without the corresponding classical description, and must make the Receiver accept this state at the end of the interaction in both cases. The actions of this Simulator are described in Simulator \ref{sim:rot}.

 \begin{simulatorenv}[Malicious Receiver]
  \label{sim:rot}
  \item
  \begin{enumerate}
    \item It receives from the Receiver a string $\mathbf{c} \in \bin^N$ indicating whether the Receiver wishes to receive a leak from the Stochastically Leaky Random State Preparation Resources preparing the $N$ rotated states.
    \item If $\mathbf{c} = 1^N$ (i.e.\ the Receiver has requested leaks from all $N$ Resources $\mathcal{SL}_{p_l}$), it sends $c = 1$ to the Resource $\mathcal{SL}_{p_l^N}$. Otherwise it sends $c = 0$.
      \begin{itemize}
      \item If $c = 1$ and the leak is accepted by the Resource $\mathcal{SL}_{p_l^N}^C$:
        \begin{enumerate}
        \item The Simulator receives from the Resource a state $\ket{+_\theta}$ along with the corresponding angle $\theta$.
        \item It emulates the $N$ Resources $\mathcal{SL}_{p_l}$ by producing uniformly at random $N$ states $\ket{+_{\theta_j}}$ and leaking all the values $\theta_j$.
        \item It performs the rest of the protocol as an honest Sender would, using the value $\theta$ it received above, after which it halts.
        \end{enumerate}
      \item Otherwise, it simply receives a state $\ket{+_\theta}$ without the associated value of $\theta$ and continues the simulation below.
      \end{itemize}
    \item It then emulates the behaviour of the $N$ Resources $\mathcal{SL}_{p_l}$:
      \begin{enumerate}
      \item For each value $j$ such that $\mathbf{c}_j = 1$, the Simulator flips a biased coin with probability $p_l$ of getting $1$. If $\mathbf{c} = 1^N$, this step is repeated until at least a single coin does not return $1$ (since in this branch of the simulation not all states leak). Let $\mathbf{leak} \varsubsetneq [N]$ be the set of indices $j \in [N]$ such that $\mathbf{c}_j = 1$ and the coin-toss result is $1$.
      \item It samples uniformly at random a safe index $s \in_R [N] \setminus \mathbf{leak}$ and sends quantum states to the Receiver:
        \begin{itemize}
        \item For indices $j \neq s$ is samples uniformly at random $\theta_j \in_R \Theta$ and sends $\ket{+_{\theta_j}}$;
        \item For index $s$, it samples uniformly at random $\theta_s \in_R \Theta$ and $b_s \in_R \bin$, and sends an encrypted version $\Z(\theta_s)\X^{b_s}(\ket{+_\theta})$ of the state received from Resource $\mathcal{SL}_{p_l^N}^C$.
        \end{itemize}
      \item For all leaked indices $j \in \mathbf{leak}$, it sends $\theta_j$ to the Receiver.
      \end{enumerate}
    \item It receives from the Receiver a bit-string of measurement results $\mathbf{t} \in [N-1]$.
    \item After extending the bit-string $\mathbf{t}$ such that $\mathbf{t}_N = 0$, it computes $\theta'$ using Equation \ref{eq:theta} and sends the correction $(t_s \oplus b_s, -\theta')$ to the Receiver and halts.
    \end{enumerate}
  \end{simulatorenv}

  We can now prove that the view of any Distinguisher in the ideal case interacting with this Simulator is perfectly equal to its view in the real scenario when interacting with an honest Sender.

  The Simulator perfectly emulates the Stochastically Leaky Remote State Preparation Resources $\mathcal{SL}_{p_l}$ with exactly the same probability of leaking. In fact, the leak is accepted with probability $p_l^N$, which is exactly the probability that all $N$ Resources $\mathcal{SL}_{p_l}$ leak. Otherwise there exists necessarily an index for which the angle value has not been leaked and the Simulator is free to chose this position at random. Therefore this does not provide any distinguishing advantage.

  In the fully leaked case, the Simulator perfectly replicates the behaviour of the honest Sender since it has access to its input, therefore the real and ideal setting are perfectly indistinguishable. We henceforth focus on the case where at least one index has not leaked.

  Since the indistinguishability should hold regardless of which states are leaked, we can restrict our analysis to the worst case for the Simulator where all but one state have been leaked with the Receiver having requested all leaks. This gives the most power to the Distinguisher since a Distinguisher with more information can always chose to not use part of its knowledge to mimic Distinguishers with less information. We denote by $s$ the index of the state that has not leaked.

  In that setting, the Distinguisher knows the value of all $\theta_j$ for $j \neq s$ both in the real and ideal scenarii. Before sending the values for the measurement outcomes, the Distinguisher has in its possession -- not counting the leaked states for which it has perfect knowledge -- in the real case a state $\ket{+_{\theta_s}}$ and in the ideal case a state $\ket{+_{(-1)^{b_s}\theta + \theta_s}}$. This is indistinguishable since both are perfectly mixed states from the point of view of the Distinguisher as it does not know $\theta_s$.

  After sending the bit-string $\mathbf{t}$, regardless of how they were chosen, the Distinguisher receives a bit and an angle corresponding to the corrections chosen by the Sender or Simulator. In the first case this is equal to $(b, (-1)^b\theta - \theta')$ and in the second case $(t_s \oplus b_s, -\theta')$ with $b$ being chosen uniformly at random and $\theta'$ being computed in the exact same way in both settings. By knowing the leaked values of $\theta_j$ for $j \neq s$ and their associated $t_j$, we can rewrite these values to make them independent of the leaks. These angles and the corresponding states $\ket{+_{\theta_j}}$ are then independent of the other values and do not help the Distinguisher since their distribution is identical in both cases. The remaining parameters in the real case and ideal cases are respectively the chosen angle by the client, the received safe state, the corresponding measurement outcome in the gadget, the $\X$-correction bit and the $\Z$-correction angle:
  \begin{center}
    \begin{tabular}{r|rrrrrrr}
      Real-world  &$\Bigl($ & $\theta$; & $\ket{+_{\theta_s}}$; & $t_s$; & $b$; & $(-1)^b\theta - (-1)^{t_s}\theta_s$ & $\Bigr)$ \\
      Ideal-world &$\Bigl($ & $\theta$; & $\ket{+_{(-1)^{b_s}\theta + \theta_s}}$; & $t_s$; &$b_s \oplus t_s$; & $- (-1)^{t_s}\theta_s$ & $\Bigr)$
    \end{tabular}
  \end{center} We can now transform these by applying column-wide transformations, and here multiply the final angle by $(-1)^{t_s}$ and use this angle to apply a rotation to the state. This transforms the above parameter values into:
  \begin{center}
    \begin{tabular}{r|rrrrrrr}
      RW&$\Bigl($ & $\theta$; & $\ket{+_{(-1)^{b \oplus t_s}\theta}}$; & $t_s$; & $b$; & $(-1)^{b \oplus t_s}\theta - \theta_s$ & $\Bigr)$ \\
      IW&$\Bigl($ & $\theta$; & $\ket{+_{(-1)^{b_s}\theta}}$; & $t_s$; &$b_s \oplus t_s$; & $- \theta_s$ & $\Bigr)$
    \end{tabular}.
  \end{center}
  Note that all these rewriting operations are reversible and can be done by the Distinguisher. The amount of information that may be deduced from the original real or ideal execution is therefore equal to that which may be extracted from these rewritten values. Finally we see that in both cases, the value for $\theta_s$ only appears in the final term, and since it is chosen uniformly at random both final terms follow the same distribution, meaning that they give no distinguishing advantage. They can therefore be omitted from the distribution and we get:
  \begin{center}
    \begin{tabular}{r|rrrrrrr}
      RW&$\Bigl($ & $\theta$; & $\ket{+_{(-1)^{b \oplus t_s}\theta}}$; & $t_s$; & $b$; & $\Bigr)$ \\
      IW&$\Bigl($ & $\theta$; & $\ket{+_{(-1)^{b_s}\theta}}$; & $t_s$; &$b_s \oplus t_s$; & $\Bigr)$
    \end{tabular}
  \end{center}
  Since $b$ in the first distribution is a bit sampled uniformly at random, we can substitute by $b\oplus t_s$ without changing the distribution. We finally arrive at
  \begin{center}
    \begin{tabular}{r|rrrrrrr}
      RW&$\Bigl($ & $\theta$; & $\ket{+_{(-1)^{b}\theta}}$; & $t_s$; & $b \oplus t_s$; & $\Bigr)$ \\
      IW&$\Bigl($ & $\theta$; & $\ket{+_{(-1)^{b_s}\theta}}$; & $t_s$; &$b_s \oplus t_s$; & $\Bigr)$
    \end{tabular}
  \end{center} These two distributions are clearly identical given that both $b$ and $b_s$ are uniformly random bits, which concludes the proof.
\end{proof}

We now obtain the following corollary by combining Theorem \ref{thm:boost} and Lemma \ref{lem:stoc-to-perf} to first construct a good Stochastically Leaky Remote State Preparation Resource from $N$ bad ones, and then to construct from the good one a Leak Free Remote State Preparation Resource for rotated states.

\begin{corollary}
  \label{cor:perf-st}
  Given the set of states $\Phi = \qty{\ket{+_\theta}}_{\theta \in \Theta}$, Protocol \ref{proto:state_prep} $\epsilon$-constructs in the AC Framework an Ideal Remote State Preparation Resource for the set $\Phi$ using $N$ Stochastically Leaky Remote State Preparation Resources $\mathcal{SL}_{p_l}$ for the set $\Phi$, with $\epsilon = \frac{7}{8}p_l^N$.
\end{corollary}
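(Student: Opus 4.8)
The plan is to derive the statement purely by composing the two results already established, Theorem~\ref{thm:boost} and Lemma~\ref{lem:stoc-to-perf}, together with the composability of the Abstract Cryptography framework; no fresh protocol analysis is required.

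First I would invoke Theorem~\ref{thm:boost}: for the state set $\Phi = \qty{\ket{+_\theta}}_{\theta \in \Theta}$, Protocol~\ref{proto:state_prep} perfectly (i.e.\ with distinguishing advantage $0$) constructs the Stochastically Leaky Remote State Preparation Resource $\mathcal{SL}_{p_l^N}$ from $N$ copies of $\mathcal{SL}_{p_l}$. This discharges the amplification step, turning $N$ leaky resources each with leak probability $p_l$ into a single leaky resource with leak probability $p_l^N$. Then I would apply Lemma~\ref{lem:stoc-to-perf} to the resource $\mathcal{SL}_{p_l^N}$: the identity converter on the (always honest) Sender side $\epsilon'$-constructs the Ideal Remote State Preparation Resource $\mathcal{I}$ from $\mathcal{SL}_{p_l^N}$ for the same state set, with $\epsilon' = \qty(1 - \tfrac{1}{\#\Phi}) p_l^N$. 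The only arithmetic input here is that $\#\Phi = \#\Theta = 8$, since $\Theta = \qty{k\pi/4 : k = 0, \dots, 7}$ and $\theta \mapsto \ket{+_\theta}$ is injective on $\Theta$, giving $\epsilon' = \tfrac{7}{8} p_l^N$.

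Finally I would chain the two constructions via the composition theorem of the AC framework: composing a $0$-construction of $\mathcal{SL}_{p_l^N}$ from $N$ copies of $\mathcal{SL}_{p_l}$ with a $\tfrac{7}{8}p_l^N$-construction of $\mathcal{I}$ from $\mathcal{SL}_{p_l^N}$ yields a $\qty(0 + \tfrac{7}{8}p_l^N)$-construction of $\mathcal{I}$ from $N$ copies of $\mathcal{SL}_{p_l}$, which is exactly the claimed bound $\epsilon = \tfrac{7}{8}p_l^N$. Both constituent statements hold against unbounded distinguishers and cover correctness as well as security, so the composite inherits both. There is no genuine obstacle beyond bookkeeping; the only point deserving a moment's care is that the simulator supplied by Lemma~\ref{lem:stoc-to-perf} for a malicious Receiver must be composed with the (trivial) absence of a malicious Sender interface, which is immediate since the Sender is honest throughout in the RSP setting, and that the simulator from Lemma~\ref{lem:sec-rot-st} for Protocol~\ref{proto:state_prep} already treats the Receiver as arbitrarily malicious, so the two malicious-Receiver simulators compose in the expected way.
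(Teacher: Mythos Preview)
Your proposal is correct and follows exactly the same approach as the paper: the authors also obtain the corollary by first applying Theorem~\ref{thm:boost} to boost $N$ copies of $\mathcal{SL}_{p_l}$ into a single $\mathcal{SL}_{p_l^N}$ and then invoking Lemma~\ref{lem:stoc-to-perf} (with $\#\Phi = 8$) to pass to the ideal resource, composing the errors in the AC framework.
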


The result above shows that any device that leaks information about these rotated states with constant probability $p_l$ is equivalent to a device that is exponentially close to perfect using only a linear number of repetitions and simple operations on the Receiver's side.
 
\subsection{Suppressing Faults via Verifier-side Gate Compilation}
\label{sec:rsp-ac}
Our goal is to construct an RSP resource using Stochastically Compromised Preparations, Gates and Measurements (Resource~\ref{res:scpg}). At a high-level it will rely on preparing a $\ket +$ state and applying $\Z(\theta)$ to it. Yet, as Resource~\ref{res:scpg} is possibly faulty, we will need to encode the quantum information into a quantum error correcting code and to require some additional properties, akin to those of fault-tolerance, to be satisfied in order to ensure both correctness and security of the constructed RSP resource.

To this end:
\begin{enumerate}
  \item In \S~\ref{subsubsec:def_targ}, we define the target RSP as a Level-$k$ RSP (Resource~\ref{res:k-rsp}) and show that is it secure by design, thereby able to replace the initial RSP (Resource~\ref{res:rsp}) when dealing with encoded computations;
  \item In \S~\ref{subsubsec:recursive_sim}, we recursively define simulations for preparations, gates and measurements with increasing concatenation levels of the $[[15,1,3]]$ quantum Reed-Muller code using Stochastically Compromised Preparations, Gates and Measurements;
  \item In \S~\ref{subsubsec:safe_properties}, we give properties of these simulations which will make them safe to use in RSP protocols which suffer from faults and leaks;
  \item In \S~\ref{subsubsec:threshold_thm}, we show that, for a compromise probability $p_c < p_0$, as the level-$k$ of encoding increases, these simulations are \emph{accurate} and \emph{private} with probability doubly exponentially close to 1 --- i.e. they perform the expected transformation (accuracy) on the logical qubit while possibly affecting the syndrome qubits in a way that is independent of the chosen value of $\theta$ parametrizing $\Z(\theta)$ gates (privacy). This can in turn be combined into a threshold theorem for accurate and private computations with compromised gates;
  \item In \S~\ref{subsubsec:level_k_rsp}, we use the previous theorem to construct a level-$k$ RSP with negligible error with respect to $k$.
\end{enumerate}

The reader familiar with the threshold theorem for fault-tolerance will note that the steps outlined above and expanded below follow quite closely~\cite{AGP06quantum}. This is no coincidence as our work indeed shows how to perform secure delegated fault-tolerant quantum computations. Yet, because faults can be converted into leaks when errors are commuted through classically controlled gates, it was necessary to reprove correctness --- i.e. standard fault-tolerance --- in a slightly different language so that it is also adapted to the requirements of the AC framework. Except this, we tried to minimize the unnecessary differences to make it easier to see where security forces to depart from standard fault-tolerance.

\subsubsection{Definition of the Target Resource}
\label{subsubsec:def_targ}

As explained earlier, our goal is to produce logically encoded $\ket{+_{\theta}}$ states that can then be sent to the Prover in a protocol for verifying a quantum computation. We have also shown that the problem with logically encoded information is that the syndrome can carry away information about the secret value $\theta$. These two facts motivate the following target RSP resource: 

\begin{resourceenv}[Level-$k$ Remote State Preparation]
  \label{res:k-rsp}
  \item
  \begin{algorithmic}[0]

    \STATE \textbf{Public Information:} The set $\Theta = \{j\pi/4. \  j = 0, \ldots 7\}$, and an integer $k$ representing the concatenation level of the $[[15,1,3]]$ quantum Reed-Muller code.

    \STATE \textbf{Sender's Input:} A value $\theta \in \Theta$.

    \STATE \textbf{Receiver's Input:} A bit $c \in \{0, 1\}$ indicating whether it wants to cheat, set to $0$ in the honest case.

    \STATE \textbf{Computation by the Resource:}
    \begin{itemize}
    \item If $c = 0$ outputs the logical $\ket {+_{\theta}}$ state of the concatenated code at the Receiver's interface.
    \item If $c = 1$:
      \begin{itemize}
      \item Gets a quantum state $\sigma$ of $15^k-1$ qubits from the Receiver.
      \item Performs the encoding operation into the code using the input state $\ketbra{+_{\theta}}\otimes \sigma$.
      \item Outputs this state to the Receiver's interface.
      \end{itemize}
    \end{itemize}
\end{algorithmic}
\end{resourceenv}

This resource clearly fulfils our expectations in terms of correctness and security. The output when there is no cheating is the desired $\ket{+_{\theta}}$ state with a syndrome that does not carry any information about $\theta$ because it is not correlated with it. Whenever the Receiver cheats, it does not have more about $\theta$ than it would have from receiving $\ket{+_{\theta}}$. Yet, it can still chose the value of the syndrome that it receives from it, allowing to model situations where noise induces non trivial syndromes, albeit being again uncorrelated with $\theta$.

\subsubsection{Recursive Simulation Protocols}
\label{subsubsec:recursive_sim}

\renewcommand{\arraystretch}{1.25}
\begin{table}[ht]
\centering
\begin{tabular}{p{4.5cm}p{4cm}p{6.5cm}}
\toprule
\textbf{Building block in standard fault-tolerance} & \textbf{Analogue in secure fault-tolerance} & \textbf{Description} \\
\midrule
Error correction ($\mathsf{EC}$) & ---\emph{same building block}--- \\
Gate ($\mathsf{Ga}$) & Safe gate ($\mathsf{SafeGa}$) & In the case of unparametrised gates, $\mathsf{SafeGa}$ is equal to $\mathsf{Ga}$. The difference arises only in the case of parametrised gates with secret parameters. In the proposed protocol, the only different $\mathsf{SafeGa}$'s are safe $\Z$-rotations (see Def.~\ref{def:safe-z-theta}). \\
Rectangle ($\mathsf{Rec}$) & Safe rectangle ($\mathsf{SafeRec}$) & $\mathsf{SafeRec}$'s are recursively constructed from $\mathsf{SafeGa}$'s and $\mathsf{EC}$'s (see Def.~\ref{def:saferec}). \\
Extended rectangle \newline ($\mathsf{exRec}$) & Extended safe rectangle ($\mathsf{exSafeRec}$) & $\mathsf{exSafeRec}$'s consist of $\mathsf{SafeRec}$'s and directly preceding $\mathsf{EC}$'s (see Def.~\ref{def:exsaferec}), and are constructed to be overlapping with neighboring $\mathsf{exSafeRec}$'s. \\
\bottomrule
\end{tabular}
\caption{An overview of the building blocks of Recursive Simulation Protocols, in standard fault-tolerance and \emph{secure} fault-tolerance. All building blocks are recursively defined through different levels; a level-$k$ version of a building block is denoted by adding a "$k-$" prefix.}
\label{table:building-blocks-recursive-protocols}
\end{table}

To proceed with the definition of Recursive Simulation Protocols, we first define Safe Rectangles ($\mathsf{SafeRec}$), which are the analogues of Rectangles ($\mathsf{Rec}$) in standard fault-tolerance. Similarly, we proceed with the definition of extended $\mathsf{SafeRec}$ ($\mathsf{exSafeRec}$) that match $\mathsf{exRec}$. The $\mathsf{Safe}$ qualifier will be justified later when we will prove that these not only provide better accuracy but also better privacy when they are concatenated. Following~\cite{AGP06quantum}, we provide the explicit definitions for gates. Those for preparations and measurements can easily be recovered. Recall that all level-$0$ operations are performed using Stochastically Compromised Preparations, Gates, and Measurements described in Resource~\ref{res:scpg}.

\begin{definition}[$1-\mathsf{SafeRec}$]
  A $1-\mathsf{SafeRec}$ for simulating a $0-\mathsf{Ga}$ consists of the $1-\mathsf{Ga}$ followed by two consecutive $1-\mathsf{EC}$ for each output block of the $1-\mathsf{Ga}$.
\end{definition}

\begin{definition}[$k-\mathsf{SafeRec}$]\label{def:saferec}
  A $k-\mathsf{SafeRec}$ is obtained from a $(k-1)-\mathsf{SafeRec}$ by replacing each $0-\mathsf{Ga}$ by its corresponding $1-\mathsf{SafeRec}$.
\end{definition}

\begin{definition}[$k-\mathsf{exSafeRec}$]\label{def:exsaferec}
  A \emph{$k-\mathsf{exSafeRec}$} is a $k-\mathsf{SafeRec}$ augmented with the $k-\mathsf{EC}$s from the $k-\mathsf{SafeRec}$ that precede it.\footnote{Note that a state preparation does not have a $k-\mathsf{EC}$ preceding it, therefore the $k-\mathsf{exSafeRec}$ of a preparation step is simply the $k-\mathsf{SafeRec}$.}
\end{definition}

Figure~\ref{fig:exsaferec} depicts two $1-\mathsf{SafeRec}$'s and the composition of one $1-\mathsf{exSafeRec}$.

\begin{figure}
  \centering
  \begin{picture}(0,0)\includegraphics{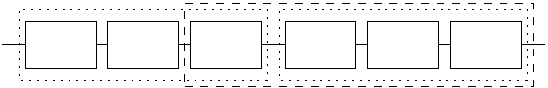}\end{picture}\setlength{\unitlength}{4144sp}\begin{picture}(4164,654)(1249,332)
\put(1486,614){\makebox(0,0)[lb]{\smash{\fontsize{8}{9.6}\usefont{T1}{ptm}{m}{n}{\color[rgb]{0,0,0}$1-\mathsf{Ga}$}}}}
\put(2116,614){\makebox(0,0)[lb]{\smash{\fontsize{8}{9.6}\usefont{T1}{ptm}{m}{n}{\color[rgb]{0,0,0}$1-\mathsf{EC}$}}}}
\put(2746,614){\makebox(0,0)[lb]{\smash{\fontsize{8}{9.6}\usefont{T1}{ptm}{m}{n}{\color[rgb]{0,0,0}$1-\mathsf{EC}$}}}}
\put(3466,614){\makebox(0,0)[lb]{\smash{\fontsize{8}{9.6}\usefont{T1}{ptm}{m}{n}{\color[rgb]{0,0,0}$1-\mathsf{Ga}$}}}}
\put(4096,614){\makebox(0,0)[lb]{\smash{\fontsize{8}{9.6}\usefont{T1}{ptm}{m}{n}{\color[rgb]{0,0,0}$1-\mathsf{EC}$}}}}
\put(4726,614){\makebox(0,0)[lb]{\smash{\fontsize{8}{9.6}\usefont{T1}{ptm}{m}{n}{\color[rgb]{0,0,0}$1-\mathsf{EC}$}}}}
\end{picture}   \caption{Short-dashed rectangles correspond to a $1-\mathsf{SafeRec}$, while the long-dashed one corresponds to a $1-\mathsf{exSafeRec}$.}
  \label{fig:exsaferec}
\end{figure}

We also give a new (safe) implementation of the $1-\mathsf{Ga}$ for the single qubit $\Z(\theta)$ gate for the $[[15,1,3]]$ quantum Reed Muller code. The $\Z(\theta)$ gate is transversal for this code and is usually implemented by applying the physical rotation $\Z(\theta)$ to all the physical qubits of the encoded qubit. However, this means that a single leak at any location would reveal the full secret value of $\theta$. After a level-$k$ concatenation, the probability of leaking over any one of the $15^k$ locations grows exponentially in $k$. This new implementation will allow us prevent this leakage.

\begin{definition}[$1-\mathsf{Safe}\Z(\theta)$]\label{def:safe-z-theta}
  Using the $[[15,1,3]]$ quantum Reed Muller code, the $1-\mathsf{Safe}\Z(\theta)$ gate that implements the $1-\mathsf{Ga}$ for the single qubit $\Z(\theta)$ gate is constructed as follows:
  \begin{algorithmic}[0]
    \STATE \textbf{For $i \in [1,15]$}:

    \begin{itemize}
    \item Sample $\alpha_i \sample \mathcal{U}(\Theta)$, where $\mathcal{U}(\Theta)$ is the uniform distribution over $\Theta$.
    \item Set $\beta_i = \theta - \alpha_i$.
    \item Apply the two $0-\mathsf{Ga}$'s for $\Z(\beta_i) \circ \Z(\alpha_i)$ to the $i$\textsuperscript{th} qubit, via calls to the Stochastically Compromised Gates Resource.
    \end{itemize}
  \end{algorithmic}
  
\end{definition}

Figure~\ref{fig:logical_Z} depicts a $2-\mathsf{SafeRec}$ implementing the $\Z(\theta)$ gate.

\begin{figure}
  \centering
  \begin{picture}(0,0)\includegraphics{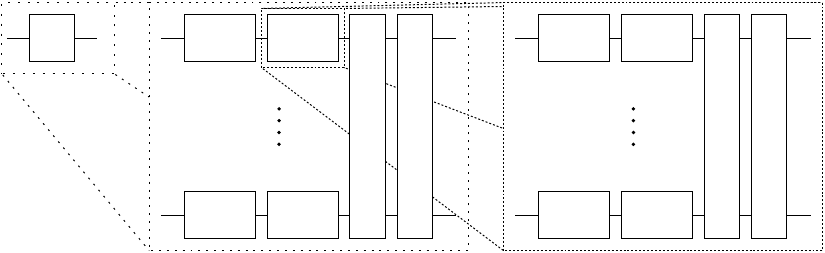}\end{picture}\setlength{\unitlength}{4144sp}\begin{picture}(6279,1914)(34,-973)
\put(301,614){\makebox(0,0)[lb]{\smash{\fontsize{8}{9.6}\usefont{T1}{ptm}{m}{n}{\color[rgb]{0,0,0}$Z(\theta)$}}}}
\put(1486,614){\makebox(0,0)[lb]{\smash{\fontsize{8}{9.6}\usefont{T1}{ptm}{m}{n}{\color[rgb]{0,0,0}$Z(\alpha_{1})$}}}}
\put(1486,-736){\makebox(0,0)[lb]{\smash{\fontsize{8}{9.6}\usefont{T1}{ptm}{m}{n}{\color[rgb]{0,0,0}$Z(\alpha_{15})$}}}}
\put(2116,-736){\makebox(0,0)[lb]{\smash{\fontsize{8}{9.6}\usefont{T1}{ptm}{m}{n}{\color[rgb]{0,0,0}$Z(\beta_{15})$}}}}
\put(2116,614){\makebox(0,0)[lb]{\smash{\fontsize{8}{9.6}\usefont{T1}{ptm}{m}{n}{\color[rgb]{0,0,0}$Z(\beta_{1})$}}}}
\put(2896,-16){\rotatebox{90}{\makebox(0,0)[lb]{\smash{\fontsize{8}{9.6}\usefont{T1}{ptm}{m}{n}{\color[rgb]{0,0,0}$\mathsf{2-EC}$}}}}}
\put(3256,-16){\rotatebox{90}{\makebox(0,0)[lb]{\smash{\fontsize{8}{9.6}\usefont{T1}{ptm}{m}{n}{\color[rgb]{0,0,0}$\mathsf{2-EC}$}}}}}
\put(4186,614){\makebox(0,0)[lb]{\smash{\fontsize{8}{9.6}\usefont{T1}{ptm}{m}{n}{\color[rgb]{0,0,0}$Z(\gamma_{1})$}}}}
\put(4186,-736){\makebox(0,0)[lb]{\smash{\fontsize{8}{9.6}\usefont{T1}{ptm}{m}{n}{\color[rgb]{0,0,0}$Z(\gamma_{15})$}}}}
\put(4816,-736){\makebox(0,0)[lb]{\smash{\fontsize{8}{9.6}\usefont{T1}{ptm}{m}{n}{\color[rgb]{0,0,0}$Z(\delta_{15})$}}}}
\put(4816,614){\makebox(0,0)[lb]{\smash{\fontsize{8}{9.6}\usefont{T1}{ptm}{m}{n}{\color[rgb]{0,0,0}$Z(\delta_{1})$}}}}
\put(5596,-16){\rotatebox{90}{\makebox(0,0)[lb]{\smash{\fontsize{8}{9.6}\usefont{T1}{ptm}{m}{n}{\color[rgb]{0,0,0}$\mathsf{1-EC}$}}}}}
\put(5956,-16){\rotatebox{90}{\makebox(0,0)[lb]{\smash{\fontsize{8}{9.6}\usefont{T1}{ptm}{m}{n}{\color[rgb]{0,0,0}$\mathsf{1-EC}$}}}}}
\end{picture}   \caption{Safe Level-$2$ Simulation of $\Z(\theta)$ Gate. Left: Level-2 (Logical level); middle: level-$1$ simulation with $\alpha_i$ chosen at random and $\theta = \alpha_i + \beta_i$; right: level-$0$ simulation depicting the implementation of the level-$1$ $\Z(\beta_{1})$, with $\gamma_i$ chosen at random and $\beta_1 = \gamma_i + \delta_i$. To leak the value of $\theta$, it is necessary to leak all the angles for at least one physical qubit which decreases doubly exponentially with the simulation level.}
  \label{fig:logical_Z}
\end{figure}

\subsubsection{Properties of \texorpdfstring{$\mathsf{Safe}$}{Safe} Constructions}
\label{subsubsec:safe_properties}

We first give a definition which captures what our $\mathsf{Safe}$ gadgets defined above are expected to achieve.

\begin{definition}[Accurate and Private $k-\mathsf{SafeRec}$]
\label{def:cor-sec}

    A $k-\mathsf{SafeRec}$ implementing a family of gates parametrized by a classical parameter $\lambda \in \Lambda$, with all gates in the family having the same set of compromised locations, is \emph{accurate and private} if
    \begin{itemize}
    \item \emph{(Accuracy)} The transformation that corresponds to applying the $k-\mathsf{SafeRec}$ followed by the ideal full $k$-decoder and tracing out the syndrome register is equivalent to that obtained by applying the ideal full $k$-decoder, tracing out the syndrome register and applying the ideal $0-\mathsf{Ga}$ to the remaining logical register;
    \item \emph{(Privacy)} The state of the syndrome register when applying the $k-\mathsf{SafeRec}$ followed by the ideal full $k$-decoder is independent of $\lambda$.
    \end{itemize}
Above, an ideal full $k$-decoder is a unitary transformation that is mapping a $k$-block into a single logical qubit and a syndrome register. Because of the concatenated structure of the code, an ideal full $(k+1)$-decoder can be realized by having ideal full $k$-decoders being applied to $k$-blocks followed by an ideal full $1$-decoder acting on the $1$-block constructed with the logical outputs of the ideal full $k$-decoders. These full $k$-decoders are introduced as $k$-*decoders in \cite{AGP06quantum}.
\end{definition}

Any deviations from the ideal properties of accuracy and privacy in Definition~\ref{def:cor-sec} are measured in the metric given by the trace distance (for states) and the diamond norm (for channels).
This is consistent with the choice of the metric in the Abstract Cryptography framework, and behaves well under composition.

Intuitively, the first property guarantees that the $k-\mathsf{SafeRec}$ performs the expected unitary transformation on the logical register, and the impact of errors is located on the syndrome register only. Then, because the state of the syndrome is independent of $\lambda$, the second property ensures that the deviation imposed by the compromised gates on the $k$-blocks at the output of the $k-\mathsf{SafeRec}$ is independent of $\theta$. Therefore, it could be entirely simulated by the Eavesdropper if it is given the state produced by the perfect $k-\mathsf{SafeRec}$ (i.e. without cheating).

  Note that the second property is not satisfied by the (unsafe) canonical transversal implementation of the $\Z(\theta)$ gate for the $[[15,1,3]]$ Reed Muller code. There, a bit flip on e.g.~the first qubit would yield a syndrome that is dependent on the value of $\theta$, as was shown in \S~\ref{sec:modeling}.
  
  We now give recursive three properties of $k-\mathsf{exSafeRec}$s, which will be categorised by the number of errors in their lower-level components.

\begin{definition}[Good, Bad and Independent $k-\mathsf{exSafeRec}$s]
\label{def:good_bad_indep}
    
    At level-$1$, a $1-\mathsf{exSafeRec}$ is \emph{good} if it contains no more than one compromised location. If it is not good, it is \emph{bad}.
    
    Two bad $k-\mathsf{exSafeRec}$ are \emph{independent} if they are not overlapping or if they overlap and the earlier $k-\mathsf{exSafeRec}$ is still bad when the shared $k-\mathsf{EC}$ is removed.
    
    At level-$k$, a $k-\mathsf{exSafeRec}$ is \emph{bad} if it contains two or more independent bad $(k-1)-\mathsf{SafeRec}$. If it is not bad it is \emph{good}.
    
\end{definition}
  
  We will now show that a construction which satisfies the property from Definition~\ref{def:good_bad_indep} will also be accurate and private as per Definition~\ref{def:cor-sec}. This in turn will allow us to prove a threshold theorem for accuracy and privacy, showing that increasing the level of concatenation is beneficial.

\subsubsection{Concatenation Increases Accuracy and Privacy}
\label{subsubsec:threshold_thm}

We present here the main technical contribution of the subsection. It shows that there exists a threshold $p_0$ for the compromise probability $p_c$ below which, whenever a computation uses $k-\mathsf{SafeRec}$s and the $k-\mathsf{Safe}\Z(\theta)$ gate, it can be made accurate and private with probability doubly exponentially close to $1$ as $k$ increases. Here, accuracy and privacy correspond to Definition~\ref{def:cor-sec} extended to the output of a full computation. To proceed with the proof, we will go through the following steps:

  \begin{itemize}
  \item Examine how to obtain an accurate and private $1-\mathsf{SafeRec}$ (Lemma~\ref{lem:good_correct_level_1});
  \item Proceed recursively to obtain a condition for accurate and private $k-\mathsf{SafeRec}$ (Lemma~\ref{lem:good_correct_level_k});
  \item Show show that it will be met with high probability as $k$ increases (Lemma~\ref{lem:rare_bad_recs});
  \item Combine these results to obtain the threshold for accurate and private computation with Stochastically Compromised Preparations, Gates and Measurements (Theorem~\ref{thm:safe_threshold}).
  \end{itemize}
  
Our target being RSP, we will only consider the $\Z(\theta)$ gate and its level-$k$ simulation $k-\mathsf{Safe}\Z(\theta)$. We define good, bad and independent $\mathsf{exSafeRec}$s and link these properties to the accurate and private properties of $\mathsf{SafeRec}$s defined earlier.

  \begin{lemma}[Good implies accurate and private at level-$1$]
  \label{lem:good_correct_level_1}
  
    Suppose all level-$1$ gadgets obey properties P0-P4 from \S~\ref{sec:ftqc}. Then, if a $1-\mathsf{exSafeRec}$ is good and the input to its $1-\mathsf{SafeRec}$ has no more than one error in each input block, it is accurate and private.

\end{lemma}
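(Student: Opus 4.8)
The plan is to unpack the definitions and proceed by cases on where the (at most one) compromised location of the good $1-\mathsf{exSafeRec}$ sits, mirroring the standard fault-tolerance argument of~\cite{AGP06quantum} for accuracy while adding the bookkeeping needed for privacy. First I would recall that a good $1-\mathsf{exSafeRec}$ has at most one compromised location among its constituent $1-\mathsf{Ga}$ and the (two trailing plus possibly one leading) $1-\mathsf{EC}$s. Combined with the hypothesis that the input to the $1-\mathsf{SafeRec}$ has at most one error per block, properties P0--P4 give: (i) if the fault is in the leading $1-\mathsf{EC}$ (or there is no fault at all), then by P1 the block entering the $1-\mathsf{Ga}$ is clean, by P3/P4 the $1-\mathsf{Ga}$ output has at most one error per block, and then two consecutive fault-free $1-\mathsf{EC}$s (P0 then P1) return it to the code space with no errors; (ii) if the single fault is inside the $1-\mathsf{Ga}$, then the input to the $1-\mathsf{Ga}$ is clean after the fault-free leading $\mathsf{EC}$ (P1 on an at-most-one-error input, or directly clean for a preparation), P4 bounds its output by one error per block, and the two clean trailing $1-\mathsf{EC}$s again remove it; (iii) if the single fault is in one of the trailing $1-\mathsf{EC}$s, the $1-\mathsf{Ga}$ is clean so its output has at most one error per block (P3), the first trailing $\mathsf{EC}$ is either clean or the faulty one — in either case after both $\mathsf{EC}$s the state is in the code space (using P0 and P2) with at most one residual error, which the ideal decoder absorbs into the syndrome. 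In every case the logical content after the ideal full $1$-decoder equals $0\text{-}\mathsf{Ga}$ applied to the decoded logical input, which is exactly the accuracy claim; I would state this as the conjunction of the per-case analyses.

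For privacy I would specialise to the only parametrised gate, $1\text{-}\mathsf{Safe}\Z(\theta)$ of Definition~\ref{def:safe-z-theta}, and argue that the syndrome after the ideal full $1$-decoder is distributed independently of $\theta$. The key observation is that in $1\text{-}\mathsf{Safe}\Z(\theta)$ each physical qubit $i$ undergoes $\Z(\beta_i)\circ\Z(\alpha_i)$ with $\alpha_i$ uniform in $\Theta$ and $\beta_i = \theta - \alpha_i$; conditioned on a single compromised location, at most one of the two sub-gates on one qubit can leak, so the Eavesdropper sees at most one of $\{\alpha_i,\beta_i\}$ for a single $i$ — and either one alone is uniform and independent of $\theta$. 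Hence the error the Eavesdropper injects, as well as which location it targets, can depend on $\alpha_i$ (or on $\beta_i$), but not on $\theta$: the joint distribution of (fault location, injected error) is independent of $\theta$. Since accuracy already established that whatever error is injected is cleaned off the logical qubit and pushed entirely into the syndrome, the resulting syndrome state is a fixed (channel-valued) function of the $\theta$-independent randomness $(\alpha_i$ or $\beta_i$, injected error$)$, hence independent of $\theta$. For the non-parametrised $1\text{-}\mathsf{Ga}$s privacy is vacuous since $\mathsf{SafeGa}=\mathsf{Ga}$ carries no secret parameter, and for the fault-free $1\text{-}\mathsf{EC}$s there is nothing to leak. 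I would also note that the input block may carry one pre-existing error, but since the hypothesis fixes that error independently of $\theta$ and $\lambda$ (it is part of the ``same set of compromised locations'' assumption inherited from Definition~\ref{def:cor-sec}), it does not introduce $\theta$-dependence.

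The main obstacle I expect is the privacy case where the single compromised location is the leading $1\text{-}\mathsf{EC}$ or a trailing $1\text{-}\mathsf{EC}$ rather than the $\Z(\alpha_i)$/$\Z(\beta_i)$ sub-gate: there the Eavesdropper learns $\lambda$ (the $\mathsf{EC}$ resource is called with the gate name, and Resource~\ref{res:scpg} leaks $\lambda$ on compromise), but $\lambda$ for an $\mathsf{EC}$ carries no secret, so one must be careful to check that what is actually revealed to the Eavesdropper on compromise of an $\mathsf{EC}$ location is only the (public) $\mathsf{EC}$ identity and the single physical qubit, which commutes-through-and-decodes into a $\theta$-independent syndrome contribution exactly as in accuracy. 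The second delicate point is arguing that ``independent of $\theta$'' is preserved when the injected error is adaptively chosen by the Eavesdropper: I would handle this by fixing the Eavesdropper's strategy and observing that its only $\theta$-correlated input is (potentially) one of $\alpha_i$, $\beta_i$, which I have argued is marginally uniform and independent of $\theta$; conditioning on that value, the whole downstream channel is deterministic in the remaining public data, so marginalising back out keeps $\theta$-independence. Finally I would remark that the accuracy argument is essentially a restatement of~\cite{AGP06quantum}'s level-$1$ analysis with one extra trailing $\mathsf{EC}$ (which only helps, via an extra P0/P1 step), so the genuinely new content is the privacy half and the interplay between the randomised split in Definition~\ref{def:safe-z-theta} and the single-compromise budget.
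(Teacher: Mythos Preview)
Your accuracy argument via case analysis on the compromised location is essentially the paper's, modulo a slip in case (i): when the fault sits in the leading $1\text{-}\mathsf{EC}$ you cannot invoke P1 on it; instead use the lemma's hypothesis directly to get at most one error at the input to the $1\text{-}\mathsf{Ga}$, then P3 and two clean trailing $\mathsf{EC}$s finish. The paper also uses P0 on a clean leading $\mathsf{EC}$ together with the hypothesis and the distance-$3$ property to sharpen ``at most one error'' to ``no error'' at the $\mathsf{Ga}$ input whenever the leading $\mathsf{EC}$ is clean, which streamlines the remaining cases.

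Your privacy argument has a genuine gap. You argue that the Eavesdropper's injected error is $\theta$-independent (since it sees at most one of $\alpha_i,\beta_i$, each marginally uniform) and conclude that the syndrome is ``a fixed channel-valued function of $\theta$-independent randomness''. But the channel mapping the injected error to the output syndrome is itself $\theta$-dependent: if the compromise is at $\Z(\alpha_i)$ or in the leading $\mathsf{EC}$, the injected error still propagates through $\Z(\beta_i)$ with $\beta_i=\theta-\alpha_i$, and an $\X$-type error is transformed into a $\theta$-dependent Pauli --- this is exactly the leak mechanism exhibited in \S~\ref{sec:modeling}, and it is not neutralised by the injected error being marginally $\theta$-independent. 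The paper's privacy argument does not rest on the distribution of the injected error; instead it shows that the output syndrome is \emph{all-zero} whenever the compromised location is anywhere but the second trailing $1\text{-}\mathsf{EC}$ (the last clean $\mathsf{EC}$ resets it via P1, regardless of any $\theta$-dependent propagation upstream), and when the compromise is in that last $\mathsf{EC}$, everything before it is clean so its input is already the correct codeword with zero syndrome, whence the faulty $\mathsf{EC}$'s action on this $\theta$-independent input is $\theta$-independent. The $\alpha_i/\beta_i$ uniformity you emphasise does appear in the paper, but only as a remark about what the leak itself reveals --- it is not the load-bearing step for the syndrome.
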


  \begin{proof}

    Suppose that none of the leading $1-\mathsf{EC}$ of the $1-\mathsf{exSafeRec}$ has a compromised location. Then the output of each $1-\mathsf{EC}$ is a codeword (P0). Since the input to the $1-\mathsf{SafeRec}$ has at most one error, this means that there are actually no error to the $1-\mathsf{SafeRec}$. The compromised location can be either inside the $1-\mathsf{Ga}$ or in one of the two series of trailing $1-\mathsf{EC}$s. If the compromised location is not in the second trailing $1-\mathsf{EC}$, then this last $1-\mathsf{EC}$ will correct the error and produce the all-zero syndrome. If the compromised location is on the last $1-\mathsf{EC}$, then because the $1-\mathsf{Ga}$ and the preceding $1-\mathsf{EC}$ contain no compromised location, they will have reset the syndrome to $0$ and the produced syndrome at the output of the last $1-\mathsf{EC}$ will be uncorrelated with $\theta$.
    Finally, we note that when the $1-\mathsf{Ga}$ is the $1-\mathsf{Safe}\Z(\theta)$, a compromised location in one of the physical $\Z(\theta)$ gate discloses some value drawn at random from $\mathcal{U}(\Theta)$ which is uncorrelated with $\theta$. Hence, we can conclude that whenever the compromised location is in the $1-\mathsf{SafeRec}$, it is accurate and private.

    Now we turn to the case where the compromised location is in one of the leading $1-\mathsf{EC}$ block of the $1-\mathsf{exSafeRec}$. Then, we can conclude that at most one input block has an error. Because neither the following $1-\mathsf{Ga}$, nor any of the two consecutive $1-\mathsf{EC}$s has a compromised location, then the error will be corrected and the syndrome will be back to the all-zero syndrome. As a consequence, the $1-\mathsf{SafeRec}$ is accurate and private.

    The arguments presented above regarding accuracy also apply to all the other gates, preparations and measurements. Moreover, because these are not parametrized, then privacy is trivial and the statement of the lemma holds for any preparation, gate and measurement.

    Additionally, note that the accuracy and privacy hold for a truncated $1-\mathsf{SafeRec}$ where the second trailing $1-\mathsf{EC}$s are removed. Accuracy is obvious, while the reason for privacy is that either the first trailing $1-\mathsf{EC}$ has no compromised location, then it resets the syndrome, or that it has a compromised location, but then there are none elsewhere before so that the value of $\theta$ is not disclosed and the syndrome before entering the $1-\mathsf{EC}$ with the compromised location is all zero so that the output of the $1-\mathsf{EC}$ produces a syndrome that is independent of $\theta$.
  \end{proof}

  Having characterized that good implies accurate and private at level-$1$, we can now proceed with the induction step, showing that good implies accurate and private at level-$k$.

  \begin{lemma}[Good implies accurate and private]
  \label{lem:good_correct_level_k}  
  
    A good $k-\mathsf{exSafeRec}$ implies that its $k-\mathsf{SafeRec}$ is accurate and private.
    
  \end{lemma}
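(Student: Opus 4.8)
The plan is to prove the statement by induction on the concatenation level $k$. The base case $k=1$ is Lemma~\ref{lem:good_correct_level_1}, where the hypothesis "at most one error in each input block of the $1-\mathsf{SafeRec}$" is automatically discharged in the recursive setting because the decoded outputs of the lower-level $\mathsf{EC}$s feeding the block carry at most one error per block (property P1). The engine of the induction is the \emph{level-reduction} argument of~\cite{AGP06quantum}, which I adapt so that it propagates not only accuracy but also privacy. Concretely, a $k-\mathsf{SafeRec}$ is the $1-\mathsf{SafeRec}$ with every $0-\mathsf{Ga}$ replaced by its $(k-1)-\mathsf{SafeRec}$, so it is a level-$1$ circuit built out of $(k-1)-\mathsf{SafeRec}$s and $(k-1)-\mathsf{EC}$s; one applies an ideal full $(k-1)$-decoder to every $(k-1)$-block and uses the induction hypothesis to replace each constituent by an ideal level-$0$ object acting on the decoded logical register together with a $\theta$-independent residual state on the decoded syndrome register.

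The first step is the combinatorial reduction. By Definition~\ref{def:good_bad_indep} a good $k-\mathsf{exSafeRec}$ contains no two independent bad $(k-1)-\mathsf{SafeRec}$s; using the standard treatment of overlapping extended rectangles in~\cite{AGP06quantum} (two bad constituents sharing a $(k-1)-\mathsf{EC}$ are independent exactly when the earlier one remains bad after deleting that shared $\mathsf{EC}$), this lets us assume without loss of generality that at most one constituent $(k-1)-\mathsf{SafeRec}$ is bad. That single bad $(k-1)-\mathsf{SafeRec}$ is then \emph{demoted} to a single faulty level-$0$ location in the reduced level-$1$ picture: after $(k-1)$-decoding, all it can do is disclose its classical parameter and apply an arbitrary error to one decoded block, which is exactly what one compromised level-$0$ location of Resource~\ref{res:scpg} is allowed to do. After this collapse the level-$1$ picture is a $1-\mathsf{exSafeRec}$ with at most one compromised location, hence good, and Lemma~\ref{lem:good_correct_level_1} applies: its $1-\mathsf{SafeRec}$ is accurate and private. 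It remains to compose. Accuracy and privacy of the reduced level-$1$ object, together with accuracy and privacy of every good $(k-1)-\mathsf{SafeRec}$ supplied by the induction hypothesis, yield accuracy and privacy of the whole $k-\mathsf{SafeRec}$, because the ideal full $k$-decoder factors as ideal full $(k-1)$-decoders on the $k$-sub-blocks followed by an ideal full $1$-decoder (the remark after Definition~\ref{def:cor-sec}), the logical transformations compose to the single ideal $0-\mathsf{Ga}$, the nested syndrome registers are $\theta$-independent at every level by construction, and both the trace distance and the diamond norm are subadditive under composition.

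Two points need care and will be carried out in full. First, because consecutive $k-\mathsf{exSafeRec}$s overlap in a shared $k-\mathsf{EC}$, the reduction must actually be applied to the \emph{truncated} $k-\mathsf{SafeRec}$ with its second trailing $k-\mathsf{EC}$ removed; the closing paragraph of the proof of Lemma~\ref{lem:good_correct_level_1} already establishes accuracy and privacy of a truncated $1-\mathsf{SafeRec}$, and this has to be threaded through the induction so that the reduced pictures of neighbouring extended rectangles still glue consistently --- this is exactly the bookkeeping of~\cite{AGP06quantum}, unchanged except that "correct" is everywhere replaced by "accurate and private". Second, for the $\Z(\theta)$ family the privacy at the bottom of the recursion hinges on the split compilation of Definition~\ref{def:safe-z-theta}: when the one demoted fault lands inside a $1-\mathsf{Safe}\Z(\theta)$ it can reveal only one of $\alpha_i$ or $\beta_i = \theta - \alpha_i$ for a single $i$, each uniform on $\Theta$ and hence independent of $\theta$, so the demoted level-$0$ fault never injects a $\theta$-dependent error nor discloses $\theta$. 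Recursively, if the one tolerated bad $(k-1)-\mathsf{SafeRec}$ is the $(k-1)$-implementation of one of the two half-rotations $\Z(\alpha_i)$, $\Z(\beta_i)$ of a $k-\mathsf{Safe}\Z(\theta)$, then it can at worst expose that single half-angle; since at most one constituent is bad, the complementary half-rotation is good, so (as in the level-$1$ analysis) nothing about the complementary half-angle leaves it, and therefore nothing about $\theta$ does.

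The main obstacle is precisely this privacy bookkeeping, and it is where the argument genuinely departs from standard fault tolerance. There one only has to certify that the logical information survives; here one must additionally certify, at every level, that neither the single tolerated bad sub-rectangle nor any trailing error-correction step can funnel $\theta$ into the decoded syndrome register. This is what forces the two design changes --- the extra trailing $\mathsf{EC}$ in each $\mathsf{SafeRec}$, which guarantees that even a compromised last $\mathsf{EC}$ acts on a syndrome that has already been reset to a $\theta$-independent value, and the random-angle split in $1-\mathsf{Safe}\Z(\theta)$, which removes $\theta$ from every individual location --- and the induction must verify that both features survive the level-reduction collapse intact at each step. Once the privacy half is in place, the accuracy half is the verbatim level-reduction argument of~\cite{AGP06quantum}.
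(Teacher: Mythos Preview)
Your proposal is correct and follows essentially the same approach as the paper: induction on the concatenation level via the level-reduction machinery of~\cite{AGP06quantum}, sweeping ideal full $(k-1)$-decoders through the $(k-1)$-constituents so that good constituents become accurate-and-private $0-\mathsf{Ga}$'s and the (at most one, after the overlap/truncation bookkeeping) bad constituent is demoted to a single compromised $0-\mathsf{Ga}$, then invoking Lemma~\ref{lem:good_correct_level_1} on the resulting reduced $1-\mathsf{exSafeRec}$. Your treatment of the privacy side of the recursion for the $\Z(\theta)$ family --- a bad constituent can expose at most one half-angle $\alpha_i$ or $\beta_i$, while the complementary half-rotation is good and hides the other, so $\theta$ itself never escapes --- is in fact more explicit than the paper's own write-up, which leaves this step largely implicit after the level-$1$ case.
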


  \begin{proof}
  The proof of this lemma will be done by induction. We will assume the property holds up until level-$k$. We will then take a $(k+1)-\mathsf{exSafeRec}$ apply an ideal full $k$-decoder at the output of the $(k+1)$-blocks. We then show that this is equivalent to running the ideal full $(k+1)$-decoder before the $(k+1)-\mathsf{SafeRec}$, followed by the ideal gate acting on the logical part tensored by a CPTP acting on the syndrome in a way that is independent of $\theta$; and that by doing so the value of $\theta$ has not been disclosed.

  To this end, we will rely on the recursive structure of the ideal full $(k+1)$-decoder that consists of performing ideal full $k$-decoding followed by a full $1$-decoder. We will then sweep the $k$-decoders through the $k-\mathsf{SafeRec}$ transforming the $(k+1)-\mathsf{SafeRec}$ into a $1-\mathsf{SafeRec}$ by applying the induction hypothesis. Then we will conclude using the property at the level-$1$. As for standard fault-tolerance \cite{AGP06quantum}, the difficulty of this program is that the induction hypothesis does not apply to bad $k-\mathsf{exSafeRec}$s, so that we will have to show that such situation can be accurately simulated using a compromised $0-\mathsf{Ga}$.
    
\item \emph{Non-independent pairs of bad $k-\mathsf{exSafeRec}$s.} By construction, the two bad $k-\mathsf{exSafeRec}$s have a common $k-\mathsf{EC}$ so that when removed from the first $k-\mathsf{exSafeRec}$, this $k-\mathsf{exSafeRec}$ is not bad. We can then sweep the ideal full decoder past the last $k-\mathsf{exSafeRec}$ (entirely, not just the $k-\mathsf{SafeRec}$) which happens to be bad. The next paragraph will show that this translates into a compromised $0-\mathsf{Ga}$. Now, we can apply the induction hypothesis to the truncated $k-\mathsf{SafeRec}$ to conclude that this one yields a accurate and private simulation of the corresponding ideal $0-\mathsf{Ga}$.
  
\item \emph{Transforming a bad $k-\mathsf{exSafeRec}$ into a compromised $0-\mathsf{Ga}$.} Here, we are faced with the same difficulty as in the original FTQC analysis. The induction hypothesis does not state that a bad $k-\mathsf{exSafeRec}$ can be simulated by a compromised $0-\mathsf{Ga}$. Yet, we are in a slightly different situation as our definition of accurate and private directly requires the use of ideal full $k$-decoders.  The consequence is that our induction hypothesis is indeed stronger and directly gives the shape of the transformation acting on the logical and syndrome registers.

  Indeed, when ideal full $k$-decoders sweep past the $k-\mathsf{SafeRec}$ inside a good $k-\mathsf{exSafeRec}$, our induction hypothesis states that we can replace it by the ideal full $k$-decoder followed by a unitary transformation that applies $\cptp U(\theta)\otimes \mathcal M$ over the logical and syndrome registers with the added condition that $\mathcal M$ is independent of $\theta$. This also holds for truncated $k-\mathsf{SafeRec}$s.

  For a bad $k-\mathsf{exSafeRec}$, this is not true any more. If $\mathcal D$ is the unitary transformation corresponding to the ideal full $k$-decoder and $\mathcal N$ the CPTP that the Eavesdropper imposes through the compromised locations, then the action on the logical and syndrome registers is $\mathcal D \mathcal N \mathcal D^{-1}$. This can be regarded as a level-$0$ compromised gate simulated by the bad $k-\mathsf{exSafeRec}$. To see this, we realize that the compromised $0-\mathsf{Ga}$ will send the logical qubits to the Eavesdropper together with the classical control value that parametrizes the $0-\mathsf{Ga}$. This allows the Eavesdropper to simulate any transformation that the bad $k-\mathsf{exSafeRec}$ implements. Hence, it does not restrict its adversarial power. As a consequence, the Eavesdropper has the ability to perform highly correlated attacks at various compromised locations. This was indeed already taken into account in the definition of good and bad $1-\mathsf{exRec}$ in \cite{AGP06quantum} as these notions depend only on the number of compromised locations, not on the actual CPTP map induced by the Eavesdropper on the corresponding registers.

\item \emph{From level-$k$ to level-$(k+1)$.} Consider an ideal full $(k+1)$-decoder following a $(k+1)-\mathsf{exSafeRec}$. It can be realized by ideal full $k$-decoders followed by ideal full $1$-decoders. When the $k$-decoders sweep to the front of the $(k+1)-\mathsf{exSafeRec}$, each good $k-\mathsf{exSafeRec}$ is transformed into an accurate and private $0-\mathsf{Ga}$, and only bad $k-\mathsf{exSafeRec}$s yield to compromised $0-\mathsf{Ga}$ that leak the value of their parameter $\theta$ and share access to the syndrome of the corresponding $k$-block. This means that the obtained $1-\mathsf{exSafeRec}$ is good and that the $1-\mathsf{Rec}$ inside it is accurate. As a result, we can move the $1$-decoder to the left of the $1-\mathsf{SafeRec}$ transforming it into an accurate and private $0-\mathsf{Ga}$. Now rejoining the $k$- and $1$-decoders we reconstruct the ideal full $(k+1)$-decoders. 

  This concludes the proof as it shows that the operation induced by the $(k+1)-\mathsf{SafeRec}$ on the logical and syndrome registers is the ideal gate tensored with a CPTP map that does not depend on $\theta$.    

  This generalizes to preparations and measurements.
  \end{proof}
  
  Finally, we show that the probability of a $\mathsf{k-exSafeRec}$ being bad is negligible in $k$.
  
    \begin{lemma}[Bad $\mathsf{k-exSafeRec}$s are rare]
	\label{lem:rare_bad_recs}    
    
    Suppose that compromised locations are drawn independently at random with probability $p_c$ at each circuit location in a $k-\mathsf{exSafeRec}$. Then, there exists a $p_0 > 0$ such that the probability $p^{(k)}_c$ that the $k-\mathsf{exSafeRec}$ is bad is
    \begin{equation}
      p^{(k)}_c \leq p_0 (p_c/p_0)^{2^k}.
    \end{equation}
    Hence, for $p_c < p_0$, \emph{i.e.}, below the threshold, $p^{(k)}_c$ decreases doubly exponentially in $k$.
  \end{lemma}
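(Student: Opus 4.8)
The plan is to follow the recursive counting argument behind the fault-tolerance threshold theorem of~\cite{AGP06quantum}, adapted to the $\mathsf{Safe}$ gadgets and to the notion of independence fixed in Definition~\ref{def:good_bad_indep}. Everything hinges on the self-similarity of the construction: since a $k-\mathsf{SafeRec}$ is a $(k-1)-\mathsf{SafeRec}$ with every $0-\mathsf{Ga}$ replaced by a $1-\mathsf{SafeRec}$ (Definition~\ref{def:saferec}) --- equivalently, a $1-\mathsf{SafeRec}$ with every $0-\mathsf{Ga}$ blown up into a $(k-1)-\mathsf{SafeRec}$ --- the coarsest-level shape of a $k-\mathsf{exSafeRec}$ is exactly that of a $1-\mathsf{exSafeRec}$, for every $k$. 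I would therefore fix a single finite constant $A$ that simultaneously upper bounds the number of physical locations in a $1-\mathsf{exSafeRec}$ and the number of $(k-1)-\mathsf{exSafeRec}$s contained in a $k-\mathsf{exSafeRec}$, at every level; finiteness uses that $1-\mathsf{EC}$, the level-$1$ $[[15,1,3]]$ gadgets, and the $1-\mathsf{Safe}\Z(\theta)$ gate (with its $30$ physical $\Z$-rotations, cf.\ Definition~\ref{def:safe-z-theta}) all involve finitely many locations. I would then set $p_0 := 1/\binom{A}{2} \in (0,1]$.

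\textbf{Induction on $k$.} For the base case, a $1-\mathsf{exSafeRec}$ is bad exactly when at least two of its $\le A$ locations are compromised; as compromised locations are i.i.d.\ with probability $p_c$, a union bound over the $\le\binom{A}{2}$ pairs gives $p^{(1)}_c \le \binom{A}{2}\,p_c^2 = p_0\,(p_c/p_0)^2$. For the inductive step I would assume $p^{(k-1)}_c \le p_0\,(p_c/p_0)^{2^{k-1}}$ and use that, by Definition~\ref{def:good_bad_indep}, a $k-\mathsf{exSafeRec}$ is bad only if it contains two \emph{independent} bad $(k-1)-\mathsf{exSafeRec}$s (a bad $(k-1)-\mathsf{SafeRec}$ sits inside the corresponding bad $(k-1)-\mathsf{exSafeRec}$, so it suffices to track these). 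There being $\le A$ candidate slots, and hence $\le\binom{A}{2}$ unordered pairs, I would bound, for a fixed pair, $\Pr[\text{both bad and mutually independent}] \le (p^{(k-1)}_c)^2$, and then sum:
\begin{equation}
  p^{(k)}_c \;\le\; \binom{A}{2}\bigl(p^{(k-1)}_c\bigr)^2 \;\le\; \binom{A}{2}\,p_0^{2}\,(p_c/p_0)^{2^{k}} \;=\; p_0\,(p_c/p_0)^{2^{k}},
\end{equation}
the last equality using $\binom{A}{2}\,p_0 = 1$; for $p_c < p_0$ this is the claimed doubly exponential decay.

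\textbf{The per-pair bound, and the main obstacle.} The crux is the per-pair estimate $\Pr[\text{both bad and mutually independent}] \le (p^{(k-1)}_c)^2$. If the two $(k-1)-\mathsf{exSafeRec}$s do not overlap, their location sets are disjoint and the two badness events are independent, so the bound is immediate. If they overlap but are independent in the sense of Definition~\ref{def:good_bad_indep}, then by definition the earlier one stays bad after deletion of the shared $k-\mathsf{EC}$, so its badness is witnessed by compromises lying outside that $k-\mathsf{EC}$ --- and hence outside the later $(k-1)-\mathsf{exSafeRec}$ --- while the later one is bad because of its own, disjoint, locations; in both cases the events are functions of disjoint blocks of the i.i.d.\ compromise variables, which yields the product bound. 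I expect the delicate point to be exactly this disjointness claim: one must check, against the precise overlap geometry of $\mathsf{exSafeRec}$s (a $\mathsf{SafeRec}$ carries \emph{two} trailing $k-\mathsf{EC}$s while an $\mathsf{exSafeRec}$ prepends the preceding one), that removing the shared $k-\mathsf{EC}$ from the earlier $\mathsf{exSafeRec}$ really does leave a location set disjoint from the later one, so that the independence clause genuinely decouples the two events. A secondary thing to verify is the self-similarity bookkeeping --- that the substitution rule of Definition~\ref{def:saferec} reproduces the level-$1$ combinatorics verbatim at every level, so that one constant $A$ (slightly worse than in standard fault tolerance because of the $30$-location $1-\mathsf{Safe}\Z(\theta)$, yet still a fixed positive constant) is enough throughout.
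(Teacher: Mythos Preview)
Your proposal is correct and follows essentially the same approach as the paper: define the pair-counting constant $A$ (the paper takes $A$ directly as the number of pairs, you take it as $\binom{A}{2}$ with $A$ the number of locations---equivalent up to renaming), establish $p^{(1)}_c \le A p_c^2$, invoke self-similarity to get the recursion $p^{(k)}_c \le A (p^{(k-1)}_c)^2$, and solve for $p_0 = A^{-1}$. The paper's own proof is in fact considerably terser than yours---it dispatches the inductive step in one sentence (``because of the recursive nature of the construction'') and does not spell out the disjointness argument for overlapping-but-independent $(k-1)-\mathsf{exSafeRec}$s that you flag as the main obstacle; your treatment of that point is the standard AGP one and is sound.
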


  \begin{proof}
    Let $A$ be the number of pairs of possible compromised locations in the largest $1-\mathsf{exSafeRec}$.  Then, the probability of having two independent compromised locations in a $1-\mathsf{exSafeRec}$ satisfies $p^{(1)}_c \leq A p_c^{2}$. Because of the recursive nature of the construction of the $k-\mathsf{exSafeRec}$s, we have $p^{(k)}_c \leq A \left(p^{(k-1)}_c\right)^{2}$. So that we obtain
    \begin{equation}
      p^{(k)}_c \leq p_0 (p_c/p_0)^{2^k},
    \end{equation}
    for $p_0 = A^{-1}$.
  \end{proof}

  \begin{theorem}[Threshold theorem for accurate and private computations]
  \label{thm:safe_threshold}
  
    Let $L$ be the number of locations in an unencoded circuit $C$. The diamond distance $\delta$ between the circuit $C$ simulated with level-$k$ and an ideal computation performing the expected computation $C$ on the logical space and producing a syndrome that is independent of the parameters $\theta$ of the gates satisfies:
    \begin{equation}
      \delta \leq 2L \times p_0(p_c/p_0)^{2^k},
    \end{equation}
    where $p_0^{-1}$ counts the number of pairs of possible compromised locations in the largest $1-\mathsf{exSafeRec}$, and $p_c$ is the probability that a location is compromised. As a consequence, for $p_c < p_0$, \emph{i.e.}, below the threshold, increasing $k$ yields a doubly exponential reduction of $\delta$.
  \end{theorem}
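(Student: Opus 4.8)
Following the structure of the fault-tolerance threshold proof of~\cite{AGP06quantum}, the plan is to replace its purely accuracy-based level-reduction by the joint accuracy-and-privacy bookkeeping set up in Lemmata~\ref{lem:good_correct_level_1}--\ref{lem:rare_bad_recs}. Concretely, I would first decompose the level-$k$ simulation of $C$ (in which each gate of $C$, and in particular each $\Z(\theta)$, has been replaced by its $k-\mathsf{SafeRec}$, respectively $k-\mathsf{Safe}\Z(\theta)$) into $L$ overlapping $k-\mathsf{exSafeRec}$s, one per location of $C$, exactly as $\mathsf{exRec}$s are assigned to locations in~\cite{AGP06quantum}; consecutive $k-\mathsf{exSafeRec}$s share a leading $k-\mathsf{EC}$, which is why the notion of \emph{independence} of bad rectangles in Definition~\ref{def:good_bad_indep} is needed.

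Fix a placement of compromised locations (drawn i.i.d.\ with probability $p_c$). I would then sweep ideal full $k$-decoders from the output blocks towards the inputs: by Lemma~\ref{lem:good_correct_level_k} every \emph{good} $k-\mathsf{exSafeRec}$ collapses to the ideal $0-\mathsf{Ga}$ on the logical register tensored with a CPTP map on the syndrome register that is independent of $\theta$, while (by the reduction carried out inside the proof of Lemma~\ref{lem:good_correct_level_k}) every \emph{bad} $k-\mathsf{exSafeRec}$ collapses to a compromised $0-\mathsf{Ga}$, i.e.\ a level-$0$ location handed in full to the Eavesdropper together with its classical control, hence able to reproduce whatever the bad rectangle did. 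The outcome is that, for each fixed placement, the level-$k$ simulation is channel-equivalent to a level-$0$ circuit on the logical-plus-syndrome registers in which precisely the bad $k-\mathsf{exSafeRec}$s have become compromised $0-\mathsf{Ga}$s; when no rectangle is bad, this level-$0$ circuit is exactly $C$ on the logical register tensored with some $\theta$-independent CPTP map on the syndrome.

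For the quantitative bound, Lemma~\ref{lem:rare_bad_recs} gives that each of the $L$ locations yields a bad $k-\mathsf{exSafeRec}$ with probability at most $p^{(k)}_c \le p_0(p_c/p_0)^{2^k}$ (badness of the $i$-th rectangle depends only on the i.i.d.\ faults inside it), so a union bound over the $L$ locations shows the probability $p_{\mathrm{bad}}$ that the reduced circuit contains at least one compromised $0-\mathsf{Ga}$ is at most $L\,p_0(p_c/p_0)^{2^k}$. Averaging over placements, write the real channel as the convex mixture $\mathcal{E}_{\mathrm{real}} = (1-p_{\mathrm{bad}})\,\mathcal{E}_{\mathrm{good}} + p_{\mathrm{bad}}\,\mathcal{E}_{\mathrm{rest}}$, where $\mathcal{E}_{\mathrm{good}}$ is the channel conditioned on all $k-\mathsf{exSafeRec}$s being good. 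By the previous paragraph and linearity of the expectation, $\mathcal{E}_{\mathrm{good}} = \mathcal{C}_{\mathrm{log}} \otimes \bar{\mathcal{M}}$ with $\mathcal{C}_{\mathrm{log}}$ the ideal action of $C$ on the logical register and $\bar{\mathcal{M}}$ a $\theta$-independent CPTP map on the syndrome (a convex combination of $\theta$-independent channels is $\theta$-independent); this is a legitimate choice for the ideal target in the statement. Since the diamond norm of the difference of two channels is at most $2$, $\delta \le \| \mathcal{E}_{\mathrm{real}} - \mathcal{E}_{\mathrm{good}} \|_\diamond = p_{\mathrm{bad}} \, \| \mathcal{E}_{\mathrm{rest}} - \mathcal{E}_{\mathrm{good}} \|_\diamond \le 2\,p_{\mathrm{bad}} \le 2L\,p_0(p_c/p_0)^{2^k}$, and the doubly exponential decay for $p_c < p_0$ follows immediately.

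I expect the genuine obstacle to be not the counting but making the decoder sweep rigorous, and in particular two points that are already anticipated by the earlier definitions and lemmas: (i) conditioning on ``all $k-\mathsf{exSafeRec}$s good'' must not break the independence needed for Lemma~\ref{lem:rare_bad_recs} --- which is why badness is defined locally, through compromised-location counts only, so that the union bound can be applied placement-wise before any conditioning; and (ii) non-independent overlapping bad rectangles (sharing a $k-\mathsf{EC}$) must not be double-counted, which is handled by Definition~\ref{def:good_bad_indep} together with the ``non-independent pairs'' case of the proof of Lemma~\ref{lem:good_correct_level_k}. The one feature genuinely new relative to~\cite{AGP06quantum} --- that the collapsed good gates leave a syndrome independent of $\theta$ --- has already been secured at the base of the induction in Lemma~\ref{lem:good_correct_level_1} and propagated in Lemma~\ref{lem:good_correct_level_k}, so at the level of this theorem the remaining work is essentially to assemble these pieces and apply the triangle inequality for the diamond norm.
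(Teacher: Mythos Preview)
Your proposal is correct and follows essentially the same approach as the paper: invoke Lemma~\ref{lem:good_correct_level_k} to conclude that if every $k-\mathsf{exSafeRec}$ is good the simulation is accurate and private, union-bound the probability of at least one bad rectangle using Lemma~\ref{lem:rare_bad_recs} to get $L\,p_0(p_c/p_0)^{2^k}$, and multiply by the diamond-norm factor of~$2$. The paper's own proof is in fact much terser than yours---it simply states these three ingredients without spelling out the decoder sweep or the convex decomposition---so your write-up, while slightly redundant with the internals of Lemma~\ref{lem:good_correct_level_k}, is if anything a more careful presentation of the same argument.
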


  \begin{proof}
    To ensure an accurate and private simulation, we require that no $k-\mathsf{exSafeRec}$ is bad, for it could not guarantee that its logical qubits are accurately produced nor that the syndrome perfectly hides the classical parameters of the gates that it was supposed to conceal and simulate. On the contrary, whenever a $k-\mathsf{exSafeRec}$ is good, we have shown that the $k-\mathsf{SafeRec}$ inside it produces an output that can be written as the expected operation on the logical qubits and a CPTP map on the syndrome qubits that is independent of the parameter that defines the gate.

    As a consequence the probability of failure is bounded by $L$ times the individual failure probability per location --- i.e.~$L \times p_0 (p_c/p_0)^{2^k}$. This yields the stated result because the diamond norm is bounded by $2$.
  \end{proof}

\subsubsection{Construction of a \texorpdfstring{Level-$k$}{Level-k} Remote State Preparation}
\label{subsubsec:level_k_rsp}

Finally, we present the protocol for implementing the Level-$k$ Remote State Preparation, which aims to correct maliciously chosen errors and prevent leaks.

\begin{protocolenv}[Level-$k$ Simulation of Remote State Preparation]\label{proto:ft-rsp}
\item 
  \begin{algorithmic}[0]
    \STATE \textbf{The Sender:} 
	\begin{itemize}
	\item Chooses $\theta \in \Theta$;
	\item Implements the level-$k$ preparation of $\ket +$;
	\item Implements the level-$k$ $\Z(\theta)$ using the $\mathsf{Safe}\Z(\theta)$ construction;
	\item Sends the produced quantum state to the Receiver.
	\end{itemize}	    
  \end{algorithmic}
\end{protocolenv}

While it seems like this protocol only involves the Sender and therefore is always secure, recall that the preparation of the $\ket +$ state and the $\mathsf{Safe}\Z(\theta)$ operation are constructed from Stochastically Compromised Preparations, Gates and Measurements (Resource~\ref{res:scpg}). This Resource allows the Receiver in the protocol above to acquire information and add noise by acting as the Eavesdropper in the resource. We will now show that our constructions allows the Sender to avoid leaking information about $\theta$ and produce a state which is negligibly close to the expected encoded $\ket{+_\theta}$ state even in the presence of a malicious Receiver.

\begin{theorem}
  Protocol~\ref{proto:ft-rsp} $\epsilon$-constructs the Level-$k$ Remote State Preparation (Resource~\ref{res:k-rsp}) from Stochastically Compromised Preparations, Gates, and Measurements (Resource~\ref{res:scpg}) with perfect correctness and soundness error $\epsilon \leq 4 p_0(p_c/p_0)^{2^k}$.
\end{theorem}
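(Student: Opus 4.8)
The plan is to verify the two clauses of Definition~\ref{def:ac-sec} for the protocol $\pi=$ Protocol~\ref{proto:ft-rsp} building $\mathcal S=$ Resource~\ref{res:k-rsp} from $\mathcal R=$ Resource~\ref{res:scpg}, and to read off the quantitative bound from the threshold theorem for accurate and private computations (Theorem~\ref{thm:safe_threshold}) applied to the bare two-location circuit $C$ that first prepares $\ket{+}$ and then applies $\Z(\theta)$, so that $L=2$.

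\emph{Correctness} is the easy part. When the Receiver is honest it never uses the Eavesdropper interface of any Stochastically Compromised resource, so every control bit $c_i$ equals $0$, no location is compromised, and the level-$k$ preparation of $\ket{+}$ followed by the $k-\mathsf{Safe}\Z(\theta)$ construction outputs exactly the logical $\ket{+_\theta}$ of the concatenated $[[15,1,3]]$ code. This coincides with the output of Resource~\ref{res:k-rsp} on input $c=0$, so $\pi\mathcal R$ and $\mathcal S$ are perfectly indistinguishable in the honest case.

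\emph{Security against a malicious Receiver} calls for a simulator $\sigma$. I would have $\sigma$ send $c=1$ to Resource~\ref{res:k-rsp} and internally emulate the honest Sender of Protocol~\ref{proto:ft-rsp} together with every Stochastically Compromised resource that the Sender invokes, but with a fixed dummy angle $\theta_0:=0$ in place of the unknown $\theta$. In this emulation $\sigma$ draws each location's compromise flag independently with probability $p_c$, exactly as the real resource would; whenever the Receiver, playing the Eavesdropper, requests a flagged location, $\sigma$ forwards the associated classical label $\lambda$ --- which for $\mathsf{Safe}\Z$ is one of the freshly sampled sub-angles $\alpha_i,\beta_i$ (or a lower-level sub-angle) --- together with the corresponding qubit, and splices back whatever qubit the Receiver returns. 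At the end $\sigma$ applies the ideal full $k$-decoder (the inverse of the encoding map appearing in Resource~\ref{res:k-rsp}) to its emulated output, discards the logical qubit, and submits the remaining $(15^k-1)$-qubit syndrome register as the state $\sigma$ that Resource~\ref{res:k-rsp} demands; the resource then outputs $\mathrm{Enc}(\ketbra{+_\theta}\otimes\sigma)$ at the Receiver's interface.

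\emph{Bounding the advantage, and the main obstacle.} Couple the real execution and $\sigma$'s emulation through a common draw of the compromise flags, and let $B$ be the event that some $k-\mathsf{exSafeRec}$ is bad; by Lemma~\ref{lem:rare_bad_recs} and a union bound over the $L=2$ exSafeRecs, $\Pr[B]\le 2p_0(p_c/p_0)^{2^k}$. On the complementary event $\bar B$ the analysis underlying Theorem~\ref{thm:safe_threshold} shows that the level-$k$ simulation of $C$ is \emph{exactly} accurate and private in the sense of Definition~\ref{def:cor-sec}, so the real output equals $\mathrm{Enc}(\ketbra{+_\theta}\otimes\rho_{\mathrm{synd}})$ with $\rho_{\mathrm{synd}}$ depending only on the Eavesdropper's interaction and independent of the angle parameter. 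The crux --- and what I expect to be the real work --- is showing that on $\bar B$ the Receiver's entire view is identically distributed in both worlds: the flags are coupled, and at the at-most-one compromised location of each $1-\mathsf{exSafeRec}$ the disclosed label $\lambda$ is uniformly random and independent of whichever angle is used (this is precisely the purpose of the $\alpha_i/\beta_i$ split in Definition~\ref{def:safe-z-theta} and of the good/bad/independent accounting of Definition~\ref{def:good_bad_indep}, cf.\ Lemma~\ref{lem:good_correct_level_1}), so the Receiver cannot notice that $\sigma$ used $\theta_0$ rather than $\theta$ and returns the same qubits; hence $\sigma$ reconstructs exactly this $\rho_{\mathrm{synd}}$, and after Resource~\ref{res:k-rsp} re-encodes it with $\ketbra{+_\theta}$ the ideal output equals the real output on $\bar B$. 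The two worlds can therefore differ only on $B$, where the averaged output states are at trace distance at most $2$, so no distinguisher achieves advantage exceeding $2\Pr[B]\le 4p_0(p_c/p_0)^{2^k}$, which also dominates the zero correctness error and proves the stated $\epsilon$. Adaptivity of the Receiver requires no extra care, since Theorem~\ref{thm:safe_threshold} already controls the diamond distance against worst-case correlated and adaptive errors.
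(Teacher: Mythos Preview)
Your proposal is correct and follows essentially the same approach as the paper: identify $L=2$, dispose of correctness by noting that an honest Receiver triggers no compromises, and for security build a simulator that runs Protocol~\ref{proto:ft-rsp} with a substitute angle, decodes, and feeds the syndrome into Resource~\ref{res:k-rsp}; then invoke Theorem~\ref{thm:safe_threshold} to bound the failure probability by $2p_0(p_c/p_0)^{2^k}$ and multiply by $2$ for the diamond-norm diameter. The only cosmetic difference is that the paper samples a uniformly random dummy $\theta'$ whereas you fix $\theta_0=0$, which is immaterial since the whole point of privacy is that the Receiver's view is independent of the angle on the good event; your explicit coupling of the compromise flags and your remark that each leaked sub-angle is marginally uniform (Lemma~\ref{lem:good_correct_level_1}) make this step slightly more transparent than the paper's own one-line justification.
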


\begin{proof}
  First note that the number of locations in the our unencoded circuit is $L=2$: one for the preparation of the $\ket{+}$ state, and another for the application of the $\Z(\theta)$ gate.
  
  To show correctness we must prove that, without adversarial interference, the input-output relation in the ideal and real cases are close. The ideal case corresponds to the Sender and Receiver interacting with the Level-$k$ Remote State Preparation Resource. The real case corresponds to Protocol~\ref{proto:ft-rsp}. In fact, in this situation, there are no errors since the Receiver acts honestly and does not request to access compromised gates. Therefore the protocol produces the expected logical $\ket{+_{\theta}}$ state, and the Resource produce the same state for input $\theta$.

To show the security of Protocol~\ref{proto:ft-rsp} against a malicious Receiver, we make use of the explicit construction of Simulator~\ref{sim:ft-rsp}. We then prove that no Distinguisher can tell apart the malicious Receiver's interactions in (i) the ideal world in which the Simulator has single-query oracle access to the Level-$k$ Remote State Preparation (Resource~\ref{res:k-rsp}), and (ii) the real world in which the honest Sender uses with Stochastically Compromised Preparations, Gates and Measurements (Resource~\ref{res:scpg}).

\begin{simulatorenv}[Simulator for Malicious Receiver in Protocol~\ref{proto:ft-rsp}]\label{sim:ft-rsp}
\item 
  \begin{algorithmic}[0]
    \STATE \textbf{The Simulator:} 
	\begin{itemize}
  	\item Asks Resource~\ref{res:k-rsp} to cheat;
  	\item Samples at random an angle $\theta' \in \Theta$;
  	\item Performs Protocol~\ref{proto:ft-rsp} acting as the Sender with input $\theta'$, with the Receiver as the Eavesdropper;
  	\item Decodes the state obtained at the end of the protocol into a logical qubit and syndrome qubits;
  	\item Sends the syndrome qubits to Resource~\ref{res:k-rsp}, receives in return a state;
  	\item Forwards this state to the Receiver.
	\end{itemize}	    
  \end{algorithmic}
\end{simulatorenv}

  Recall that we have shown above that the level-$k$ preparation of $\ket +$ followed by a $\Z(\theta)$ rotation using the $\mathsf{Safe}\Z(\theta)$ implementation is accurate and private (in the sense of Definition~\ref{def:cor-sec}) with a probability of failure that is bounded above by $2 p_0(p_c/p_0)^{2^k}$. That is, the logical qubit is in the state $\ket{+_{\theta}}$ while the syndrome qubits are in a state that is independent of $\theta$.

  In the simulation, Resource~\ref{res:k-rsp} produces the expected logical state $\ket{+_{\theta}}$ and, because the syndromes produced by the protocol for  $\theta$ or $\theta'$ are indistinguishable as they are directly provided by the distinguisher, the quantum states obtained at the Receiver's interface are indistinguishable. This concludes the proof as the only distinguishing advantage stems from a failure of encoded level-$k$ preparation and rotation to produce the expected logical state together with a syndrome state independent of $\theta$. This happens at most with probability $2 p_0(p_c/p_0)^{2^k}$. The extra factor of two to obtain the construction error is due to the maximum diamond norm. 
\end{proof}
 
\section{Leak- and Fault-Free SDQC Protocols}
\label{sec:composing}
\subsection{Unencoded Composition: Trading Correctness for Security}
\label{sec:issue-correctness}
In this subsection, we develop an SDQC protocol which only deals with leaks, not errors and faults. This covers the case in which the computation is small enough that the noise for the whole computation remains controlled, but the operations on the Verifier's device might leak information due to the physical system used to transmits its encrypted qubits. In this case, we do not need to apply fault-tolerant techniques but still have to suppress these leaks.

\paragraph{A Leak-Tolerant SDQC at the Physicial Level.}
To obtain a leak-tolerant SDQC Protocol, it is possible to compose Protocol~\ref{proto:state_prep} which suppresses the source of leaks with Protocol~\ref{proto:dummyless-sdqc} that performs the secure delegation. This is given in Protocol~\ref{proto:lt-sdqc}. In short, it performs $N$ UBQC computations (Protocol~\ref{proto:ubqc}) on a graph $G = (V,E)$, and each vertex in $V$ requires the Verifier to prepare $\kappa$ leaky $\ket{+_\theta}$ states that are combined by the Prover into a non leaky $\ket {+_\theta}$ state.

\begin{protocolenv}[Leak-Tolerant SDQC Protocol for $\BQP$ Computations]
  \label{proto:lt-sdqc}
  \item 
  \begin{algorithmic}[0]
    
    \STATE \textbf{Public Information:} 
    \begin{itemize}
    \item $G = (V, E, I, O)$, a graph with input and output vertices $I$ and $O$ respectively;
    \item $\mathfrak{W}$, the set of dummyless tests on graph $G$;
    \item $\preceq_G$, a partial order on the set $V$ of vertices;
    \item $N, d, w$, parameters representing the number of runs, the number of computation runs, and the number of tolerated failed tests.
    \item $\kappa$, the number of states used to plug leaks in Protocol~\ref{proto:state_prep}
    \end{itemize}
    
    \STATE \textbf{Verifier's Inputs:} A set of angles $\{\phi(i)\}_{i \in V}$ and a flow $f$ which induces an ordering compatible with $\preceq_G$.
    
    \STATE \textbf{Protocol:}
    \begin{enumerate}
    \item The Verifier samples uniformly at random a subset $C \subset [N]$ of size $d$ representing the runs which will be its desired computation, henceforth called computation runs.
    \item For $k \in [N]$, the Verifier and Prover perform the following:
      \begin{enumerate}
      \item If $k \in C$, the Verifier sets the computation for the run to its desired computation $(\{\phi(i)\}_{i \in V}, f)$. Otherwise, the Verifier samples uniformly at random a test $W$ from the set of dummyless tests $\mathfrak{W}$.
      \item The Verifier and Prover blindly execute the run using the UBQC Protocol~\ref{proto:ubqc}, where each call to the RSP Resource~\ref{res:rsp} is replaced by an execution of Protocol~\ref{proto:state_prep}.
      \item If it is a test, the Verifier computes the XOR of measurement outcomes on the vertices from set $W$. The test accepts if it is $0$ and fails otherwise.
      \end{enumerate}
    \item At the end of all runs, let $x$ be the number of failed tests. If $x \geq w$, the Verifier rejects and outputs $\bot$.
    \item Otherwise, the Verifier accepts the computation. It performs a majority vote on the output results of the computation runs and sets the result as its output.
    \end{enumerate}  
  \end{algorithmic}
\end{protocolenv}

The security of Protocol~\ref{proto:lt-sdqc} is then captured by Theorem~\ref{thm:lt-sdqc}.

\begin{theorem}[Leak-Free Secure Delegation of Quantum Computations]
\label{thm:lt-sdqc}

  Let $d$ be proportional to $N$, and let $c$ the fixed bounded error of the $\BQP$ class of computations.  Let $\epsilon$ be the (constant) lower bound on the probability that a test sampled uniformly from set $\mathfrak{W}$ rejects in the presence of a $\Z$ error.  Let $w$ be the maximum number of test rounds allowed to fail, chosen such that $w < \frac{2c-1}{2c-2}(N-d)(1 - \epsilon)$. Let the authorized leak be defined as $l_{\cptp U} = (G, \mathfrak{W}, \preceq_G)$, and let $\kappa$ be the number of states used to plug leaks in Protocol~\ref{proto:state_prep}.

  Then, Protocol~\ref{proto:lt-sdqc} $\eta_{\mathrm{sec}}(N)$-constructs the Secure Delegated Quantum Computation (Resource~\ref{res:dqc}) from Stochastically Leaky Remote State Preparations (Resource \ref{res:stoca}) in the Abstract Cryptography framework, for $\eta_{\mathrm{sec}}(N)$ negligible in $N$.
  
\end{theorem}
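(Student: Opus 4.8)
The statement is purely a composition result, so the plan is to assemble it from the two building blocks already proved in the excerpt: Corollary~\ref{cor:perf-st}, which says that Protocol~\ref{proto:state_prep} with $\kappa$ calls $\frac{7}{8}p_l^{\kappa}$-constructs the Ideal RSP Resource~\ref{res:rsp} from $\kappa$ copies of the Stochastically Leaky RSP Resource~\ref{res:stoca}; and Theorem~\ref{thm:sec-dummyless}, which says that Protocol~\ref{proto:dummyless-sdqc} $\eta(N)$-constructs the SDQC Resource~\ref{res:dqc} from the Ideal RSP Resource~\ref{res:rsp} with $\eta(N)$ negligible in $N$. The key observation is that Protocol~\ref{proto:lt-sdqc} is, up to relabelling, exactly the protocol obtained by taking Protocol~\ref{proto:dummyless-sdqc} and replacing each of its invocations of Resource~\ref{res:rsp} by one execution of Protocol~\ref{proto:state_prep} (with its internal parameter instantiated to $\kappa$). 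Hence the composability of the AC framework does essentially all the work once the bookkeeping is in place.

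\textbf{Steps.} First I would fix the role assignment in every RSP sub-call: the Verifier plays the Sender and the Prover plays the Receiver, so that a malicious Prover is a malicious Receiver, for which Corollary~\ref{cor:perf-st} (via Lemmas~\ref{lem:cor-rot-st} and~\ref{lem:sec-rot-st}) already supplies correctness and a simulator. Second, I would count the sub-resources: Protocol~\ref{proto:lt-sdqc} performs $N$ UBQC runs on $G=(V,E)$, each run preparing one $\ket{+_\theta}$ state per vertex, so it uses $N\lvert V\rvert$ copies of the Ideal RSP Resource~\ref{res:rsp}, and each such copy is in turn realized from $\kappa$ copies of Resource~\ref{res:stoca}. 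Third, parallel composition of the $N\lvert V\rvert$ independent instances of Protocol~\ref{proto:state_prep} $\bigl(N\lvert V\rvert\cdot\tfrac{7}{8}p_l^{\kappa}\bigr)$-constructs the $N\lvert V\rvert$ Ideal RSP resources consumed by Protocol~\ref{proto:dummyless-sdqc}. Fourth, sequential composition with Protocol~\ref{proto:dummyless-sdqc}, whose guarantee is Theorem~\ref{thm:sec-dummyless} with $l_{\cptp U}=(G,\mathfrak W,\preceq_G)$ and $w<\frac{2c-1}{2c-2}(N-d)(1-\epsilon)$, yields that Protocol~\ref{proto:lt-sdqc} $\eta_{\mathrm{sec}}(N)$-constructs Resource~\ref{res:dqc} with $\eta_{\mathrm{sec}}(N)\le \eta(N)+\tfrac{7}{8}\,N\lvert V\rvert\,p_l^{\kappa}$. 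Correctness in the fully honest case is immediate because Protocol~\ref{proto:state_prep} is perfectly correct (Lemma~\ref{lem:cor-rot-st}) and the remaining structure is verbatim Protocol~\ref{proto:dummyless-sdqc}. Finally, I would observe that $\eta(N)$ is already negligible in $N$, and that choosing $\kappa$ to grow with $N$ (any $\kappa=\omega(\log N)$, e.g. $\kappa=\lceil\log^2 N\rceil$) makes $N\lvert V\rvert p_l^{\kappa}$ decay faster than any inverse polynomial since $p_l<1$; hence $\eta_{\mathrm{sec}}(N)$ is negligible in $N$, which is where the advertised trade-off between computation length and security level manifests.

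\textbf{Main obstacle.} There is no new quantum-information content here; the genuine difficulty has already been discharged in Corollary~\ref{cor:perf-st}. What remains to be done carefully is the framework-level plumbing: one must check that the $N\lvert V\rvert$ executions of Protocol~\ref{proto:state_prep} are used as genuinely independent sub-resources (their converters share no state beyond the exposed interfaces), so that the parallel-composition lemma applies with an additive error, and that the filtered cheat interfaces of Resource~\ref{res:stoca} compose consistently with the filtered $(e,c)$ interface of Resource~\ref{res:dqc}. Relatedly, the simulator against a malicious Prover for Protocol~\ref{proto:lt-sdqc} is the composition of the simulator of Theorem~\ref{thm:sec-dummyless} with $N\lvert V\rvert$ copies of Simulator~\ref{sim:rot}, and one must verify that the order in which each copy releases its states, leaks, and corrections respects the flow-compatible ordering imposed by the UBQC runs so that the distinguisher's view is reproduced faithfully. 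These are routine verifications given the composability of the AC framework, but they are the only places where care is needed.
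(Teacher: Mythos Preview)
Your proposal is correct and follows essentially the same approach as the paper: invoke AC composability, count the $N\lvert V\rvert$ calls to Protocol~\ref{proto:state_prep}, add the errors to obtain $\eta_{\mathrm{sec}}(N)\le \eta(N)+\tfrac{7}{8}N\lvert V\rvert p_l^{\kappa}$, and conclude negligibility by letting $\kappa$ grow with $N$. The only cosmetic difference is that the paper states the sufficient growth condition as $\kappa\in O(N)$ rather than your sharper $\kappa=\omega(\log N)$, and it does not spell out the simulator-composition and interface-filtering checks you flag as routine.
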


\begin{proof}

The analysis of the correctness and security of this protocol relies on the composition rule from Abstract Cryptography. The correctness and security errors of the composition are the sum of the correctness and security errors of its constituents.

Regarding the correctness, each constituent protocol is perfectly correct so that the resulting protocol is also perfectly correct.

For the security error, it is the sum of the security error for Protocol~\ref{proto:dummyless-sdqc} and the cumulated security errors for each of the $N\times|V|$ call to the the Plugged Rotated Remote State Preparation (Protocol~\ref{proto:state_prep}) as an implementation of the Remote State Preparation (Resource~\ref{res:perf}). That is:
\begin{equation}
  \label{eq:sec_err}
  \eta_{\mathrm{sec}}(N) \leq \eta(N) + N \times |V| \times \frac{7}{8} p_l^\kappa,
\end{equation}
where $\eta(N)$ is negligible in $N$ as per the security of Protocol~\ref{proto:dummyless-sdqc} given in Theorem~\ref{thm:sec-dummyless}. As a result, $\eta_{\mathrm{sec}}(N)$ is negligible in $N$ for $\kappa \in O(N)$.
\end{proof}

\paragraph{Correctness vs.\ Security Trade-off.}
While the parallel repetition of leaky RSP resources can lead to a significant improvement in the security of Protocol~\ref{proto:dummyless-sdqc} when only Stochastic Leak RSP (Resource~\ref{res:stoca}) is available, it poses the risk of quickly deteriorating the correctness of the leak-tolerant protocol in the presence of noise.

To see this, we amend the resources described in Section \ref{subsec:leak_model} to integrate some specific secret-independent noise. This is done by applying a noise channel to the quantum output of the RSP. This corresponds to the case where states produced by the Verifier are sent to the Prover using a noisy quantum channel.  To ease the discussion, we consider below the special case of a depolarising channel with probability $p_\mathit{noise}$ of producing a completely mixed state. We call such resources Noisy Leaky RSPs.

We now show that $\kappa$ such RSPs can be used to construct a non-leaky RSP with a noise parameter equal to $1 - (1-p_\mathit{noise})^\kappa$.  In Protocol~\ref{proto:state_prep}, each of the  $\ket{+_\theta}$ states one-time-pads the final state. Therefore the noise of a completely mixed state accumulates and yields a completely mixed, and therefore useless, target state. For the depolarising noise described above, the probability to obtain a valid final state hence equals the probability that all noisy RSP resources return perfect states:
\begin{align*}
  (1-p_\mathit{noise})^\kappa,
\end{align*}
which decreases exponentially in the number of states used in Protocol \ref{proto:state_prep}.  The Simulator for the noisy version of the construction works exactly as the one described Section~\ref{sec:rotated-rsp}. As a consequence of Theorem~\ref{thm:boost} and Lemma~\ref{lem:stoc-to-perf}, we know that this construction therefore implements a noisy non-leaky RSP resource $\epsilon_\text{sec}$-securely, where $\epsilon_\text{sec} = p_\mathit{leak}^\kappa$.

While this bound is also exponential in $\kappa$, it admits a different base than the noise bound above, which we can exploit to our advantage. In particular, for $p_\mathit{noise}, p_\mathit{leak} \in (0, 1/2)$ the gap between the resulting noise parameter and $\epsilon_\text{sec}$ grows exponentially in $\kappa$ as well.  For a fixed noise parameter bound $c$ we get:
\begin{align*}
  (1-p_\mathit{noise})^\kappa > c,
\end{align*}
which immediately implies the following upper bound on $\kappa$:
\begin{align*}
  \kappa < \frac{\log(c)}{\log(1-p_\mathit{noise})},
\end{align*}
This, in turn, implies a lower bound on the security that is achieved with this bounded number of repetitions:
\begin{align*}
  \epsilon_\text{sec} = p_\mathit{leak}^\kappa > p_\mathit{leak}^{\frac{\log(c)}{\log(1-p_\mathit{noise})}} = c^{1/\log_{p_\mathit{leak}}(1-p_\mathit{noise})}.
\end{align*}
In effect, this means there exists a trade-off between secret-dependent (leaky) noise and secret-independent noise in the RSP. Using parallel repetitions, it is possible to suppress the secret-dependency of the noise while accumulating more secret-independent noise in the final distilled state. 
 
\subsection{Encoded Composition: Trading Complexity for Robustness}
\label{sec:ltft-sdqc}
The clearly identifiable drawback of the previous subsection is that for any fixed per-gate noise level, and for any security error, there will always exist a maximum computation size such that beyond this size, computations cannot be made secure within the chosen security error. It does come however with an advantage. All the complexity of the protocol is taken care of by the Prover, making it economically advantageous in scenarios where multiple clients delegate their computation to a single quantum machine.

In this subsection, we explore the complementary regime where the Verifier's is more capable but manages to delegate arbitrary long computations in spite of noise on its side and against an arbitrary Prover. Indeed, we will require the Verifier to be able to prepare and perform rotations for single qubit encoded states contained in registers whose size increases polylogarithmically with the length of the computation and with the inverse of the desired security error. In this sense it will provide a scalable and efficient solution to the problem of secure delegation of quantum computation between a noisy Verifier and an arbitrary dishonest Prover.

The protocol follows a similar template as the one exhibited in the previous subsection: the Verifier relies on its ability to perform a RSP and then delegates the computation using Protocol~\ref{proto:dummyless-sdqc}. The specificity here is the use of Protocol~\ref{proto:ft-rsp} to construct a Level-$k$ RSP (Resource~\ref{res:k-rsp}) that produces a logically encoded $\ket{+_{\theta}}$ state. Then the encoded qubits are directly used by the Prover according to the instructions given by the Verifier, at the logical level. Note that the Verifier does not need to assist the Prover in performing the error correction as the secrets are only used to protect the logical information and not physical qubits. In this situation, the Prover is thus fully autonomous.

We thus define the following protocol, for a computation given as a measurement patter on a graph $G=(V,E)$:
\begin{protocolenv}[Secure Delegation of Fault-Tolerance Quantum Computation]
\label{proto:ft-sdqc}
\item
\begin{algorithmic}[0]
\STATE \textbf{Public Information:} 
\begin{itemize}
\item $G = (V, E, I, O)$, a graph with input and output vertices $I$ and $O$ respectively;
\item $\mathfrak{W}$, the set of dummyless tests on graph $G$;
\item $\preceq_G$, a partial order on the set $V$ of vertices;
\item $N, d, w$, parameters representing the number of runs, the number of computation runs, and the number of tolerated failed tests;
\item $k$, the concatenation level used to encode logical information.
\end{itemize}
    
\STATE \textbf{Verifier's Inputs:} A set of angles $\{\phi(i)\}_{i \in V}$ and a flow $f$ which induces an ordering compatible with $\preceq_G$.

\STATE \textbf{Protocol:}
The Verifier and Prover execute Protocol~\ref{proto:dummyless-sdqc} for the intended pattern on graph $G=(V,E)$ in the following way:
\begin{itemize}
\item For each call to the RSP, they run Protocol~\ref{proto:ft-rsp};
\item For each gate or measurement instructed by the Verifier, the Prover implements the corresponding level-$k$ simulation, considering for measurements that  the outcome is that for the measurement of the corresponding logical qubit.
\end{itemize}
\end{algorithmic}
\end{protocolenv}

We can now state the main theorem.

\begin{theorem}[Threshold Theorem for Secure Delegation of Fault-Tolerant Quantum Computations]
  Let $d$ be proportional to $N$, and let $c$ be the fixed bounded error of the $\BQP$ class of computations.  Let $\epsilon$ be the (constant) lower bound on the probability that a test sampled uniformly from set $\mathfrak{W}$ rejects in the presence of a $\Z$ error.  Let $w$ be the maximum number of test rounds allowed to fail, chosen such that $w < \frac{2c-1}{2c-2}(N-d)(1 - \epsilon)$.  Let the leak be defined as $l_{\cptp U} = (G, \mathfrak{W}, \preceq_G)$, and let the Verifier use a register of size $15^{k}$ corresponding to $k$ levels of concatenation of the quantum $[[15,1,3]]$ Reed-Muller code with $15^k \in O(Nlog(|V|))$.
  Let $p_c$ be the probability that a location is compromised, and $p_0$ be the threshold probability from Lemma~\ref{lem:rare_bad_recs} and Theorem~\ref{thm:safe_threshold}

  Then, for $p_c < p_0$, Protocol~\ref{proto:ft-sdqc} $\eta(N)$-constructs the Secure Delegated Quantum Computation (Resource~\ref{res:dqc}) from Stochastically Compromised Preparations, Gates and Measurements (Resource \ref{res:scpg}) in the Abstract Cryptography framework for $\eta(N)$ negligible in $N$.
\end{theorem}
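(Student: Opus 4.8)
The plan is to follow the template of the proof of Theorem~\ref{thm:lt-sdqc} and exploit the composability of the Abstract Cryptography framework, decomposing Protocol~\ref{proto:ft-sdqc} into pieces whose construction errors simply add up. The first piece is the observation that Protocol~\ref{proto:dummyless-sdqc}, executed with logically encoded qubits in place of physical ones, still $\eta(N)$-constructs the Secure Delegated Quantum Computation Resource~\ref{res:dqc} when it is fed the Level-$k$ Remote State Preparation (Resource~\ref{res:k-rsp}) instead of the ideal physical RSP (Resource~\ref{res:rsp}), with $\eta(N)$ as in Theorem~\ref{thm:sec-dummyless}. This is because Resource~\ref{res:k-rsp} is secure \emph{by design}: it hands the Prover the logical $\ket{+_\theta}$ together with a syndrome register whose state is independent of $\theta$, so it reveals no more about $\theta$ than the ideal physical RSP; and every further operation the honest Prover performs --- entangling according to $G$, logical $\Z(\delta(i))$ rotations realised transversally, logical $\X$ measurements, and flow corrections --- depends only on the public data $(G,\preceq_G)$ and on the \emph{blinded} angles $\delta(i)=\phi'(i)+\theta(i)+r(i)\pi$, hence cannot reintroduce any dependence on the secrets $\theta(i),r(i)$. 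Consequently the twirl over $(\theta(i),r(i))$ underpinning the soundness proof of Protocol~\ref{proto:dummyless-sdqc} carries over verbatim at the logical level, reducing any Prover deviation to a logical $\Z$-error, which the dummyless tests $\mathfrak{W}$ --- themselves prepared and verified at the logical level via the $\mathsf{Safe}$ constructions (the relevant $\X$- and $\Y$-eigenstates are special cases of $\ket{+_\theta}$, and $N_G^{odd}=\emptyset$ means no $\Z$-eigenstates are needed) --- catch with the same constant probability $\epsilon$; the choice $w<\frac{2c-1}{2c-2}(N-d)(1-\epsilon)$ then gives soundness negligible in $N$ by the same concentration argument.

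The second piece is the theorem on Protocol~\ref{proto:ft-rsp}: for $p_c<p_0$ it constructs the Level-$k$ RSP (Resource~\ref{res:k-rsp}) from Stochastically Compromised Preparations, Gates and Measurements (Resource~\ref{res:scpg}) with error at most $4p_0(p_c/p_0)^{2^k}$, a bound valid against an arbitrary malicious Receiver since by Theorem~\ref{thm:safe_threshold} accuracy and privacy (Definition~\ref{def:cor-sec}) of the level-$k$ preparation of $\ket{+}$ followed by $\mathsf{Safe}\Z(\theta)$ fail only with that doubly-exponentially small probability. Protocol~\ref{proto:ft-sdqc} invokes this subprotocol once per vertex of $G$ in each of the $N$ runs, i.e.\ at most $N\,|V|$ times, so the AC composition theorem --- which also supplies the simulator against a malicious Prover by plugging Simulator~\ref{sim:ft-rsp} into the simulator of Theorem~\ref{thm:sec-dummyless} --- adds a cumulated RSP error of at most $N\,|V|\cdot 4p_0(p_c/p_0)^{2^k}$. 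Putting the two pieces together, Protocol~\ref{proto:ft-sdqc} $\eta_{\mathrm{tot}}(N)$-constructs Resource~\ref{res:dqc} from Resource~\ref{res:scpg} with
\begin{equation*}
  \eta_{\mathrm{tot}}(N)\;\le\;\eta(N)\;+\;N\,|V|\cdot 4p_0\,(p_c/p_0)^{2^k}.
\end{equation*}
Choosing the register size so that $15^k$ is of order $N\log|V|$ makes $k$ of order $\log(N\log|V|)$ and hence $2^k$ of order $(N\log|V|)^{\log_{15}2}$, which grows polynomially in $N$; since $p_c<p_0$, the factor $(p_c/p_0)^{2^k}$ decays faster than any inverse polynomial in $N$, so the second term is negligible in $N$ even after multiplication by the polynomial prefactor $N\,|V|$. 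As $\eta(N)$ is already negligible in $N$ by Theorem~\ref{thm:sec-dummyless}, $\eta_{\mathrm{tot}}(N)$ is negligible in $N$, which is the claim.

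The third piece --- needed for the correctness clause and, more importantly, for the robustness guarantee that honest parties accept the correct outcome even when both devices are stochastically noisy --- combines the standard fault-tolerance threshold theorem (Theorem~\ref{thm:ft}) applied to the Prover's autonomous level-$k$ FTQC with Theorem~\ref{thm:safe_threshold} applied to the Verifier's $\mathsf{Safe}$ RSP. In the strictly noiseless honest run the level-$k$ simulations are exact, so correctness is perfect as in Theorem~\ref{thm:lt-sdqc}; below threshold, in the presence of stochastic noise, the Prover's simulation of the whole polynomially large logical circuit fails only with probability negligible in $N$, so the number of failed test rounds stays below $w$ and the majority vote over the computation runs returns the correct result, except with probability negligible in $N$.

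The part I expect to require the most care is the first piece: arguing that the soundness reduction of Theorem~\ref{thm:sec-dummyless} really does survive being run on logically encoded data over which the Prover has full physical control. Concretely one must verify that disclosing the ($\theta$-independent) syndrome register alongside the logical $\ket{+_\theta}$ does not erode blindness, and that realising the Prover's logical $\delta(i)$-measurement by a transversal $\Z(-\delta(i))$ followed by a transversal $\X$ measurement is harmless precisely because $\delta(i)$ is public whereas the secret $\theta(i)$ has already been absorbed into the logical state by the $\mathsf{Safe}\Z(\theta)$ construction on the Verifier's side --- this is exactly the point of Definition~\ref{def:safe-z-theta} and of the analysis in \S~\ref{sec:modeling}. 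A secondary, bookkeeping-level point is to make sure that the threshold $p_0$ quoted in the statement --- the one counting pairs of compromised locations in the largest $1-\mathsf{exSafeRec}$ --- also lies below the ordinary fault-tolerance threshold governing the Prover's side; since a $\mathsf{SafeRec}$ contains strictly more locations than the corresponding $\mathsf{Rec}$, its threshold is the smaller of the two, so taking $p_0$ to be this safe threshold suffices for both uses simultaneously and no extra condition is needed.
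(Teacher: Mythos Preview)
Your proposal is correct and takes essentially the same approach as the paper: both invoke AC composition to bound the construction error by $\eta(N)+N\,|V|\cdot 4p_0(p_c/p_0)^{2^k}$, with the first term coming from Theorem~\ref{thm:sec-dummyless} and the second from the $N|V|$ calls to Protocol~\ref{proto:ft-rsp}, and then verify that the stated register size $15^k\in O(N\log|V|)$ makes the second term negligible. You are in fact more explicit than the paper on two points it largely takes for granted --- why Resource~\ref{res:k-rsp} can stand in for Resource~\ref{res:rsp} in the soundness argument of Protocol~\ref{proto:dummyless-sdqc}, and why the honest-case correctness survives noise via the Prover's autonomous fault-tolerance --- and your closing observation that the $\mathsf{SafeRec}$ threshold is automatically the binding one is a nice sanity check the paper omits.
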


\begin{proof}
  The proof relies on composing protocols using Abstract Cryptography. As outlined in Protocol~\ref{proto:ft-sdqc}, it relies onto Protocol~\ref{proto:dummyless-sdqc} calling Protocol~\ref{proto:ft-rsp} in place of the original RSP (Resource~\ref{res:k-rsp}). As a result, the error in constructing the SDQC resource is the sum of the error pertaining to Protocol~\ref{proto:dummyless-sdqc} and of the error pertaining to Protocol~\ref{proto:ft-rsp} multiplied by the number of times it is called, i.e. $N\times|V|$. This gives the following bound on the construction error:
  \begin{equation}
  \epsilon_{\mathrm{SDFTQC}} \leq \eta(N) + N\times|V|\times 4 p_0 \left( \frac{p_{\mathrm{c}}}{p_0} \right)^{2^k}.
  \end{equation}
Clearly, this can be made as small as desired by a modest increase of $k$. To get a more concrete result, we can further request that both terms contributing to $\epsilon_{\mathrm{SDFTQC}}$ are of comparable magnitude, and see how the size of the register that the Verifier needs to manipulate is impacted.

To ensure that $\epsilon_{\mathrm{SDFTQC}} \leq 2 \eta(N)$ we must set $k$ according to
\begin{equation}
N\times|V|\times 4 p_0 \left( \frac{p_{\mathrm{c}}}{p_0} \right)^{2^k} \leq \eta(N)  
\end{equation}
That is
\begin{equation}
  2^k \leq \left( \frac{\log( N \times |V| \times 4 p_0 / \eta(N))}{\log(p_0/p_{\mathrm{c}})}  \right)
\end{equation}
A closer inspection into the specific form of $\eta(N)$ reveals that it is in $O(\exp(-N))$ so that we have a register size equal to $15^k$ that is in $\tilde O(N)$.
\end{proof}

\begin{corollary}[Efficiency and Scalability]
  Protocol~\ref{proto:ft-sdqc} is both efficient and scalable because
\begin{itemize}
\item \textit{(Efficiency)}: Given a computation class defined through a fixed graph $G=(V,E)$, reaching a target construction error imposes a logarithmic overhead in the number of repetitions and the size of the single logical qubit register that the Verifier needs to manipulate;
  \item \textit{(Scalability)}: For a fixed construction error, increasing the size of the graph induces only a logarithmic increase in the size of the register. 
\end{itemize}
\end{corollary}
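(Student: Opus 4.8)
The plan is to derive both halves of the statement from the construction-error bound obtained in the preceding threshold theorem, namely
\begin{equation*}
  \epsilon_{\mathrm{SDFTQC}} \leq \eta(N) + N \times |V| \times 4 p_0 \left( \frac{p_c}{p_0} \right)^{2^k},
\end{equation*}
where $\eta(N)$ is the security error of the inner dummyless SDQC Protocol~\ref{proto:dummyless-sdqc} from Theorem~\ref{thm:sec-dummyless}, and the second term aggregates the $N\times|V|$ invocations of the Level-$k$ RSP Protocol~\ref{proto:ft-rsp}, each contributing at most $4 p_0 (p_c/p_0)^{2^k}$. Both the efficiency and the scalability assertions are then obtained simply by \emph{inverting} this inequality, i.e.\ by reading off how small $N$ and $k$ must be taken so that $\epsilon_{\mathrm{SDFTQC}}$ falls below a prescribed target; everything else is elementary once the inequality is in hand.

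For \emph{efficiency}, I would fix the computation class (hence $|V|$) together with a target construction error $\epsilon$, and split the budget by demanding $\eta(N) \leq \epsilon/2$ and $N\times|V|\times 4 p_0 (p_c/p_0)^{2^k} \leq \epsilon/2$ separately. As noted in the proof of the preceding theorem, $\eta(N) \in O(\exp(-N))$ (the dominant terms in Theorem~\ref{thm:sec-dummyless} are Hoeffding-type tail bounds on the majority vote over the $d$ computation rounds and on the empirical test-failure count), so the first requirement is met by $N = \Theta(\log(1/\epsilon))$, logarithmic in the inverse target error. Plugging this $N$ into the second requirement, taking logarithms and using $p_c < p_0$, gives
\begin{equation*}
  2^k \geq \frac{\log\!\left( 8\, N\, |V|\, p_0 / \epsilon \right)}{\log(p_0/p_c)},
\end{equation*}
so $2^k = O(\log(N|V|/\epsilon)) = O(\log(1/\epsilon))$, whence $k = O(\log\log(1/\epsilon))$ and the logical register size $15^k = (2^k)^{\log_2 15} = O\!\left((\log(1/\epsilon))^{\log_2 15}\right)$ is polylogarithmic in $1/\epsilon$. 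Thus reaching the target error costs only a (poly)logarithmic overhead both in the number of repetitions $N$ and in the size of the single-logical-qubit register the Verifier manipulates, which is precisely the efficiency claim.

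For \emph{scalability}, I would instead fix $\epsilon$ and let $|V|$ grow. The key observation is that $\eta(N)$ does not depend on $|V|$, so the choice $N = \Theta(\log(1/\epsilon))$ remains valid and $N$ need not grow with the graph; all the $|V|$-dependence sits in the second term. The displayed lower bound on $2^k$ shows its right-hand side grows like $\log|V|$, so it suffices to take $2^k = \Theta(\log|V|)$, i.e.\ $k = \Theta(\log\log|V|)$, and then $15^k = (2^k)^{\log_2 15} = \Theta\!\left((\log|V|)^{\log_2 15}\right)$ grows only polylogarithmically in the size of the computation graph, consistently with the choice $15^k \in O(N\log|V|)$ recorded in the theorem. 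This establishes scalability.

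The single delicate point — the one I expect to require the most care — is the claim that $\eta(N)$ is not merely negligible but decays exponentially in $N$ \emph{with a rate independent of $|V|$}: if that rate degraded as the graph grew, then fixing $\epsilon$ while increasing $|V|$ could force $N$, and hence the overall cost, to blow up, invalidating scalability. This uniformity has to be read off from the structure of Theorem~\ref{thm:sec-dummyless}, whose error is governed by concentration inequalities in $N$ whose exponents depend only on the fixed ratios $d/N$ and $w/N$ and on the fixed bounded-error and test-rejection constants, not on $G$. Granting this, the rest is routine algebra on the displayed bound; one should only keep in mind that ``logarithmic'' in the statement is to be read up to the constant exponent $\log_2 15$ coming from the branching factor of the $[[15,1,3]]$ code, so the register in fact grows polylogarithmically rather than strictly logarithmically in the relevant parameters, as already flagged in the introduction.
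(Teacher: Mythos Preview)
Your proposal is correct and follows essentially the same approach as the paper: the corollary is stated without a separate proof and is meant to be read off directly from the construction-error bound and the observation $\eta(N)\in O(\exp(-N))$ already established in the proof of the preceding theorem, exactly as you do. Your flagging of the $|V|$-independence of the exponential rate in $\eta(N)$ is a genuine subtlety that the paper leaves implicit, and your remark that ``logarithmic'' should be read as polylogarithmic (due to the $\log_2 15$ exponent from the code's branching factor) is also a fair clarification of the paper's informal phrasing.
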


\section{Discussion and Future Work}
\label{sec:concl}
\paragraph{Summary and Significance of Results.}
  
In this paper we have presented a novel model for taking account noisy operations on the Verifier's side and shown how it impacts the security of its operations in the context of Secure Fault-Tolerant Delegated Quantum Computations. Dealing with noise in this context forced us to consider how information about the state is leaked due to the imperfections. We have therefore redesigned staple elements of fault-tolerant quantum computations. We have constructed first a protocol which does not increase the complexity of Verifier operations, but is only able to deal with leaky operations. As soon as noise is added to this protocol, a trade-off between security and correctness appears. Our second protocol is able to deal both with noisy and leaky operations, however the Verifier now needs to implement its operations in a fault-tolerant manner. Thankfully, the circuit applied by the Verifier is only of depth $2$, meaning that it is not required to perform arbitrary computations. Our protocols therefore settle the long-standing open-question of efficient verification of quantum computations in the fault-tolerant regime.

Beyond this use-case, we can use our protocol as a template for designing efficient benchmarking tools to measure the performances of future early fault-tolerant computers. Passing the tests from our SDQC protocol for a given graph would also guarantee that any $\BQP$ computations using the same graph are perfectly implementable on the machine. Furthermore, information about which tests fail or pass can be used to determine noise parameters given appropriate noise models.

Our techniques for removing secret-dependent noise, whether it be by state distillation or gate compilation, could be applied to other cryptographic protocols so long as they require a step involving RSP in the $\X - \Y$ plane. This would allow them to also mitigate the same type of leakage as the one we consider in the current paper.

While it is possible to run these protocols with the Verifier and the Prover being in separate facilities, we can also use it on a single machine if we are willing to add physical assumptions. For example, they can be run so long as the Verifier section is implemented in a trusted environment, or if we can bound the amount of leakage from the section implementing the Verifier to the section of the machine performing Prover operations.

\paragraph{Open Questions and Future Work.}

The first pair of open questions which exceed the scope of this paper are (i) how to deal with leaks that are modelled as always leaking only a small part of the secret parameters, and (ii) how to deal with noise models in which a CPTP map $\mathcal{F}_{\theta}$ which depends on $\theta$ is always applied to the state $\ketbra{+_\theta}$ but with the guarantee that it is $\epsilon$-close to the identity in diamond norm, for a small value of $\epsilon$.

One issue with the dummyless SDQC protocol used in this work is that it can only handle computations with classical outputs since all qubits need to be measured. Therefore is it well suited for $\BQP$ computations but does not cover the case where a quantum output is desirable. On the other hand, the fact that multiple runs need to be repeated in our SDQC protocol makes it impossible to handle quantum inputs, of which the Verifier may only have one copy. Other SDQC protocols can handle quantum inputs and outputs but require the Verifier to produce both $\ket{+_\theta}$ states but also computational basis states. In order to use these protocol with our model of leaks, errors and faults, novel RSP techniques would need to be designed, in particular a secure $1-\mathsf{Ga}$ similar to the $1-\mathsf{Safe}\Z(\theta)$ we describe.

The Verifier in our protocol does not requires a full fault-tolerant quantum computer, but still needs to be able to operate on possibly large encoded states. Ideally, we would like it to be as close to purely classical as possible. While the question of a fully classical Verifier with statistical security is still open even in the non fault-tolerant case, there have been attempts at FT SDQC protocols in which the Verifier only performs individual single-qubit physical operations~\cite{TFMI17fault}. Unfortunately, we show in Appendix~\ref{app:attack-tomoyuki} that this protocol is broken. Therefore, the question of the existence of FT SDQC protocols with single-qubit physical Verifier operations remains open.

\ifsodasubmission
\else
	\paragraph{Acknowledgements.}
	The authors want to thank Elham Kashefi for fruitful discussions and for keeping them motivated to work on this topic.
	TK was supported by a Leverhulme Early Career Fellowship.
	DL acknowledges support from the French ANR Project ANR-21-CE47-0014 (SecNISQ), and from the Quantum Advantage Pathfinder (QAP) research program within the UK's National Quantum Computing Center (NQCC).
	HO acknowledges the support from France 2030 under the French National Research Agency award number “ANR-22-PNCQ-0002” (HQI).
	This work has been co-funded by the European Commission as part of the EIC accelerator program under the grant agreement 190188855 for SEPOQC project, and by the Horizon-CL4 program under the grant agreement 101135288 for EPIQUE project.
\fi
 
\newpage

\bibliographystyle{alpha} 
\bibliography{leaky-rsp-stripped.bbl}

\newpage

\appendix
    
\section{Formal UBQC Protocol}\label{app:ubqc}
\begin{protocolenv}[UBQC Protocol for Classical Inputs and Outputs]
  \label{proto:ubqc}
  \item
  \begin{algorithmic}[0]

    \STATE \textbf{Verifier's Inputs:} A measurement pattern $(G, I, O, \{\phi(i)\}_{i\in V}, f)$.

    \STATE \textbf{Protocol:}
    \begin{enumerate}
    \item The Verifier sends the graph's description $G = (V, E)$ and a measurement order $\preceq$ compatible with the flow $f$ to the Prover;
    \item For all $i \in V$, the Verifier chooses a random $\theta(i) \in \Theta$, performs a call to the RSP Resource~\ref{res:rsp}, which prepares $\ket{+_{\theta(i)}}$ and sends it to the Prover.
    \item The Prover applies a $\CZ$ gate between qubits $i$ and $j$ if $(i,j) \in E$ is an edge of $G$;
    \item For all $i \in V$, in the order specified by $\preceq$, the Verifier samples a random bit $r(i)$, computes the measurement angle $\delta(i)$ and sends it to the Prover, receiving in return the corresponding measurement outcome $b(i)$:
      \begin{align*}
      	s(j) & = b(j) \oplus r(j), \\
        s_X(i) & = \bigoplus_{j \in S_X(i)} s(j), \quad
        s_Z(i) = \bigoplus_{j \in S_Z(i)} s(j), \\
        \delta(i) & = (-1)^{s'_X(i)}\phi(i) + \theta(i) + (s_Z(i) + r(i)) \pi ,
      \end{align*}
    \item The Verifier sets $\{s(i)\}_{i \in O}$ as its output.
    \end{enumerate}
  \end{algorithmic}
\end{protocolenv}

\section{Attack on Previous Fault-Tolerant SDQC Protocol}
\label{app:attack-tomoyuki}
We present here an attack on the only previously presented fault-tolerant SDQC protocol from \cite{TFMI17fault}. Their claim is that fault-tolerant SDQC is possible with a Verifier that performs only physical $\X$ and $\Z$ measurements and classical post-processing. To do so, they rely on the protocol from \cite{FK17unconditionally} to perform the SDQC computation, only replacing the Verifier having to send the logical state with a first step of logical Remote State Preparation via a dedicated gadget to prepare the encoded states on the Prover's side. The code used is the CSS code. We first describe this gadget in Protocol~\ref{proto:tomo-gadget}, presented in Section 2 from \cite{TFMI17fault}, and then show how to break the protocol's security guarantees.

\begin{protocolenv}[Logical Remote State Preparation Gadget]
  \label{proto:tomo-gadget}
\begin{algorithmic} [0]
    \STATE \begin{enumerate}
    	\item The Sender samples three sets of five bits $\{c_i, a_i, r_i\}_{1\leq i \leq 5}$ from $\bin^{15}$.\footnotemark
    	\item The Receiver prepares 5 copies of the logical EPR state $\frac{1}{\sqrt{2}}(\ket{00} + \ket{11})$ and sends of of each EPR pair to the Sender.
    	\item For $1 \leq i \leq 5$:
    	\begin{enumerate}
    		\item The Sender:
    		\begin{itemize}
    			\item If $c_i = 0$, it measure the $i$\textsuperscript{th} received state in the logical $\Z$ basis. It corrects the measurements to obtain the corrected outcome $o_i$ and sends back the correction $\X^{a_i \oplus o_i} \Z^{r_i}$ to the Receiver.
    			\item If $c_i = 1$, it measure the $i$\textsuperscript{th} received state in the logical $\X$ basis. It corrects the measurements to obtain the corrected outcome $o_i$ and sends back the correction $\X^{r_i} \Z^{a_i \oplus o_i}$ to the Receiver.
    		\end{itemize}
    		\item The Receiver applies the correction to its half of the $i$\textsuperscript{th} EPR pair.
    	\end{enumerate}
    	\item The Receiver applies the circuit presented in Figure~\ref{fig:tomo-circ}, and keeps the unmeasured logical qubit as its output if the measurement outcomes are all equal to $0$ and aborts otherwise.
    	\item If there is no abort, the Sender keeps as its output the classical description of the state of the unmeasured logical state, given in Table~\ref{tab:tomo-out}
    \end{enumerate}
  \end{algorithmic}
\end{protocolenv}

\footnotetext{The probabilities are tweaked so that each the fraction of prepared states in each basis after post-selection corresponds to the requirements of the SDQC protocol from~\cite{FK17unconditionally}. However, the exact probability distribution is irrelevant for our attack.}

\begin{figure}[htp]
\centering
$\Qcircuit @C=1em @R=.7em @!R {
\lstick{\rho_1} & \gate{\Z(\pi/4)} & \control \qw  & \qw           & \gate{\Ha} & \qw & \meter \\
\lstick{\rho_2} & \qw              & \ctrl{-1}     & \control \qw  & \gate{\Ha} & \qw & \meter \\
\lstick{\rho_3} & \qw              & \control \qw  & \ctrl{-1}     & \qw        & \qw & \qw & \rstick{\ket{B}}\\
\lstick{\rho_4} & \qw              & \ctrl{-1}     & \control \qw  & \gate{\Ha} & \qw & \meter \\
\lstick{\rho_5} & \gate{\Z(\pi/2)} & \qw           & \ctrl{-1}     & \gate{\Ha} & \qw & \meter
}$
\caption{The quantum circuit used by the Receiver in the gadget after having corrected all five half-EPR pairs, corresponding to the states $\rho_i$.}
\label{fig:tomo-circ}
\end{figure}
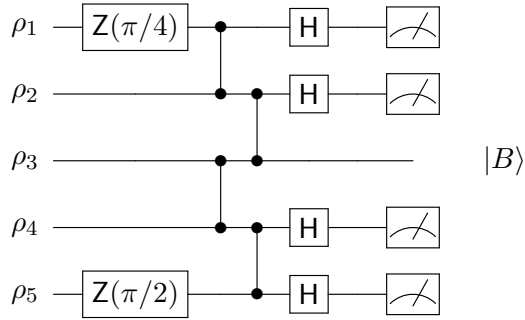

\begin{table}[htp]
\centering
\begin{tabular}{|c|c|c|} \hline
& $(c_1,c_2,c_3,c_4,c_5)$ & $\ket{B}$ \\ \hline\hline
(1) & $(0/1,0/1,0,0/1,0/1)$ & $\X^{a_3}\ket{0}$ \\ \hline
(2) & $(0,1,1,0/1,0/1)$ & $\X^{a_1\oplus a_2}\ket{0}$ \\ \hline
(3) & $(0/1,0,1,1,0)$ & $\X^{a_4\oplus a_5}\ket{0}$ \\ \hline
(4) & $(1,1,1,1,0)$ & $\X^{a_4\oplus a_5}\ket{0}$ \\ \hline
(5) & $(0/1,0,1,0,0/1)$ & $\Z^{a_2\oplus a_3\oplus a_4}\ket{+}$ \\ \hline
(6) & $(0/1,0,1,1,1)$ & $\Z^{a_2\oplus a_3\oplus a_4\oplus a_5}\ket{+_{\frac{\pi}{2}}}$ \\ \hline
(7) & $(1,1,1,0,0/1)$ & $\X^{a_2}\Z^{a_1\oplus a_3\oplus a_4}\ket{+_{\frac{\pi}{4}}}$ \\ \hline
(8) & $(1,1,1,1,1)$ & $\X^{a_2}\Z^{a_1\oplus a_2\oplus a_3\oplus a_4\oplus a_5}\ket{+_{\frac{3\pi}{4}}}$ \\ \hline
\end{tabular}
\caption{The explicit form of $\ket{B}$ when post-selected on all outputs being $0$ in the circuit from Figure~\ref{fig:tomo-circ}. Here, $0/1$ means that either $0$ or $1$ will yield the desired outcome.}
\label{tab:tomo-out}
\end{table}

It can easily be checked that the protocol is correct, in the sense that the Receiver indeed has in its possession at the end of an honest execution the state corresponding to the classical description obtained by the Sender. This relation is given by Table~\ref{tab:tomo-out}. Furthermore, the Sender only measures single physical qubits since the $\X$ and $\Z$ logical measurements are transversal for this code.

One crucial aspect of the security proof of SDQC protocols is that the deviation of the adversary does not depend on the classical description of the states sent by the Verifier. In fact the authors of~\cite{TFMI17fault} acknowledge this in their security proof (cf. Remark 1 from Section 6 of their paper) and claim that the deviation indeed does not depend on the classical descriptions $c_i, a_i$.

Such a dependency can in fact completely break a protocol, as demonstrated in~\cite{KKLM23asymmetric}. Indeed they show that, if the Prover can apply a logical $\X$ operation on the state $\ket{0}, ket{1}$ selectively without impacting the states $\ket{+_\theta}$, then such a Prover is capable of making the Verifier accept an incorrect outcome with probability $1$. We show that a Prover is capable of mounting such an attack on the Logical RSP protocol of~\cite{TFMI17fault} and that therefore their protocol is not verifiable.

The attack corresponds to the Prover applying the circuit from Figure~\ref{fig:tomo-attack} instead of the one presented in Figure~\ref{fig:tomo-circ}.

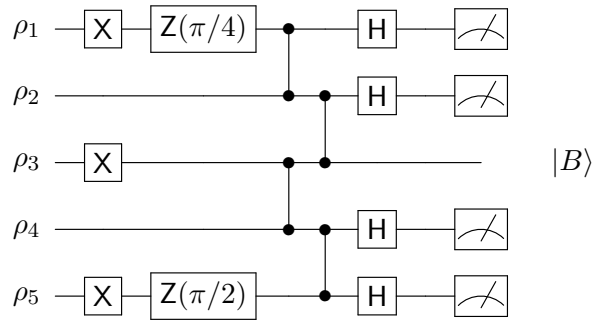
\begin{figure}[htp]
\centering
$\Qcircuit @C=1em @R=.7em @!R {
\lstick{\rho_1} & \gate{\X} & \gate{\Z(\pi/4)} & \control \qw  & \qw           & \gate{\Ha} & \qw & \meter \\
\lstick{\rho_2} & \qw       & \qw              & \ctrl{-1}     & \control \qw  & \gate{\Ha} & \qw & \meter \\
\lstick{\rho_3} & \gate{\X} & \qw              & \control \qw  & \ctrl{-1}     & \qw        & \qw & \qw & \rstick{\ket{B}}\\
\lstick{\rho_4} & \qw       & \qw              & \ctrl{-1}     & \control \qw  & \gate{\Ha} & \qw & \meter \\
\lstick{\rho_5} & \gate{\X} & \gate{\Z(\pi/2)} & \qw           & \ctrl{-1}     & \gate{\Ha} & \qw & \meter
}$
\caption{The quantum circuit used by the malicious Receiver.}
\label{fig:tomo-attack}
\end{figure}

We can now look at the effect that this attack has on the prepared state. Notice that the inputs to the circuit above are either in the logical $\ket{0}$, $\ket{1}$, $\ket{+}$ or $\ket{-}$ state. Therefore a logical $\X$ from the attack only has an effect in the former two cases, i.e.~if $c_i = 0$ for $i \in \{1, 3, 5\}$. Furthermore, in that case, since the input state $i$ is of the form $\Ha^{c_i}\X^{a_i}\ket{0}$, applying $\X$ is equivalent to flipping the value of $a_i$.

We can then look at the output of the circuit given by table~\ref{tab:tomo-out} and see the effect of flipping $a_i$ for $i \in \{1, 3, 5\}$ if $c_i = 0$. We see that $\X$ on qubit $1$ flips the outcome in case (2), $\X$ on qubit $3$ flips the outcomes in case (1) and $\X$ on qubit $5$ flips the outcome in cases (3) and (4). These cases correspond to all of which yield an outcome in the $\Z$ logical basis on no other.

Therefore our attack selectively flips the produced qubit conditioned on the fact that it is in the logical computational basis. This invalidates the claim in the security proof of~\cite{TFMI17fault} that any attack of the adversary is independent of the state sent by the Verifier. We note that at no point does this require us to violate the blindness of the protocol to do so, nor know what the values of $c_i, a_i, r_i, o_i$ are. However, knowing how each position in the circuit influences the state that is produced is sufficient to break the scheme.

The rest of the protocol in~\cite{TFMI17fault} does not specify sufficiently how the states produced by the gadget are used. Indeed, in the original protocol from~\cite{FK17unconditionally}, the Verifier chooses for each qubit which state is sends, while the gadget as specified only produces these states at random. It is specified only that the gadget is repeated until enough states are generated of each kind. We assume that the Verifier then request of the Prover that only a subset of states be used and a random permutation applied to them. However, this is rather wasteful since it generates more states than what the SDQC protocol of~\cite{FK17unconditionally} requires.

Note that the Verifier can also choose the values of $c_i$ in the gadget so that the basis of the state is fixed: it is wishes to send a computational basis state, it samples the $c_i$ uniformly at random from the set that yield the first four cases of Table~\ref{tab:tomo-out}, otherwise, it samples from the complement to send a random $\ket{+_\theta}$ state. In that case, a similar attack as the one presented in~\cite{KKLM23asymmetric} can be mounted, which corrupts the outcome of the Verifier's computation with probability $1$ while remaining completely undetected.
 
\end{document}